\documentclass[letter,11pt]{article}
\usepackage{float,graphicx,verbatim,fullpage,hyperref,amssymb,amsmath,amsthm,amsfonts,enumerate,multicol, xspace,comment,xcolor,thm-restate,url, cite}
\usepackage[T1]{fontenc}
\usepackage[margin=1in]{geometry}
\usepackage{algo}
\usepackage{subfig}
\usepackage[margin=1in]{geometry}
\newcommand*{\email}[1]{\texttt{#1}}
\usepackage{hyperref}

\newtheorem{theorem}{Theorem}[section]
\newtheorem{lemma}{Lemma}[section]
\newtheorem{corollary}{Corollary}[section]

\newtheorem{proposition}{Proposition}[section]
\newtheorem{claim}{Claim}[section]

\newtheorem{observation}{Observation}[section]

%%%%%%%%%%%%%%%%%%Commenting macros%%%%%%%%%%%%%%%%%
\def\final{1}  % set this to 1 to get a comment-free version
\def\iflong{\iffalse}
\ifnum\final=0  %namely if we allow comments in the output
\newcommand{\cnote}[1]{{\color{blue}[{\tiny Calvin: \bf #1}]\marginpar{*}}}
\newcommand{\wnote}[1]{{\color{cyan}[{\tiny Weihang: \bf #1}]\marginpar{*}}}
\newcommand{\knote}[1]{{\color{red}[{\tiny Karthik: \bf #1}]\marginpar{\color{red}*}}}
\newcommand{\todo}[1]{{\color{red}[{\tiny TODO: \bf #1}]\marginpar{\color{red}*}}}
\else % in this case [final=1] we don't want any comments to show
\newcommand{\cnote}[1]{}
\newcommand{\wnote}[1]{}
\newcommand{\knote}[1]{}
\newcommand{\todo}[1]{}
\fi  %ok, we're done with defining comment macros
%%%%%%%%%%%%%%%%%%%%%%%%%%%%%%%%%%%%%%%%%%%%%%%%%%%%

\newcommand{\R}{\mathbb{R}}

\newcommand{\collection}{\mathcal{C}}
\newcommand{\family}{\mathcal{F}}
\newcommand{\mincutfamily}{\mathcal{M}}
\newcommand{\mch}{\mathcal{H}}
\newcommand{\opt}{OPT_k}
\newcommand{\cost}{\text{cost}}

\newcommand{\mcp}{\mathcal{P}}

\newcommand{\deltacard}{d}

\newcommand{\enumhkcut}{\textsc{Enum-Hypergraph-$k$-Cut}\xspace}
\newcommand{\enumgkcut}{\textsc{Enum-Graph-$k$-Cut}\xspace}

\newcommand{\hkcut}{\textsc{Hypergraph-$k$-Cut}\xspace}
\newcommand{\hcut}{\textsc{Hypergraph-MinCut}\xspace}
\newcommand{\gkcut}{\textsc{Graph-$k$-Cut}\xspace}
\newcommand{\gcut}{\textsc{Graph-MinCut}\xspace}
\newcommand{\submodkpartl}{\textsc{Submodular-$k$-Partition}\xspace}
 % long version

\newcommand{\mypara}[1]{\medskip \noindent {\bf #1}}

\def\complement#1{\overline{#1}}
\def\set#1{\{ #1 \}}
\usepackage[utf8]{inputenc}

\title{Deterministic enumeration of all minimum cut-sets and $k$-cut-sets in hypergraphs for fixed $k$\footnote{University of Illinois, Urbana-Champaign. Email: \email{\{calvinb2, karthe, weihang3\}@illinois.edu}. Supported in part by NSF grants CCF-1814613 and  CCF-1907937.}
}
\author{
Calvin Beideman \and 
Karthekeyan Chandrasekaran \and 
Weihang Wang
}
\date{}

\begin{document}

\maketitle

\begin{abstract}
  We consider the problem of deterministically enumerating all minimum $k$-cut-sets in a given hypergraph for any fixed $k$.
  The input here is a hypergraph $G=(V,E)$ with non-negative hyperedge costs. 
  A subset $F\subseteq E$ of hyperedges is a $k$-cut-set if  
  the number of connected components in $G-F$ is at least $k$ and it is a minimum $k$-cut-set if it has the least cost among all $k$-cut-sets. 
  For fixed $k$, we call the problem of finding a minimum $k$-cut-set as \hkcut and the problem of enumerating all minimum $k$-cut-sets as \enumhkcut. 
  The special cases of \hkcut and \enumhkcut restricted to graph inputs 
  are well-known to be solvable in (randomized as well as deterministic) polynomial time \cite{GH94, KS96, KYN07, Th08}.  
  In contrast, it is only recently that polynomial-time algorithms for \hkcut were developed \cite{CXY19, FPZ19, CC20}. 
  The randomized polynomial-time algorithm for \hkcut that was designed in 2018 \cite{CXY19} showed that the number of minimum $k$-cut-sets in a hypergraph is $O(n^{2k-2})$, where $n$ is the number of vertices in the input hypergraph, and that they can all be enumerated in randomized polynomial time, thus resolving \enumhkcut in randomized polynomial time. A  deterministic polynomial-time algorithm for \hkcut was subsequently designed in 2020 \cite{CC20}, but it is not guaranteed to enumerate all minimum $k$-cut-sets. In this work, we give the first deterministic polynomial-time algorithm to solve \enumhkcut (this is non-trivial even for $k=2$). Our algorithm is based on new structural results that allow for efficient recovery of all minimum $k$-cut-sets by solving minimum $(S,T)$-terminal cuts. Our techniques give new structural insights even for enumerating all minimum cut-sets (i.e., minimum $2$-cut-sets) in a given hypergraph. 
\end{abstract}

\newpage
\setcounter{page}{1}

\section{Introduction}
\label{sec:intro}

A hypergraph $G=(V,E)$ consists of a finite set $V$ of vertices and a
finite set $E$ of hyperedges where each hyperedge $e \in E$ is a subset of $V$. 
We consider 
the problem of enumerating all optimum solutions to the \hkcut problem when $k$ is a fixed constant. 
In \hkcut, the input consists of a hypergraph $G=(V,E)$
with non-negative hyperedge-costs $c: E\rightarrow \R_+$ and a positive integer $k$. The objective is to find a minimum-cost subset of hyperedges whose removal results in at least $k$ connected components. We will call a subset of hyperedges whose removal results in at least $k$ connected components as a \emph{$k$-cut-set}
and a minimum-cost $k$-cut-set as a \emph{minimum $k$-cut-set}; for $k=2$, we will refer to a $2$-cut-set as simply a \emph{cut-set} and a minimum-cost cut-set as a \emph{minimum cut-set}. 
The central problem of interest to this work is that of enumerating all minimum $k$-cut-sets in a given hypergraph with non-negative hyperedge-costs---we will denote this problem as \enumhkcut. Throughout, we will consider $k$ to be a fixed constant integer (e.g., $k=2, 3, 4, ...$). 
We will denote \hkcut and \enumhkcut for graph inputs as \gkcut and \enumgkcut respectively. 
We note that the case of $k=2$ corresponds to global minimum cut which will be discussed shortly. 

\mypara{Partitioning formulation.} 
There is a fundamental structural difference between \hkcut and \gkcut (even for $k=2$), which is especially evident when attempting to enumerate all optimum solutions. In order to illustrate this difference, we discuss an equivalent partitioning formulation of \hkcut.  
In this equivalent formulation, the objective is to find a partition of the vertex set $V$ into $k$ non-empty sets $V_1, V_2, \ldots, V_k$ so as to minimize the cost of hyperedges that cross the partition. A hyperedge $e\in E$ is said to \emph{cross a partition} $V_1, V_2, \ldots, V_k$ 
if it has vertices in at least two parts, that is, there exist distinct $i, j\in [k]$ such that $e\cap V_i\neq \emptyset$ and $e\cap V_j\neq \emptyset$. We will denote a partition of $V$ into $k$ non-empty parts 
as a \emph{$k$-partition} and a $2$-partition as a \emph{cut}. The cost of a $k$-partition is the sum of the cost of hyperedges crossing the partition. A $k$-partition with minimum cost is said to be a \emph{minimum $k$-partition}. 
We will denote the cost of a $2$-partition as its cut value and a minimum $2$-partition as a minimum cut. 

By definition, the number of minimum $k$-cut-sets is at most the number of minimum $k$-partitions. 
Moreover, for a connected graph, the number of minimum $k$-partitions is $O(n^k)$, where $n$ is the number of vertices (i.e., the number of minimum $k$-partitions is polynomial since $k$ is a constant) \cite{KS96, GLL20-STOC, GHLL20}. 
However, for a connected\footnote{A hypergraph is defined to be \emph{connected} if the every cut has at least one hyperedge crossing it.}  hypergraph, the number of minimum $k$-partitions could be exponential while the number of minimum $k$-cut-sets is only polynomial. For example, consider the \emph{spanning-hyperedge-example}: this is the $n$-vertex hypergraph $G=(V,E)$ that consists of only one hyperedge $e$ where $e=V$ with the cost of the hyperedge $e$ being one. 
This hypergraph is connected and has only one minimum $k$-cut-set but $\Theta(k^n)$ minimum $k$-partitions (i.e., an exponential number of minimum $k$-partitions even for $k=2$).  
Thus, if we are hoping for polynomial-time algorithms to enumerate all optimum solutions to \hkcut, then we 
cannot aim to enumerate all minimum $k$-partitions (in contrast to connected graphs). 
This is the reason for defining \enumhkcut as the problem of enumerating all minimum $k$-cut-sets as opposed to enumerating all minimum $k$-partitions.  
For connected graphs, the two definitions are indeed equivalent. 

\medskip
\gkcut for $k=2$ is the minimum cut problem in graphs which is well-known to be solvable in polynomial time. Although the minimum cut problem in graphs has been extensively studied, enumerating all minimum cut-sets in a graph in deterministic polynomial time is already non-trivial. Dinitz, Karzanov, and Lomonosov  \cite{DKL76} constructed a compact representation for all minimum cuts in a connected graph (known as the \emph{cactus representation}) which showed that the number of minimum cuts in a connected graph is at most $\binom{n}{2}$ and that they can all be enumerated in deterministic polynomial time. 
For $k\ge 3$, the number of minimum $k$-partitions in a connected graph is $O(n^k)$---this bound is tight and is a consequence of a recent improved analysis of a random contraction algorithm to solve \gkcut \cite{KS96, GLL20-STOC, GHLL20}; the same random contraction algorithm can also be used to enumerate all minimum $k$-partitions in connected graphs in randomized polynomial time. Deterministic polynomial-time algorithms to enumerate all minimum $k$-partitions in connected graphs are also known. 
We discuss other techniques---both randomized and deterministic---for enumerating minimum cuts and minimum $k$-partitions in graphs in Section \ref{sec:related-work}. 

\hkcut is a natural 
generalization of \gkcut. 
\hkcut for $k=2$ is the minimum cut problem in hypergraphs which is well-known to be solvable in polynomial time \cite{La73}. Once again, enumerating all minimum cut-sets in a hypergraph in deterministic polynomial-time is already non-trivial. There exists a compact representation of all minimum cut-sets in a hypergraph \cite{ChekuriX18}---namely the  \emph{hypercactus representation}---which also implies that the number of minimum cut-sets in a hypergraph is at most $\binom{n}{2}$ and that they can all be enumerated in deterministic polynomial time. To the best of the authors' knowledge, this is the only known technique for efficient deterministic enumeration of all minimum cut-sets in a hypergraph. 

\hkcut is a special case of \submodkpartl 
(e.g., see \cite{Zhthesis, ZNI05, OFN12, CC20}).  
Owing to this connection, 
the complexity of \hkcut 
for any fixed $k\ge 3$ 
has been an intriguing open question until recently. 
A randomized polynomial-time algorithm for \hkcut was designed in 2018 by Chandrasekaran, Xu, and Yu \cite{CXY19}. The analysis of this algorithm showed that the number of minimum $k$-cut-sets is $O(n^{2k-2})$, where $n$ is the number of vertices in the input hypergraph (i.e., the number of minimum $k$-cut-sets is polynomial), and that they can all be enumerated in randomized polynomial time (also see \cite{FPZ19}). 
Subsequently, Chandrasekaran and Chekuri designed a deterministic polynomial-time algorithm for \hkcut in 2020 \cite{CC20}. 
However, their deterministic algorithm is guaranteed to identify only one minimum $k$-cut-set and not all. The next natural question is whether all minimum $k$-cut-sets can be enumerated in deterministic polynomial time---namely, can we solve \enumhkcut in deterministic polynomial time?

As mentioned earlier, the only known technique for \enumhkcut for $k=2$ is via the hypercactus representation which does not seem to generalize to $k\ge 3$ (in fact, it is unclear if cactus representation generalizes to $k\ge 3$ even in graphs). 
Moreover, all deterministic techniques for \enumgkcut address the problem of enumerating all minimum $k$-partitions in connected graphs---see Section \ref{sec:related-work}; hence, all these techniques fail for \enumhkcut (as seen from the spanning-hyperedge-example). For hypergraphs, we necessarily have to work with minimum $k$-cut-sets as opposed to minimum $k$-partitions. 
Working with minimum $k$-cut-sets as opposed to minimum $k$-partitions in the deterministic setting is a 
technical challenge that has not been undertaken in any of the previous works (even for graphs). 
We overcome this technical challenge in this work.
We adapt Chandrasekaran and Chekuri's deterministic approach for \hkcut and augment it with structural results for minimum $k$-cut-sets to prove our main result stated below. 

\begin{theorem}\label{thm:enumhkcut-algo}
There is a deterministic polynomial-time algorithm for \enumhkcut for every fixed $k$. 
\end{theorem}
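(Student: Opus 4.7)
The plan is to reduce deterministic enumeration to polynomially many invocations of minimum $(S,T)$-terminal cut computations in hypergraphs, which are known to be solvable in deterministic polynomial time. Concretely, I will establish a structural result of the following flavor: for every minimum $k$-cut-set $F$ in the input hypergraph $G$, there exist disjoint sets $S, T \subseteq V$ of size at most $f(k)$ (for some constant depending only on $k$) such that $F$ equals, or can be read off from, the minimum $(S,T)$-terminal cut-set in $G$ (or in an appropriate polynomial-sized collection of auxiliary hypergraphs derived from $G$). Since the number of such $(S,T)$ pairs is $O(n^{2f(k)})$, this immediately yields the algorithm: enumerate all such candidate signatures, solve a terminal min-cut for each, deduplicate by cut-set, and return those of minimum cost.

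To prove the structural result, I would follow the divide-and-conquer philosophy of Chandrasekaran--Chekuri \cite{CC20} but adapted to track the cut-set rather than the partition. Fix a minimum $k$-cut-set $F$, and let $V_1, \ldots, V_k$ be any $k$-partition it induces. I would split into two structural cases depending on whether $F$ looks "balanced" or "unbalanced" across the components. In the unbalanced case, some component $V_i$ has weight much smaller than the total cut cost, so that the bipartition $(V_i, V \setminus V_i)$ is a near-minimum $2$-cut; here I would leverage the hypercactus representation of minimum and near-minimum cut-sets to extract a constant-size signature that pins down $V_i$'s side, and then recurse on $G/V_i$ (contracting $V_i$) to recover the remaining structure of $F$. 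In the balanced case, I would argue that dropping constantly many hyperedges from $F$ and picking one "representative" vertex per component yields terminal pairs $(S,T)$ whose minimum terminal cut equals $F$; the balance condition ensures that the terminal min-cut cannot be strictly cheaper than $F$ on cost-counting grounds.

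The key subtlety, and the main obstacle, is that a single minimum $k$-cut-set $F$ may be realized by exponentially many distinct $k$-partitions (as in the spanning-hyperedge-example), so one cannot simply pick "representative vertices per part" as is standard in the graph setting. I therefore must phrase all structural statements in terms of the hyperedge set $F$ and the connected components of $G - F$, and show that the signature $(S,T)$ I choose depends only on $F$, not on an arbitrary choice of $k$-partition. Verifying that the recursive step preserves minimum $k$-cut-sets after contractions, and that the base $k=2$ case is handled by the hypercactus, should then complete the argument. Combining the structural theorem with an explicit enumeration algorithm over all $O(n^{f(k)})$ signatures, and invoking deterministic polynomial-time hypergraph terminal min-cut as a subroutine, yields Theorem~\ref{thm:enumhkcut-algo}.
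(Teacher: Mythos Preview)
Your high-level plan---reduce enumeration to $O(n^{O(k)})$ minimum $(S,T)$-terminal cut computations via a small-signature structural theorem---is exactly the paper's approach. But the concrete case analysis you propose has a genuine gap, and the missing piece is precisely the technical heart of the paper.

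Your ``unbalanced'' case invokes the hypercactus to pin down a side $V_i$ whose bipartition is a ``near-minimum $2$-cut.'' This does not work in hypergraphs: as the paper itself notes, the number of $(1+\epsilon)$-approximate minimum cut-sets in a connected hypergraph can be exponential for every $\epsilon>0$, so there is no polynomial-size cactus-like representation of near-minimum cut-sets to lean on. Moreover, $d(V_i)<\opt$ does not place $d(V_i)$ anywhere near the global min-cut value $\lambda$; the gap between $\lambda$ and $\opt$ can be arbitrary. The paper instead handles this case (all $d(V_i)<\opt$) by a direct structural theorem (Theorem~\ref{theorem: structure thm 1}): any such $V_i$ is the \emph{unique} minimum $(S,T)$-terminal cut for some $|S|,|T|\le 2k-2$, proved via a minimal-transversal argument and a partition-uncrossing theorem---no cactus is used, and no recursion or contraction is needed.

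Your ``balanced'' case is the more serious gap. The situation you must confront is $d(V_i)=\opt$ for some $i$, i.e.\ $\delta(V_i)=F$; this is exactly where a single cut-set $F$ may have exponentially many realizing $k$-partitions (spanning-hyperedge-example). The sketch ``drop constantly many hyperedges and pick one representative vertex per component, then argue by cost-counting that the terminal min-cut cannot be cheaper'' is not an argument: picking representatives from an arbitrary realizing partition can pin down only that partition, not the cut-set, and there is no cost slack to exploit here since $d(V_i)=\opt$ exactly. The paper's resolution (Theorem~\ref{theorem: structure thm 2}) is substantially harder: it shows that for any $T$ hitting every $V_j$, $j\ge 2$, there is $S\subseteq V_1$ with $|S|\le 2k-1$ such that the \emph{source-minimal} minimum $(S,T)$-terminal cut has cut-set equal to $\delta(V_1)$ (not that it equals $(V_1,\overline{V_1})$, which would be false). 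The proof requires choosing $S$ to maximize the source-minimal side, adjoining a carefully selected extra vertex from a hyperedge in $\delta(V_1)\setminus\delta(H_S)$, and then applying a \emph{strengthened} partition-uncrossing theorem (the strict-inequality conclusion in Theorem~\ref{theorem:hypergraph-uncrossing}) to derive a contradiction. None of this is visible in your sketch, and it is the step where working with cut-sets rather than partitions actually bites.
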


Although we chose to highlight the above algorithmic result in this introduction, we emphasize that the structural theorems that form the backbone of the algorithmic result  are our main technical contributions (see Theorems \ref{theorem: structure thm 1} and \ref{theorem: structure thm 2}). 
We discuss these structural theorems in the technical overview section. 
By tightening the proof technique of one of our structural theorems for $k=2$, we obtain an arguably elegant structural explanation for the number of minimum cut-sets in a hypergraph being at most $\binom{n}{2}$---see Theorem \ref{theorem: structure thm 2-for-min-cut}. Theorem \ref{theorem: structure thm 2-for-min-cut} leads to an alternative deterministic polynomial-time algorithm to enumerate all minimum cut-sets in a hypergraph (that is relatively simpler than computing a  hypercactus representation). 
We believe that our structural theorems are likely to be of independent interest.

\subsection{Technical overview and main structural results}
\label{sec:techniques}
We focus on the unit-cost variant of \enumhkcut in the rest of this work for the sake of notational simplicity. Throughout, we will allow
multigraphs and hence, this is without loss of generality. Our
algorithms extend in a straightforward manner to arbitrary hyperedge
costs. They rely only on minimum $(s,t)$-terminal cut computations and hence, they are strongly polynomial-time algorithms.

A key algorithmic tool will be the use of terminal cuts.  We need some
notation. Let $G=(V,E)$ be a hypergraph. Throughout this work, 
$n$ will denote the number of vertices in $G$ and $p:=\sum_{e\in E}|e|$ will denote the representation size of $G$. 
We will denote a partition of the vertex set into $h$ non-empty parts by an ordered tuple $(V_1, \ldots, V_h)$. For a non-empty proper subset $U$ of vertices,
we will use $\complement{U}$ to denote $V\setminus U$, $\delta(U)$
to denote the set of hyperedges crossing the $2$-partition $(U, \complement{U})$, 
and $\deltacard(U):=|\delta(U)|$.   
We recall that $\delta(U)=\delta(\complement{U})$, so we will use $\deltacard(U)$ to denote the cost of the cut $(U, \complement{U})$. 
More generally, given a partition
$\mcp=(V_1,V_2,\ldots,V_h)$, we denote 
the 
set 
of hyperedges crossing the partition by 
$\delta(V_1, V_2, \ldots, V_h)$ 
(also by $\delta(\mcp)$ for brevity)
and the number of hyperedges crossing the partition by 
$\cost(V_1, V_2, \ldots, V_h) := |\delta(V_1, V_2, \ldots, V_h)|$
(also by $\cost(\mcp)$ for brevity). 
Let $S$, $T$ be
disjoint non-empty subsets of vertices. A $2$-partition $(U, \complement{U})$ is
an $(S,T)$-terminal cut if $S\subseteq U\subseteq V\setminus T$. Here,
the set $U$ is known as the source set and the set $\complement{U}$ is
known as the sink set.  A minimum-cost $(S,T)$-terminal cut is known
as a \emph{minimum $(S,T)$-terminal cut}.  Since there could be multiple
minimum $(S,T)$-terminal cuts, we will be interested in \emph{source
  minimal} minimum $(S,T)$-terminal cuts 
  and \emph{source maximal} minimum $(S,T)$-terminal cuts. 
For every pair of disjoint non-empty subsets $S$ and $T$ of vertices, there exists a unique source minimal minimum $(S,T)$-terminal cut and it can be found
in deterministic polynomial time via standard maxflow algorithms; a similar result holds for source maximal minimum $(S,T)$-terminal cuts. 

\iffalse
In fact, these
definitions extend to general submodular functions. Given a submodular set function  $f:2^{V}\rightarrow \R$ and disjoint sets 
$S, T\subseteq V$, we can define a minimum $(S, T)$-terminal cut for $f$
as $\min_{U: S \subseteq U, T \subseteq \complement{U}} f(U)$. Uniqueness
of 
source-minimal $(S,T)$-terminal cuts follow from submodularity and one can also find these in polynomial-time via submodular function minimization.
\fi

Our algorithm is inspired by the divide and conquer approach introduced by Goldschmidt and Hochbaum for \gkcut \cite{GH94}. This approach was generalized by Kamidoi, Yoshida, and Nagamochi to solve \enumgkcut \cite{KYN07} and by Chandrasekaran and Chekuri to solve \hkcut \cite{CC20}, both in deterministic polynomial time. 
The techniques of \cite{GH94} and \cite{KYN07} are not applicable to \enumhkcut since they are tailored to graphs and do not extend to hypergraphs. 
We describe the details of the divide and conquer approach for \hkcut due to Chandrasekaran and Chekuri \cite{CC20}. The goal here is to identify one part of some fixed minimum $k$-partiton $(V_1, V_2, \ldots, V_k)$, say $V_1$ without loss of generality, and then recursively find a minimum $(k-1)$-partition in the subhypergraph $G[\complement{V_1}]$, where $G[\complement{V_1}]$ is the hypergraph obtained from $G$ by discarding the vertices in $V_1$ and by discarding all hyperedges 
that intersect $V_1$. Now, how does one find such a part $V_1$? Chandrasekaran and Chekuri proved a key structural theorem for this: Suppose $(V_1, \ldots, V_k)$ is a \emph{$V_1$-maximal} minimum $k$-partition---i.e., there is no other minimum $k$-partition $(V_1', \ldots, V_k')$ such that $V_1$ is a proper subset of $V_1'$. Then, they showed that for every subset $T\subseteq \complement{V_1}$ such that $T\cap V_j\neq \emptyset$ for all $j\in \{2, \ldots, k\}$, there exists a subset $S\subseteq V_1$ of size at most $2k-2$ such that $(V_1, \complement{V_1})$ is the source maximal minimum $(S,T)$-terminal cut. A consequence of this structural theorem is that if we compute the collection $\collection$ consisting 
of the source side of the source maximal minimum $(S,T)$-terminal cut
for all possible pairs $(S,T)$ of disjoint subsets of vertices $S$ and $T$ with $|S|\le 2k-2$ and $|T|\le k-1$, then the set $V_1$ will be in this collection $\collection$ (by applying the structural theorem to a set $T$ of size $k-1$ with $|T\cap V_j|=1$ for all $j\in \{2, \ldots, k\}$). Moreover, the size of the collection $\collection$ is only $O(n^{3k-3})$. Hence, recursing on $G[\complement{U}]$ for each set $U$ in the collection $\collection$ will identify a minimum $k$-partition within a total run-time of $n^{O(k^2)}$ source maximal minimum $(S,T)$-terminal cut computations. 

The limitation of the structural theorem of Chandrasekaran and Chekuri \cite{CC20} is that it aims to recover a minimum $k$-partition and in particular, a $V_1$-maximal minimum $k$-partition. For the purposes of enumerating all minimum $k$-cut-sets, this is insufficient as we have seen from the spanning-hyperedge-example. In particular, their structural theorem cannot be used to even enumerate all minimum cut-sets in a hypergraph. 
We prove two structural theorems that will help in enumerating minimum $k$-cut-sets. We describe these structural theorems now. 

Our goal is to deterministically enumerate a polynomial-sized family $\family$ of $k$-cut-sets such that $\family$ contains all minimum $k$-cut-sets. Let $F$ be an arbitrary minimum $k$-cut-set. 
Since $F$ is a minimum $k$-cut-set, there exists a minimum $k$-partition $(V_1, \ldots, V_k)$ such that $F=\delta(V_1, \ldots, V_k)$. We note that $d(V_1)\le |F|$ by definition of the hypergraph cut function $d:2^V\rightarrow \R$. 
We distinguish two cases:

\medskip
\noindent \textbf{Case 1.} Suppose $d(V_1)<|F|$. 
In order to identify minimum $k$-cut-sets $F$ that have this property, 
we show the following structural theorem. 
    \begin{restatable}{theorem}{thmStructureOne}
\label{theorem: structure thm 1}
Let $G=(V,E)$ be a hypergraph and let $\opt$ be the value of a minimum $k$-cut-set in $G$ for some integer $k\ge 2$. Suppose $(U,\complement{U})$ is a $2$-partition of $V$ with $d(U)<\opt$. Then, for every pair of vertices $s\in U$ and $t\in\overline{U}$, there exist subsets  $S\subseteq U\setminus \{s\}$ and $T\subseteq \complement{U}\setminus \{t\}$ with $|S|\le 2k-3$ and $|T|\leq 2k-3$ such that $(U,\complement{U})$ is the unique minimum $(S\cup\{s\},T\cup\{t\})$-terminal cut in $G$.
\end{restatable}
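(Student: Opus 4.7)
The plan is to prove the theorem via a greedy iterative construction of $S$ and $T$, combined with a contradiction argument that invokes the minimality of $\opt$ by building a low-cost $k$-partition from accumulated bad cuts. First, I would initialize $S_0 = T_0 = \emptyset$ and iterate as follows: at step $i$, I check whether $(U, \complement{U})$ is already the unique minimum $(S_i \cup \{s\}, T_i \cup \{t\})$-terminal cut in $G$. If so, I stop and output $(S_i, T_i)$. Otherwise, there must exist a \emph{bad cut} $(A_i, \complement{A_i})$ with $A_i \neq U$, $S_i \cup \{s\} \subseteq A_i \subseteq V \setminus (T_i \cup \{t\})$, and $d(A_i) \leq d(U)$. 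Since $A_i \neq U$, the symmetric difference $A_i \triangle U$ is non-empty; I pick any $v_i \in A_i \triangle U$ and add it to $T_{i+1}$ if $v_i \in A_i \cap \complement{U}$, or to $S_{i+1}$ if $v_i \in U \cap \complement{A_i}$. This choice disqualifies $(A_i, \complement{A_i})$ at the next step, while respecting the containments $S \subseteq U \setminus \{s\}$ and $T \subseteq \complement{U} \setminus \{t\}$ required by the theorem.

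The central claim is that this procedure terminates with $|S| \leq 2k - 3$ and $|T| \leq 2k - 3$. I would prove this by contradiction: suppose by symmetry that at some point $|T|$ reaches $2k - 2$. Let $A_{j_1}, \ldots, A_{j_{2k-2}}$ be the bad cuts that forced these $2k - 2$ additions to $T$, with witness vertices $t_{j_\ell} \in A_{j_\ell} \cap \complement{U}$. Note that $t_{j_\ell} \notin A_{j_m}$ for every $m > \ell$, because $t_{j_\ell} \in T_{j_m}$ and $A_{j_m}$ satisfies $T_{j_m} \cup \{t\} \subseteq \complement{A_{j_m}}$. Each of these bad cuts has cut value at most $d(U) < \opt$, contains $s$, and excludes $t$; the terminals $t, t_{j_1}, \ldots, t_{j_{2k-2}}$ are $2k - 1$ distinct vertices in $\complement{U}$ that are progressively separated from one another by the bad cuts.

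The principal task, and also the main obstacle, is to combine $(U, \complement{U})$ with these bad cuts into a $k$-partition of $V$ of cost strictly less than $\opt$, yielding the desired contradiction. I would leverage two ingredients. The first is submodularity and symmetry of the hypergraph cut function $d$: via submodularity ($d(A)+d(B) \geq d(A\cap B)+d(A\cup B)$) and posimodularity ($d(A)+d(B) \geq d(A\setminus B)+d(B\setminus A)$), each bad cut $A_{j_\ell}$ can be uncrossed against $U$ so that the resulting sets still have cut value at most $d(U)$ and lie entirely inside either $U$ or $\complement{U}$. The second is a preliminary structural consequence of $d(U) < \opt$: each of the induced subhypergraphs $G[U]$ and $G[\complement{U}]$ has at most $k - 2$ connected components, for otherwise splitting $U$ (or $\complement{U}$) along its components already yields a $k$-partition of $V$ of cost at most $d(U) < \opt$, contradicting the minimality of $\opt$. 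Combining these two ingredients, the uncrossed bad cuts refine $\complement{U}$ into a partition separating all $2k - 1$ terminals inside $\complement{U}$, producing a partition of $V$ with at least $k$ non-empty parts; a telescoping cost bound that uses submodularity to cancel hyperedges that would otherwise be double-counted across successive bad cuts then yields total cost strictly less than $\opt$, which after merging parts back down to $k$ gives the promised $k$-partition.

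I expect the main technical obstacle to lie in making the telescoping quantitative: a naive union bound over the $2k - 1$ bad cuts only gives a cost of at most $(2k - 1)\, d(U)$, which is generally much larger than $\opt$, so submodularity must be applied aggressively, and the \emph{freshness} of each newly added terminal $t_{j_\ell}$—namely that $A_{j_\ell}$ separates $t_{j_\ell}$ from every previously-added terminal and from $t$—must be translated into a concrete per-step cost saving. Getting this accounting to beat $\opt$ at exactly the threshold $|T| = 2k - 2$ is what drives the specific constant $2k - 3$ in the theorem statement.
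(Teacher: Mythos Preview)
Your high-level strategy---accumulate bad cuts and derive from them a $k$-partition of cost below $\opt$---is the right one, but the proposal has a genuine gap exactly where you flag it: the ``telescoping cost bound'' is the entire crux, and you do not supply it. Uncrossing each bad cut $A_{j_\ell}$ against $U$ via submodularity or posimodularity only yields sets with cut value at most $d(U)$; turning $2k-2$ such sets into a \emph{single} $k$-partition of cost $\le d(U)$ is not a routine ``aggressive application of submodularity'' but a separate structural result (Theorem~\ref{theorem:hypergraph-uncrossing}, from \cite{CC20}). That theorem requires each $(\overline{A_i},A_i)$ to be a minimum $((S\cup R)\setminus\{u_i\},\overline{U})$-terminal cut with $u_i\in A_i\setminus\bigcup_{j\neq i}A_j$. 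Your bad cuts are only minimum $(S_i\cup\{s\},T_i\cup\{t\})$-terminal cuts: they need not contain $\overline{U}$, and your freshness property gives $t_{j_\ell}\notin A_{j_m}$ only for $m>\ell$, not for all $m\neq\ell$. Without those hypotheses there is no evident route to the cost bound, and your proposal offers none.

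The paper avoids both difficulties by \emph{decoupling} the two sides. It first fixes the sink side to be all of $\overline{U}$ and proves (Theorem~\ref{thm: thm for structure thm 1}) that some $S\subseteq U\setminus\{s\}$ of size at most $2k-3$ makes $(U,\overline{U})$ the unique minimum $(S\cup\{s\},\overline{U})$-terminal cut. The set $S$ is chosen as an inclusionwise minimal transversal of the family $\{Q:\overline{U}\subsetneq Q,\ s\notin Q,\ d(Q)\le d(U)\}$; if $|S|\ge 2k-2$, minimality of the transversal forces exactly the exclusivity condition $u_i\in A_i\setminus\bigcup_{j\neq i}A_j$ required by Theorem~\ref{theorem:hypergraph-uncrossing}, which then yields a $k$-partition of cost at most $d(U)<\opt$. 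By symmetry the same argument bounds $|T|$. A short final submodularity step shows that the two one-sided witnesses $(S,\overline{U})$ and $(U,T)$ combine: any minimum $(S\cup\{s\},T\cup\{t\})$-terminal cut $(Y,\overline{Y})$ with $Y\neq U$ would, after intersecting and unioning with $U$, contradict uniqueness on one of the two sides. The decoupling is what manufactures the hypotheses of the uncrossing theorem; your simultaneous greedy construction discards precisely this structure.
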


The advantage of this structural theorem is that it allows for a recursive approach to enumerate a polynomial-sized family of minimum $k$-cut-sets containing $F$ under the assumption that $d(V_1)<|F|=\opt$ (similar to the approach of Chandrasekaran and Chekuri).

The drawback of this structural theorem is that it only addresses the case of $d(V_1)<|F|$. It is possible that the minimum $k$-cut-set $F$ satisfies $d(V_1)=|F|$. For example, consider the problem of enumerating all minimum cut-sets in a hypergraph (i.e., \enumhkcut for $k=2$)---Theorem \ref{theorem: structure thm 1} does not help in this case since there will be no cut $(U, \complement{U})$ with $d(U)<\text{OPT}_2$. This motivates the second case. 

\medskip
\noindent \textbf{Case 2.} Suppose $d(V_1) = |F|$. 
In this case, we need to enumerate a polynomial-sized family of $k$-cut-sets containing $F$, but we cannot hope to enumerate all minimum $k$-partitions $(V_1', \ldots, V_k')$ for which $F=\delta(V_1', \ldots, V_k')$ 
(e.g., again consider the spanning-hyperedge-example for $k=2$ for which the unique minimum cut-set $F$ has $|F|=d(V_1)$ for exponentially many minimum cuts $(V_1, V_2)$ and hence, we cannot hope to enumerate all minimum cuts). 
We observe that if $d(V_1)=|F|$, then the set $F$ of hyperedges should be equal to the set of hyperedges crossing $(V_1, \complement{V_1})$, i.e., $\delta(V_1) = F=\delta(V_1, \ldots, V_k)$. 
We show the following structural theorem to exploit this observation. 

\begin{restatable}{theorem}{thmStructureTwo}
\label{theorem: structure thm 2}
Let $G=(V,E)$ be a hypergraph, $k \geq 2$ be an integer, and $\mathcal{P}=(V_1,\ldots,V_k)$ be 
a minimum $k$-partition such that $\delta(V_1)=\delta(\mathcal{P})$. Then, for all subsets $T\subseteq \complement{V_1}$ such that $T\cap V_j\neq\emptyset$ for all $j\in\{2,3,\ldots,k\}$, there exists a subset $S\subseteq V_1$ with $|S|\leq 2k-1$ such that the source minimal minimum $(S,T)$-terminal cut $(A,\complement{A})$ satisfies $\delta(A)=\delta(V_1)$ and $A\subseteq V_1$.
\end{restatable}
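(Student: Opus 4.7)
The plan is to establish the theorem in two parts. First, I would show that for any nonempty $S \subseteq V_1$, the source-minimal minimum $(S,T)$-terminal cut $(A, \complement{A})$ automatically satisfies $A \subseteq V_1$, using only the full strength of the hypothesis $\delta(V_1) = \delta(\mathcal{P})$. The key tool is submodularity of the hypergraph cut function $d$. Since $\delta(V_1) = \delta(\mathcal{P})$, any hyperedge lying entirely within $\complement{V_1}$ must be contained in a single part $V_j$; otherwise it would cross $\mathcal{P}$ while not crossing $V_1$, contradicting the hypothesis. Using this, the $k$-partition $(A \cup V_1, V_2 \setminus A, \ldots, V_k \setminus A)$, which is valid since $t_j \in V_j \cap T \subseteq V_j \setminus A$ for each $j \in \{2, \ldots, k\}$, satisfies $\cost(A \cup V_1, V_2 \setminus A, \ldots, V_k \setminus A) = d(A \cup V_1)$. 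Minimality of $\mathcal{P}$ then gives $d(A \cup V_1) \geq d(V_1)$. Combined with $d(A \cap V_1) \geq d(A)$ (since $(A \cap V_1, \complement{A \cap V_1})$ is a valid $(S,T)$-terminal cut whose cost is at least the minimum such cost), the submodularity inequality $d(A) + d(V_1) \geq d(A \cap V_1) + d(A \cup V_1)$ becomes tight, yielding $d(A \cap V_1) = d(A)$. Source minimality of $A$ then forces $A \subseteq A \cap V_1 \subseteq V_1$.

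For the second part I would build $S$ iteratively to additionally force $\delta(A) = \delta(V_1)$. Initialize $S_1 = \{s_0\}$ for any $s_0 \in V_1$. At each step $i$, compute the source-minimal min $(S_i, T)$-terminal cut $(A_i, \complement{A_i})$, which satisfies $A_i \subseteq V_1$ by the first part. If $\delta(A_i) = \delta(V_1)$ we halt. Otherwise, the symmetric difference $\delta(V_1) \triangle \delta(A_i)$ is nonempty, and every hyperedge in it yields a witness vertex in $V_1 \setminus A_i$: a hyperedge $e \in \delta(V_1) \setminus \delta(A_i)$ must lie entirely in $\complement{A_i}$ and thus has an endpoint in $V_1 \setminus A_i$, while a hyperedge $e \in \delta(A_i) \setminus \delta(V_1)$ must lie entirely in $V_1$ and hence has an endpoint in $V_1 \setminus A_i$. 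Setting $S_{i+1} = S_i \cup \{v_i\}$ for such a witness $v_i$ forces $v_i \in A_{i+1}$, so the source-minimal cut shifts in a controlled direction and eventually lands inside the family of sets with $\delta(\cdot) = \delta(V_1)$.

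The main obstacle is proving that the iteration terminates within $2k-1$ steps. I expect this to follow from an amortized charging argument that attributes each newly added witness to a specific part $V_j$ with $j \in \{2, \ldots, k\}$, with at most two witnesses charged per part (one for each of the two mismatch types identified above); together with the initial seed $s_0$ this gives $|S| \leq 2(k-1) + 1 = 2k-1$. The technically delicate step is ruling out a third charge to any single part: this likely requires a fresh uncrossing argument on the $k$-partition $(A_i, V_1 \setminus A_i \cup V_j, V_2, \ldots, \widehat{V_j}, \ldots, V_k)$, showing that once two witnesses have pinned down the relationship between $V_1$ and $V_j$, any further mismatch involving $V_j$ would either contradict the source minimality of $A_i$ or produce a $k$-partition strictly cheaper than $\mathcal{P}$.
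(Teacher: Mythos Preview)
Your first part is correct and is exactly the paper's containment lemma (Lemma~\ref{lem:Hs_subset_V1}): the $k$-partition $(A\cup V_1, V_2\setminus A,\ldots,V_k\setminus A)$ together with $\delta(V_1)=\delta(\mathcal{P})$ gives $d(A\cup V_1)\ge d(V_1)$, and submodularity plus source minimality force $A\subseteq V_1$.

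The second part, however, has a genuine gap, and it is precisely the heart of the theorem. Two concrete problems with the charging sketch:
\begin{itemize}
\item A witness coming from the second mismatch type, a hyperedge $e\in \delta(A_i)\setminus \delta(V_1)$, satisfies $e\subseteq V_1$; it meets no part $V_j$ with $j\ge 2$, so it cannot be charged to any such part. Your accounting $2(k-1)+1$ has no slot for these witnesses.
\item Nothing you wrote guarantees $A_i\subseteq A_{i+1}$. Adding a single vertex to $S$ can move the source-minimal minimizer arbitrarily (only its cost is monotone, not the set itself), so ``shifts in a controlled direction'' is unsupported, and the same part could generate witnesses repeatedly.
\end{itemize}
The speculative uncrossing on $(A_i,\, (V_1\setminus A_i)\cup V_j,\, \ldots)$ does not obviously yield the needed contradiction; in particular it does not exploit the key hypothesis $\delta(V_1)=\delta(\mathcal{P})$ beyond what the first part already used.

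The paper's argument is structurally quite different and worth knowing. It does \emph{not} iterate. It fixes $S\subseteq V_1$ of size exactly $2k-1$ with $H_S$ \emph{maximal} among all such sets, picks a hyperedge $e\in \delta(V_1)\setminus \delta(H_S)$ and a vertex $u_{2k}\in e\cap V_1$, and sets $C=S\cup\{u_{2k}\}$. A sequence of submodularity claims shows $u_i\notin H_{C-u_i}$, $d(H_{C-u_i})=d(V_1)$, and $H_{C-u_\ell}\subseteq H_{C-u_i}\cup H_{C-u_j}$ for all $i\neq j$. This places the $2k-1$ cuts $(\overline{H_{C-u_i}},H_{C-u_i})$ into the hypotheses of the partition-uncrossing Theorem~\ref{theorem:hypergraph-uncrossing}, and the carefully chosen hyperedge $e$ is shown to meet both $W$ and $Z$ while lying in $W\cup Z$, triggering the \emph{strict} inequality in the second conclusion. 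That strict inequality produces a $k$-partition cheaper than $\mathcal{P}$, the desired contradiction. The bound $2k-1$ thus comes from the $p\ge 2k-2$ requirement in Theorem~\ref{theorem:hypergraph-uncrossing} (applied with $p=2k-1$ and $R=\{u_{2k}\}$), not from any per-part charging.
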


We recall that for fixed disjoint subsets $S, T\subseteq V$, the source minimal minimum $(S,T)$-terminal cut is unique. 
We emphasize the main feature of Theorem \ref{theorem: structure thm 2}: it aims to recover 
only the hyperedges crossing the cut $(V_1, \complement{V_1})$ 
but not the cut $(V_1, \complement{V_1})$ itself. It shows the existence of a small-sized witness which allows us to recover $\delta(V_1)$---namely a pair $(S,T)$ with $|S|, |T|=O(k)$ for which $\delta(V_1)$ is the cut-set of the source minimal minimum $(S,T)$-terminal cut. 
In this sense, Theorem \ref{theorem: structure thm 2} addresses the drawback of Theorem \ref{theorem: structure thm 1}. 

Theorems \ref{theorem: structure thm 1} and \ref{theorem: structure thm 2} can be used to design a recursive algorithm that enumerates all minimum $k$-cut-sets in deterministic polynomial time (along the lines of the algorithm of Chandrasekaran and Chekuri described above). Here, we describe a more straightforward non-recursive deterministic polynomial-time algorithm. 
For each pair of subsets of vertices $S, T$ of size at most $2k-1$, we compute the source minimal minimum $(S, T)$-terminal cut $V_{S, T}$; if $G-\delta(V_{S, T})$ has at least $k$ connected components, then we add $\delta(V_{S, T})$ to the candidate family $\family$; otherwise, we add $V_{S, T}$ to the collection $\collection$. Next, we consider all possible $k$-partitions $(U_1, \ldots, U_k)$ of the vertex set where all sets $U_1, \ldots, U_k$ are in the collection $\collection$ and add the set $\delta(U_1, \ldots, U_k)$ of hyperedges to the family $\family$. 
We now sketch the argument to show that the family $\family$ contains the (arbitrary) minimum $k$-cut-set $F$. Recall that there exists a minimum $k$-partition $(V_1, \ldots, V_k)$ such that $F$ is the set of hyperedges crossing this $k$-partition, i.e., $F=\delta(V_1, \ldots, V_k)$. We have two possibilities: (1) if $d(V_i)<|F|$ for every $i\in [k]$, then by Theorem \ref{theorem: structure thm 1}, every set $V_i$ is in the collection $\collection$ (by applying Theorem \ref{theorem: structure thm 1} to $(U=V_i, \complement{U}=\complement{V_i})$ and arbitrary vertices $s\in V_i, t\in \complement{V_i}$), and hence $F\in \family$; (2) if $d(V_i)=|F|$  for some $i\in [k]$, then by Theorem \ref{theorem: structure thm 2}, one of the sets $V_{S, T}$ has $\delta(V_{S, T})=\delta(V_i)=F$ and hence, once again $F\in \family$. We can prune the family $\family$ to return the subfamily of minimum $k$-cut-sets in it. The size of the collection $\collection$ is $O(n^{4k-2})$ and the size of the family $\family$ is $O(n^{4k^2})$. The run-time is $O(n^{4k-2})T(n, p)+O(n^{4k^2})$, where $T(n,p)$ is the time complexity for computing the source minimal minimum $(s,t)$-terminal cut in a $n$-vertex hypergraph of size $p$. 

\mypara{Additional consequence of Theorem \ref{theorem: structure thm 2}.} Theorem \ref{theorem: structure thm 2} is the technical novelty of this work. We emphasize another structural consequence of Theorem \ref{theorem: structure thm 2} by using it to bound the number of minimum cut-sets in a hypergraph. 
Let $t$ be an arbitrary vertex in the hypergraph $G=(V,E)$. 
Consider the sets 
\begin{align*}
\mathcal{H}&:=\{U\subseteq V\setminus \{t\}: (U, \complement{U}) \text{ is a minimum cut in }G\} \text{ and}\\
\mincutfamily &:= \{\delta(U): U\in \mathcal{H}\}.
\end{align*}
We note that $\mincutfamily$ is the family of all minimum cut-sets in the hypergraph. By applying Theorem \ref{theorem: structure thm 2} for $k=2$ and $T=\{t\}$, we obtain that for every set $U\in \mch$, there exists a subset $S\subseteq U$ with $|S|\le 3$ such that the source minimal minimum $(S, \{t\})$-terminal cut $(A, \complement{A})$ satisfies $\delta(A) = \delta(U)$. Consequently, the size of the set $\mincutfamily$ is at most the number of possible ways to choose a non-empty subset $S\subseteq V\setminus \{t\}$ of size at most $3$ which is $\binom{n-1}{1}+\binom{n-1}{2}+\binom{n-1}{3}=O(n^3)$, where $n:=|V|$. Thus, we have concluded that the number of minimum cut-sets in a $n$-vertex hypergraph is $O(n^3)$. 

We recall that the number of minimum cut-sets in a $n$-vertex hypergraph is known to be at most $\binom{n}{2}$ \cite{ChekuriX18, GKP17}. So, the $O(n^3)$ upper bound on the number of minimum cut-sets that we obtained above based on Theorem \ref{theorem: structure thm 2} appears to be weak. We show the following strengthening of Theorem \ref{theorem: structure thm 2} for $k=2$ to get the tighter bound.

\begin{restatable}{theorem}{thmStructureTwoForMinCut}
\label{theorem: structure thm 2-for-min-cut}
Let $G=(V,E)$ be a hypergraph and $\mathcal{P}=(V_1,V_2)$ be 
a minimum cut. Then, for all non-empty subsets $T\subseteq V_2$, there exists a subset $S\subseteq V_1$ with $|S|\leq 2$ such that the source minimal minimum $(S,T)$-terminal cut $(A,\complement{A})$ satisfies $\delta(A)=\delta(V_1)$ and $A\subseteq V_1$. 
\end{restatable}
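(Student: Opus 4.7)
My plan is to sharpen the proof of Theorem \ref{theorem: structure thm 2} specialized to $k=2$, saving one vertex in $S$ by exploiting the simpler two-part structure. The first step is a standard uncrossing observation: for any $S\subseteq V_1$ and the given non-empty $T\subseteq V_2$, the source minimal minimum $(S,T)$-terminal cut $(A,\complement{A})$ automatically satisfies $A\subseteq V_1$ and $d(A)=\opt$. Indeed, submodularity of the hypergraph cut function applied to $A$ and $V_1$ gives $d(A\cap V_1)+d(A\cup V_1)\le d(A)+d(V_1)$, and since both intersections are non-trivial $(S,T)$-terminal cuts with $d\ge \opt$, all four quantities equal $\opt$; source minimality then forces $A=A\cap V_1\subseteq V_1$. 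Thus it suffices to choose $S$ so that the source minimal cut additionally satisfies $\delta(A)=\delta(V_1)$.

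The next step is to characterize when an $A\subseteq V_1$ with $d(A)=\opt$ has $\delta(A)=\delta(V_1)$. Comparing the two cut-sets hyperedge-by-hyperedge, a hyperedge in $\delta(A)\setminus \delta(V_1)$ must be contained in $V_1$ and split across $A$ and $V_1\setminus A$; a hyperedge in $\delta(V_1)\setminus \delta(A)$ must be a $\delta(V_1)$-hyperedge disjoint from $A$. Writing $\alpha(A)$ and $\beta(A)$ for the counts of these two types, $d(A)=d(V_1)$ is equivalent to $\alpha(A)=\beta(A)$, and $\delta(A)=\delta(V_1)$ is equivalent to $\alpha(A)=\beta(A)=0$. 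Equivalently, $A$ is a union of connected components of the subhypergraph on $V_1$ formed by the hyperedges contained in $V_1$, and $A$ moreover touches every hyperedge in $\delta(V_1)$ from the $V_1$-side.

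With these preliminaries, I construct $S$ adaptively. Pick an arbitrary $s_1\in V_1$ and let $A_1$ be the source minimal minimum $(\{s_1\},T)$-terminal cut. If $\delta(A_1)=\delta(V_1)$, take $S=\{s_1\}$ and we are done. Otherwise there exists a hyperedge $e\in \delta(V_1)$ missed by $A_1$, and since $e$ has a vertex in $V_1$, we may pick $s_2\in e\cap V_1\subseteq V_1\setminus A_1$. I set $S=\{s_1,s_2\}$ and let $A_2$ be the resulting source minimal minimum $(S,T)$-cut. Standard source-minimality chaining (intersect $A_2$ with $A_1$ to get a feasible $(\{s_1\},T)$-cut) yields $A_1\subseteq A_2$, together with $A_2\subseteq V_1$, $s_2\in A_2$, and $d(A_2)=\opt$.

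The main obstacle, and the heart of the argument, is to show that this $A_2$ satisfies $\delta(A_2)=\delta(V_1)$; this is what permits the saving from the $2k-1=3$ of Theorem \ref{theorem: structure thm 2} to $2$. My plan is to argue by contradiction: if some hyperedge $e'\in \delta(V_1)$ were still missed by $A_2$, then on one hand $e'\cap V_1\subseteq V_1\setminus A_2$ and on the other hand $s_2\in e$ forces $e$ to be touched. Intersecting $A_2$ with a carefully chosen good sub-min-cut that contains $\{s_1,s_2\}$---for instance the union of the connected components of $G[V_1]$ containing $s_1$ and $s_2$, which is a sub-min-cut with $\delta=\delta(V_1)$---produces another feasible $(S,T)$-cut of value $\opt$; by source minimality this forces $A_2$ to be contained in that union of components, and a parallel uncrossing with the complementary union of components, combined with the constraint $\alpha(A_2)=\beta(A_2)$ and the specific membership $s_2\in e$, eliminates the possibility of a split component and hence of a missed hyperedge $e'$. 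This is the step that requires the most care and that I expect to be the bulk of the formal write-up.
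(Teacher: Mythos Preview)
Your preliminary observations are correct: for every non-empty $S\subseteq V_1$, the source minimal minimum $(S,T)$-terminal cut $(A,\complement{A})$ satisfies $A\subseteq V_1$ and $d(A)=d(V_1)$, and the counts $\alpha(A),\beta(A)$ behave as you describe. The gap is in your adaptive construction of $S$ and the vague final paragraph. The specific claim you set out to prove---that for \emph{arbitrary} $s_1\in V_1$ and \emph{arbitrary} $s_2\in e\cap V_1$ with $e\in\delta(V_1)\setminus\delta(A_1)$, the source minimal minimum $(\{s_1,s_2\},T)$-terminal cut $A_2$ has $\delta(A_2)=\delta(V_1)$---is \emph{false}.

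Here is a concrete counterexample. Let $V_1=\{a,b,c\}$, $V_2=\{t\}$, $T=\{t\}$, and take the multiset of edges to be one copy each of $\{a,t\},\{b,t\},\{b,c\},\{a,c\}$ and two copies each of $\{c,t\},\{a,b\}$. One checks $d(V_1)=4$ and $d(X)\ge 4$ for every $\emptyset\neq X\subsetneq V$, so $(V_1,V_2)$ is a minimum cut. Take $s_1=a$: then $d(\{a\})=4$, so $A_1=H_{\{a\}}=\{a\}$, and the edge $\{b,t\}\in\delta(V_1)\setminus\delta(A_1)$. Your rule permits $e=\{b,t\}$ and $s_2=b$. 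But $d(\{a,b\})=4$ as well, so $A_2=H_{\{a,b\}}=\{a,b\}$ by source minimality, and $\delta(\{a,b\})$ omits both copies of $\{c,t\}\in\delta(V_1)$. Thus $\delta(A_2)\neq\delta(V_1)$. (Note that $G[V_1]$ is connected here, so your component-uncrossing idea gives only the vacuous containment $A_2\subseteq V_1$; the ``complementary union of components'' is empty and yields nothing, while $\alpha(A_2)=\beta(A_2)=2>0$.) The theorem itself is still true in this example---e.g.\ $S=\{a,c\}$ works---but not via your greedy choice.

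This is exactly the point at which the paper's argument diverges from yours. The paper does \emph{not} build $S$ greedily from an arbitrary seed; it takes $S=\{u_1,u_2\}\subseteq V_1$ of size two for which $H_S$ is \emph{inclusion-maximal} among all size-two choices, assumes $\delta(H_S)\neq\delta(V_1)$, picks a missed hyperedge $e$ and a vertex $u_3\in e\cap V_1$, and then applies the partition-uncrossing Theorem~\ref{theorem:hypergraph-uncrossing} to the three cuts $(H_{C-u_i},\complement{H_{C-u_i}})$ for $C=\{u_1,u_2,u_3\}$. The maximality of $H_S$ is precisely what forces $u_i\notin H_{C-u_i}$ for $i\in\{1,2\}$, and the second (strict) conclusion of Theorem~\ref{theorem:hypergraph-uncrossing}, triggered by the hyperedge $e$, then produces a cut of cost strictly below $d(V_1)$, a contradiction. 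If you want to salvage your write-up, replace ``arbitrary $s_1$'' by this maximality choice and replace the component heuristic in the last paragraph by the uncrossing via Theorem~\ref{theorem:hypergraph-uncrossing}.
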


By applying Theorem \ref{theorem: structure thm 2-for-min-cut} for $T=\{t\}$, we obtain that for every set $U\in \mch$, there exists a subset $S\subseteq U$ with $|S|\le 2$ such that the source minimal minimum $(S,\{t\})$-terminal cut $(A, \complement{A})$ satisfies $\delta(A) = \delta(U)$. Hence, the size of the set $\mincutfamily$ is at most the number of possible ways to choose a non-empty subset $S\subseteq V\setminus \{t\}$ of size at most $2$ which is $\binom{n-1}{1} + \binom{n-1}{2} = \binom{n}{2}$. Thus, we have obtained a structural explanation (based on Theorem \ref{theorem: structure thm 2-for-min-cut}) for the number of minimum cut-sets in a hypergraph being at most $\binom{n}{2}$. 
Theorem \ref{theorem: structure thm 2-for-min-cut} can also be used to enumerate all minimum cut-sets in a given hypergraph using $\binom{n}{2}$ source minimal minimum $(S,T)$-terminal cut computations. 

Theorem \ref{theorem: structure thm 2-for-min-cut} should be compared with a similar-looking structural theorem for graphs that was shown by Goemans and Ramakrishnan \cite{GR95}. Goemans and Ramakrishnan showed that (Theorem 15 in \cite{GR95}) if $G$ is a connected graph, then for every set $U\in \mch$, there exists a subset $S\subseteq V_1$ with $|S|\le 2$ such that $(U, \complement{U})$ is the source minimal minimum $(S, \{t\})$-terminal cut. This leads to a structural explanation for the number of minimum cuts in a connected graph being at most $\binom{n}{2}$. Our Theorem \ref{theorem: structure thm 2-for-min-cut} can be seen as a counterpart of Goemans and Ramakrishnan's result for hypergraphs, but it differs from their result in two aspects: (1) their result does not hold for hypergraphs---the number of minimum cuts in a connected hypergraph could be exponential as we have seen from the spanning-hyperedge-example and (2) the proof of their result is based on the \emph{submodular triple inequality} which holds only for the graph cut function but fails for the hypergraph cut function. So, our Theorem \ref{theorem: structure thm 2-for-min-cut} is more general as it handles minimum cut-sets in hypergraphs and moreover, needs a different proof technique compared to \cite{GR95}. We mention that Goemans and Ramakrishnan's result for connected graphs was our inspiration for Theorem \ref{theorem: structure thm 2-for-min-cut}, which in turn, was our starting point for Theorem \ref{theorem: structure thm 2}. 

\mypara{Organization.} We discuss special cases of \enumhkcut that have been addressed in the literature in Section \ref{sec:related-work}. 
In Section \ref{sec:prelims}, we recall properties of the hypergraph cut function that will be useful to prove our structural theorems. This section 
contains a strengthening of a partition uncrossing theorem from \cite{CC20} whose proof appears in Appendix \ref{section:uncrossing-for-hypergraph-cut-function}. In Section \ref{sec:enumeration-algorithm}, we formally describe and analyze the deterministic polynomial-time algorithm for \enumhkcut that utilizes our two structural theorems (Theorems \ref{theorem: structure thm 1} and \ref{theorem: structure thm 2}). We prove Theorems \ref{theorem: structure thm 1} and \ref{theorem: structure thm 2} in Sections \ref{sec:structure-1} and \ref{sec:structure-2} respectively. 
We prove the strengthening of Theorem \ref{theorem: structure thm 2} for $k=2$---namely Theorem \ref{theorem: structure thm 2-for-min-cut}---in Section \ref{sec:stronger-structure-2-for-k-equals-2}. We conclude with a few open problems in Section \ref{sec:conclusion}.

\subsection{Related work}
\label{sec:related-work}

\medskip
In this section, we discuss 
known techniques for the enumeration problem in the special case of $k=2$ and the special case of graphs along with challenges involved in adapting these techniques to hypergraphs for $k\ge 3$. 

\mypara{\enumgkcut for $k=2$.} \gkcut for $k=2$ is the global minimum cut problem (denoted \gcut) which has been extensively studied.
However, 
efficient deterministic enumeration of all minimum cut-sets in a given connected graph 
is already non-trivial. 
Dinitz, Karzanov, and Lomonosov \cite{DKL76} showed that the number of minimum cuts in a connected graph is at most $\binom{n}{2}$, where $n$ is the number of vertices in the input graph, and they can all be enumerated in deterministic polynomial time. In particular, they designed a compact data structure, namely a cactus graph, to  represent all minimum cuts in a connected graph. 
The upper bound of $\binom{n}{2}$ on the number of minimum cuts in a connected graph is tight as illustrated by the cycle-graph on $n$ vertices.  
Using the seminal random contraction technique, Karger \cite{Kar93} showed a stronger result that the number of $\alpha$-approximate minimum cuts in a connected graph is  $O(n^{2\alpha})$ and they can all be enumerated in randomized polynomial time for constant $\alpha$. 
Karger's tree packing technique \cite{Karger00} also leads to a deterministic polynomial-time algorithm to enumerate all $\alpha$-approximate minimum cuts in a connected graph for constant $\alpha$. 
Nagamochi, Nishimura, and Ibaraki \cite{NNI97} tightened Karger's bound for a particular value of $\alpha$ via the edge splitting operation: the number of  $(4/3-\epsilon)$-approximate minimum cuts in a connected graph is at most $\binom{n}{2}$ for any $\epsilon>0$. This fact was also shown by Goemans and Ramakrishnan \cite{GR95} via a structural result (see discussion after Theorem \ref{theorem: structure thm 2-for-min-cut} above). Henzinger and Williamson \cite{HW96} extended Nagamochi, Nishimura, and Ibaraki's edge splitting technique to show that the number of $(3/2-\epsilon)$-approximate minimum cuts in a connected graph is  $O(n^2)$ for any $\epsilon>0$. The results of Nagamochi, Nishimura, and Ibaraki, Goemans and Ramakrishnan, and Henzinger and Williamson are all constructive and deterministic (i.e., lead to deterministic polynomial-time algorithms to enumerate the respective approximate minimum cuts) and they bound the number of minimum cuts in a connected graph (as opposed to minimum cut-sets). 

\noindent \emph{Polynomial-delay algorithms.} An alternative line of work aims to enumerate \emph{all} cuts in hypergraphs in non-decreasing order of cut value with polynomial time delay between outputs. Such algorithms are known as \emph{polynomial-delay} algorithms in the literature. 
Polynomial-delay algorithms have been designed based on polynomial-time solvability of minimum $(s,t)$-terminal cut and using the Lawler-Murty schema \cite{HPQ84,VY92,NI-book, AMMQ15}. 
Since we know that the number of minimum cuts in a connected graph is polynomial, the existence of a polynomial-delay algorithm immediately implies a polynomial-time algorithm to solve \enumgkcut for $k=2$. This approach does not extend to \enumhkcut for $k=2$ since the number of minimum cuts in a hypergraph can be exponential (e.g., recall the spanning-hyperedge-example).

\mypara{\enumhkcut for $k=2$.} \hkcut for $k=2$ is the global minimum cut problem (denoted \hcut) which has also been extensively studied. We note that the number of minimum cuts in a connected hypergraph could be exponential (e.g., consider the spanning-hyperedge-example). But, how about the number of minimum cut-sets? The number of minimum cut-sets in a hypergraph is at most $\binom{n}{2}$ via decomposition theorems of Cunningham and Edmonds \cite{CE80}, Fujishige \cite{Fuj83}, and Cunningham \cite{Cun80} on submodular functions. Cheng \cite{Che99} designed an explicit \emph{hypercactus representation} for all minimum cut-sets in a hypergraph. Chekuri and Xu \cite{ChekuriX18} designed a faster deterministic polynomial-time algorithm to obtain a hypercactus representation 
(along with all minimum cut-sets) 
of a given hypergraph. Ghaffari, Karger, and Panigrahi \cite{GKP17} (also see \cite{CXY19, FPZ19}) introduced a random contraction technique to solve \hcut which also implied that the number of minimum cut-sets in a hypergraph is at most $\binom{n}{2}$ and that they can all be enumerated in randomized polynomial time. 

We mention that 
in contrast to graphs, the number of constant-approximate minimum cut-sets in a hypergraph can be exponential. 
In fact, the number of $(1+\epsilon)$-approximate minimum cut-sets in a connected hypergraph can be exponential\footnote{Consider the $n$-vertex hypergraph $G=(V,E)$ where $E$ consists of all size-$2$ hyperedges each of cost $\delta=\epsilon(\binom{n}{2}-(1+\epsilon)(n-1))^{-1}$ and a hyperedge $e=V$ of cost $1$. The cost of a minimum cut is $\lambda:=1+\delta (n-1)$. The cost of every cut is at most $1+\delta \binom{n}{2}\le (1+\epsilon)\lambda$. } for any $\epsilon\in (0,1)$. 
Moreover, the techniques of Nagamochi, Nishimura, and Ibaraki, Goemans and Ramakrishnan, and Henzinger and Williamson even when restricted to minimum cuts (as opposed to approximate minimum cuts) cannot extend to hypergraphs: This is because, their techniques are tailored to enumerate all minimum cuts in a connected graph as opposed to all minimum cut-sets; we have already seen that the spanning-hyperedge-example has exponential number of minimum cuts and hence, 
all of them cannot be enumerated in polynomial time. 

\smallskip
\noindent 
\emph{Multiterminal variants for $k$-cut:} We mention that \gkcut and \hkcut have natural variants involving separating specified
terminal vertices $s_1,s_2,\ldots,s_k$. These variants are NP-hard for $k\ge 3$ even in graphs and hence, these variants are not viable lines of attack for \gkcut and \hkcut. We refer the reader to  \cite{CC20} for a discussion of approximation algorithms for these variants.

\mypara{\enumgkcut.} 
\gkcut for $k\ge 3$ has 
a rich literature 
with substantial recent work \cite{GH94, KS96, Th08, KYN07, SV95, RS08, Ma18, GLL18-SODA, GLL18-FOCS, GLL19-STOC, GLL20-STOC, GHLL20, LSS20, CQX20}. 
Goldschmidt and Hochbaum (1988) \cite{GH94} initiated the study on \gkcut by showing that it is NP-hard when $k$ is part of the input and that it is polynomial-time solvable when $k$ is any fixed constant (polynomial-time solvability is not obvious even for $k=3$). 
Recall that we consider $k$ to be a fixed constant throughout this work. 
Goldschmidt and Hochbaum introduced a 
divide-and-conquer approach for \gkcut which resulted in a deterministic polynomial-time algorithm. 
However, their result did not guarantee any bound on the number of minimum $k$-partitions or minimum $k$-cut-sets in connected graphs. Karger and Stein \cite{KS96} gave a randomized polynomial-time algorithm for \gkcut via the random contraction technique. In addition, they showed that the number of minimum $k$-partitions in a connected graph is $O(n^{2k-2})$ and they can all be enumerated in randomized polynomial time. The bound on the number of minimum $k$-partitions in a connected graph 
has recently been improved to 
$O(n^k)$ 
\cite{GLL20-STOC,GHLL20}. We mention that the upper bound of $O(n^k)$ on the number of minimum $k$-partitions in a connected graph is tight as illustrated by the cycle-graph on $n$ vertices. 

There are two known approaches to solve \enumgkcut in deterministic polynomial time: (1) Thorup \cite{Th08} showed that the tree packing approach can be used to obtain a polynomial-time algorithm for \gkcut; this approach also extends to solve \enumgkcut (also see \cite{CQX20}). 
(2) Kamidoi, Yoshida, and Nagamochi \cite{KYN07} extended Goldschmidt and Hochbaum's divide and conquer approach to solve \enumgkcut. 

\mypara{\hkcut.} 
The complexity of \hkcut was open since the work of Goldschmidt and Hochbaum for \gkcut (1988) \cite{GH94} until recently. 
Although certain special cases of \hkcut were known to be solvable in polynomial time \cite{F10, X10}, considerable progress on \hkcut happened only in the last $3$ years. 
Chandrasekaran, Xu, and Yu (2018) \cite{CXY19}
designed the first randomized polynomial-time algorithm for \hkcut; 
their Monte Carlo algorithm runs in $\tilde{O}(pn^{2k-1})$ time where $p = \sum_{e \in E} |e|$ is the representation size of the input hypergraph. 
Fox, Panigrahi, and Zhang \cite{FPZ19} 
improved the randomized run-time to $\tilde{O}(mn^{2k-2})$, where $m$ is the number of hyperedges in the input hypergraph. 
Both these randomized
algorithms are based on random contraction of hyperedges and are
inspired partly by earlier work in \cite{GKP17} for \hcut. These randomized algorithms also imply that the number of minimum $k$-cut-sets is $O(n^{2k-2})$ and that all of them can be enumerated in randomized polynomial time. Chandrasekaran and Chekuri (2020) \cite{CC20} designed a deterministic polynomial-time algorithm for \hkcut via a divide and conquer approach. 
We emphasize that their algorithm finds a minimum $k$-partition and did not have the tools to find all minimum $k$-cut-sets. 

A polynomial bound on the number of minimum $k$-cut-sets along with the existence of a randomized polynomial-time algorithm to enumerate all of them raises the possibility of a deterministic algorithm for \enumhkcut. 
As we mentioned earlier, there are two deterministic  approaches for \enumgkcut---tree packing and divide-and-conquer. The tree packing approach does not seem to extend to hypergraphs (even for \hcut). This leaves the divide-and-conquer approach. 
Notably, this approach also led to the first deterministic algorithm for \hkcut in the work of Chandrasekaran and Chekuri \cite{CC20}. 
As mentioned earlier, we adapt Chandrasekaran and Chekuri's divide-and-conquer approach and augment it with structural results for minimum $k$-cut-sets to prove our main result stated in Theorem \ref{thm:enumhkcut-algo}. 

\subsection{Preliminaries} \label{sec:prelims}
Let $G=(V,E)$ be a hypergraph. Throughout, we will follow the notation mentioned in the second paragraph of Section \ref{sec:techniques}. 
We will repeatedly rely on the fact that the hypergraph cut function $d:2^V\rightarrow \R_+$ is symmetric and submodular. We recall that a set function $f:2^V\rightarrow \R$ is \emph{symmetric} if $f(U)=f(\complement{U})$ for all $U\subseteq V$ and is \emph{submodular} if $f(A) + f(B) \ge f(A\cap B)+f(A\cup B)$ for all subsets $A, B\subseteq V$. 

We will need a partition uncrossing theorem that is a strengthening of a result from \cite{CC20}. We state the strengthened version below. 
See Figure
\ref{figure:uncrossing} for an illustration of the sets that appear in the statement of Theorem \ref{theorem:hypergraph-uncrossing}. 
We emphasize that the second conclusion in the statement of Theorem \ref{theorem:hypergraph-uncrossing} is the strengthening. The proof of the second conclusion is similar to the proof of the first conclusion which appears in \cite{CC20}---we present a proof of both conclusions for the sake of completeness in 
Appendix 
\ref{section:uncrossing-for-hypergraph-cut-function}. 

\begin{restatable}{theorem}{thmHypergraphUncrossing}
\label{theorem:hypergraph-uncrossing}
Let $G=(V,E)$ be a hypergraph, $k\ge 2$ be an integer and
$\emptyset\neq R\subsetneq U\subsetneq V$. Let
$S=\{u_1,\ldots, u_p\}\subseteq U\setminus R$ for $p\ge 2k-2$. Let
$(\complement{A_i}, A_i)$ be a minimum
$((S\cup R)\setminus \set{u_i}, \complement{U})$-terminal cut. Suppose
that $u_i\in A_i\setminus (\cup_{j\in [p]\setminus \set{i}}A_j)$ for
every $i\in [p]$.  Then, the following two hold: 
\begin{enumerate}
    \item There exists a $k$-partition
$(P_1, \ldots, P_k)$ of $V$ with $\complement{U}\subsetneq P_k$ such
that
\[
\cost(P_1, \ldots, P_k) \le \frac{1}{2}\min\{\deltacard(A_i) + \deltacard(A_j): i, j\in [p], i\neq j\}.
\]
\item 
Moreover, if there exists a hyperedge 
$e\in E$ such that 
$e$ intersects $W:=\cup_{1\le i<j\le p}(A_i\cap A_j)$, $e$ intersects $Z:=\cap_{i\in [p]} \complement{A_i}$, and $e$ is contained in $W\cup Z$, 
then the inequality 
in the previous conclusion 
is strict.
\end{enumerate}
\end{restatable}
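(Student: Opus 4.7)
The plan is to follow the same submodular uncrossing chain that establishes the first conclusion in~\cite{CC20}, and then to show that the hyperedge $e$ in part~2 forces at least one inequality in that chain to become strict.

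The key ingredient is a pairwise uncrossing bound. For any $i\neq j$, submodularity of the hypergraph cut function gives
\[
d(A_i)+d(A_j) \;\ge\; d(A_i\cap A_j)+d(A_i\cup A_j).
\]
The hypothesis $u_l\in A_l\setminus\bigcup_{m\neq l}A_m$ forces $A_i\cap A_j$ to contain $\complement{U}$ but to omit every $u_l$, so $\complement{A_i\cap A_j}$ is a feasible source side for an $(S\cup R,\complement{U})$-terminal cut. Monotonicity of minimum terminal cuts in the source set yields $d(A_i\cap A_j)\ge d(A_l)$ for every $l$, hence $d(A_i\cup A_j)\le\min(d(A_i),d(A_j))\le\tfrac12(d(A_i)+d(A_j))$. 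For $k=2$, the $2$-partition $(\complement{A_i\cup A_j},A_i\cup A_j)$ with $(i,j)$ realizing the minimum pair sum already certifies the first conclusion, because $u_i,u_j\in A_i\cup A_j$ forces $\complement{U}\subsetneq A_i\cup A_j$.

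For $k\ge 3$, I would pick $k-1$ indices $i_1,\ldots,i_{k-1}\in[p]$ including a pair minimizing $d(A_i)+d(A_j)$ and set $P_l=A_{i_l}\setminus\bigcup_{m\neq l}A_{i_m}$ for $l<k$, $P_k=V\setminus\bigcup_{l<k}P_l$. Each $P_l$ with $l<k$ contains $u_{i_l}$ and is non-empty; $\complement{U}\subsetneq P_k$ follows from $\complement{U}\subseteq\bigcap_l A_{i_l}\subseteq P_k$ together with the non-empty $R\subseteq\bigcap_l\complement{A_{i_l}}\subseteq P_k$ being disjoint from $\complement{U}$. The bound $\cost(P_1,\ldots,P_k)\le\tfrac12\min\{d(A_i)+d(A_j)\}$ then follows by a double-counting argument: the exactly-one-intersect structure of the parts forces every hyperedge crossing the $k$-partition to be cut by at least two of the individual cuts $(A_{i_l},\complement{A_{i_l}})$, and combining this count with the pairwise uncrossing inequality above gives the averaged bound.

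For the strict inequality in part~2, the hyperedge $e$ with $e\cap W\neq\emptyset$, $e\cap Z\neq\emptyset$, and $e\subseteq W\cup Z$ has every vertex either in $\ge 2$ of the $A_l$'s (those in $W$) or in none (those in $Z$). I would choose the pair $(i,j)$ from the $k-1$ selected indices so that $e$'s $W$-vertex lies in some $A_l\cap A_m$ with $\{l,m\}\neq\{i,j\}$, which places the $W$-vertex in $A_i\setminus A_j$, $A_j\setminus A_i$, or $\complement{A_i}\cap\complement{A_j}$ rather than in $A_i\cap A_j$. A cell-pattern count then shows that $e$ contributes a strictly positive amount to $d(A_i)+d(A_j)-d(A_i\cap A_j)-d(A_i\cup A_j)$ while being entirely absent from $\delta(P_1,\ldots,P_k)$ (since $e\subseteq W\cup Z$ places every vertex of $e$ in zero or $\ge 2$ of the chosen $A_{i_l}$'s once $(i,j)$ is chosen compatibly), and propagating this extra unit of slack through the double-counting yields the strict inequality.

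The main obstacle I anticipate is twofold. First, the cost accounting for $k\ge 3$ requires a careful combinatorial double-count of hyperedges with respect to the cell-patterns of the chosen $k-1$ sets, in order to translate the pairwise uncrossing into a bound on $\cost(P_1,\ldots,P_k)$. Second, for part~2 one must ensure that the pair $(i,j)$ witnessing the slack from $e$ survives the minimization over pairs; this is delicate because the slack-witnessing pair need not coincide with the sum-minimizing pair, and the transfer may require choosing the $k-1$ indices so that the slack-witnessing pair is among the chosen ones from the outset.
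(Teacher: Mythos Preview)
Your $k=2$ argument for part~1 is fine, but the scheme for $k\ge 3$ and for part~2 has real gaps.

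\textbf{Part 1, $k\ge 3$.} The double-counting claim ``every hyperedge crossing the $k$-partition is cut by at least two of the individual cuts $(A_{i_l},\complement{A_{i_l}})$'' is false. Take $k=3$, so you pick two indices $i_1,i_2$; a hyperedge with one vertex in $P_1=A_{i_1}\setminus A_{i_2}$ and one vertex in $\complement{A_{i_1}}\cap\complement{A_{i_2}}\subseteq P_3$ crosses the $3$-partition but lies only in $\delta(A_{i_1})$, not in $\delta(A_{i_2})$. So $2\cost(P_1,\ldots,P_k)\le\sum_l d(A_{i_l})$ is not guaranteed, and even if it were, summing $k-1$ terms does not yield the pairwise minimum $\tfrac12\min\{d(A_i)+d(A_j)\}$ without a further averaging argument that you have not supplied.

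\textbf{Part 2.} Your plan to extract slack from the submodularity inequality for a chosen pair $(i,j)$ does not work in general. If the $W$-vertex of $e$ happens to lie in $A_i\cap A_j$ (nothing prevents a vertex of $W$ from lying in many of the $A_l$'s, so you cannot avoid this by picking $\{l,m\}\neq\{i,j\}$), then $e$ lies in each of $\delta(A_i)$, $\delta(A_j)$, $\delta(A_i\cap A_j)$, $\delta(A_i\cup A_j)$ and contributes $0$ to $d(A_i)+d(A_j)-d(A_i\cap A_j)-d(A_i\cup A_j)$. Meanwhile the constraint that $(i,j)$ must simultaneously realize the minimum pair-sum removes any remaining freedom.

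\textbf{What the paper does instead.} The proof does not select $k-1$ of the $A_l$'s directly. It first uses \emph{all} $p$ sets to form the $(p+2)$-partition $(Y_1,\ldots,Y_p,W,Z)$ with $Y_l=A_l\setminus W$, and invokes a counting lemma from \cite{CC20} bounding a quantity $\sigma(Y_1,\ldots,Y_p,W,Z)=\cost+\cost(W,Z)+\alpha+\beta$ by $\min\{d(A_i)+d(A_j)\}$. A separate aggregation lemma then produces a $k$-partition $(P_1,\ldots,P_k)$ with $W\cup Z\subseteq P_k$ and $2\cost(P_1,\ldots,P_k)\le\cost+\alpha+\beta=\sigma-\cost(W,Z)$. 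The point is that the term $\cost(W,Z)$, which counts precisely hyperedges contained in $W\cup Z$ meeting both $W$ and $Z$, is present in the upper bound $\sigma$ but absent from the aggregation bound. Hence the existence of your hyperedge $e$ forces $\cost(W,Z)\ge 1$ and makes the chain strict with no delicate pair-selection at all. This is where your proposal diverges: the slack lives in the $\cost(W,Z)$ term of the global $(p+2)$-partition accounting, not in any pairwise submodularity inequality.
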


\begin{figure}[htb]
\centering
\includegraphics[width=0.6\textwidth]{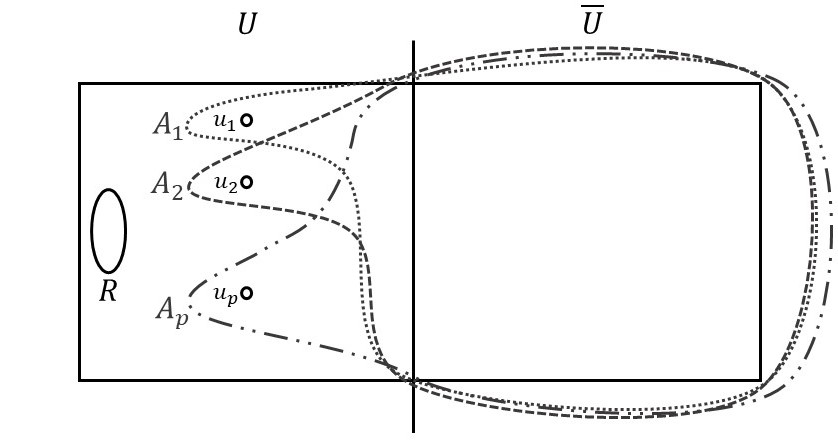}
\caption{Illustration of the sets that appear in the statement of Theorem \ref{theorem:hypergraph-uncrossing}.}
\label{figure:uncrossing}
\end{figure}

\section{Enumeration Algorithm} \label{sec:enumeration-algorithm}

We will use Theorems \ref{theorem: structure thm 1} and \ref{theorem: structure thm 2} to design a deterministic polynomial-time algorithm for \enumhkcut in this section. 
We describe the formal algorithm in Figure \ref{Algo: enum-cuts}. It enumerates $n^{O(k)}$ source minimal minimum $(S,T)$-terminal cuts and considers the cut-set crossing each cut in this collection. If the removal of the cut-set leads to at least $k$ connected components, then it adds such a cut-set to the candidate family $\family$; otherwise, it adds the source set of the cut into a candidate collection $\collection$. 
Next, the algorithm considers 
all possible $k$-partitions that can be formed using the sets in the collection $\collection$ and adds the set of hyperedges crossing the $k$-partition to the family $\family$. Finally, it prunes the family $\family$ to return all minimum $k$-cut-sets in it.  
The run-time guarantee and the cardinality of the family of $k$-cut-sets returned by the algorithm are given in Theorem \ref{theorem: enum algorithm}. Theorem \ref{thm:enumhkcut-algo} follows from Theorem \ref{theorem: enum algorithm} by observing that the source minimal minimum $(S,T)$-terminal cut in a hypergraph can be computed in deterministic polynomial time---e.g., it can be computed in a $n$-vertex hypergraph of size $p$ in $O(np)$ time \cite{ChekuriX18}.  

\begin{figure*}[ht]
\centering\small
\begin{algorithm}
\textul{Algorithm Enum-Cuts$(G=(V,E),k)$}\+
\\{\bf Input:} Hypergraph $G=(V,E)$ and an integer $k\geq 2$
\\{\bf Output:} Family of all minimum $k$-cut-sets in $G$
\\ Initialize $\mathcal{C}\gets\emptyset$, $\mathcal{F}\gets\emptyset$
\\ For each pair $(S,T)$ such that $S,T\subseteq V$ with $S\cap T=\emptyset$ and $|S|,|T|\leq 2k-1$ \+
\\ Compute the source minimal minimum $(S,T)$-terminal cut $(U,\complement{U})$
\\ If $G-\delta(U)$ has at least $k$ connected components \+
\\ $\family\leftarrow \family\cup \set{\delta(U)}$\-
\\ Else \+
\\ $\mathcal{C}\gets\mathcal{C}\cup\{U\}$\-\-
\\ For each $k$-partition $(U_1, \ldots, U_k)$ of $V$ with $U_1, \ldots, U_k\in \mathcal{C}$ \+
\\ $\mathcal{F}\gets\mathcal{F}\cup\{\delta(U_1, \ldots, U_k)\}$\-
\\ Among all $k$-cut-sets in the family $\mathcal{F}$, return the subfamily of cheapest ones
\end{algorithm}
\caption{Algorithm to enumerate hypergraph minimum $k$-cut-sets}
\label{Algo: enum-cuts}
\end{figure*}

\begin{theorem} \label{theorem: enum algorithm}
Let $G=(V,E)$ be a $n$-vertex hypergraph of size $p$ and let $k$ be an integer. Then, Algorithm Enum-Cuts$(G,k)$ in Figure \ref{Algo: enum-cuts} returns the family of all minimum $k$-cut-sets in $G$ and it can be implemented to run in $O(n^{4k-2})T(n,p)+O(n^{4k^2-2k}p)$ time, where $T(n,p)$ denotes the time complexity for computing the source minimal minimum $(s,t)$-terminal cut in a $n$-vertex hypergraph of size $p$. Moreover, the cardinality of the family returned by the algorithm is $O(n^{2k(2k-1)})$.
\end{theorem}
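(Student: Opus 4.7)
The plan is to prove three things: (i) the output family $\mathcal{F}$ (after pruning) equals the set of all minimum $k$-cut-sets in $G$, (ii) the stated runtime bound holds, and (iii) the stated cardinality bound on $\mathcal{F}$ holds. The containment showing that every set added to $\mathcal{F}$ is a $k$-cut-set is immediate from the algorithm: the first loop only adds $\delta(U)$ when $G-\delta(U)$ has at least $k$ components, and the second loop adds $\delta(U_1,\ldots,U_k)$ for a genuine $k$-partition of $V$, whose removal separates the $k$ parts. So the main work is to prove that every minimum $k$-cut-set ends up in $\mathcal{F}$ before the pruning step.

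For correctness I would fix an arbitrary minimum $k$-cut-set $F$ and a minimum $k$-partition $\mathcal{P}=(V_1,\ldots,V_k)$ with $F=\delta(\mathcal{P})$; since $\delta(V_i)\subseteq \delta(\mathcal{P})$, we have $d(V_i)\le |F|=\opt$ for every $i$, so exactly one of the following holds. \textbf{Case 1:} $d(V_i)<\opt$ for every $i\in[k]$. Pick an arbitrary $s\in V_i$ and $t\in \complement{V_i}$ and apply Theorem \ref{theorem: structure thm 1} to $(V_i,\complement{V_i})$ to obtain subsets $S_i\subseteq V_i\setminus\{s\}$ and $T_i\subseteq \complement{V_i}\setminus\{t\}$ with $|S_i|,|T_i|\le 2k-3$ such that $(V_i,\complement{V_i})$ is the \emph{unique} minimum $(S_i\cup\{s\},T_i\cup\{t\})$-terminal cut. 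The enlarged pair has sizes at most $2k-2\le 2k-1$, so it is enumerated by the outer loop, and the resulting source minimal minimum terminal cut is exactly $(V_i,\complement{V_i})$. Because $|\delta(V_i)|=d(V_i)<\opt$, the set $\delta(V_i)$ is not a $k$-cut-set, so the algorithm adds $V_i$ to $\mathcal{C}$. Repeating this for each $i$ puts all of $V_1,\ldots,V_k$ into $\mathcal{C}$, so the second loop eventually picks the $k$-tuple $(V_1,\ldots,V_k)$ and inserts $\delta(\mathcal{P})=F$ into $\mathcal{F}$. \textbf{Case 2:} $d(V_i)=\opt$ for some $i$; after relabelling assume $i=1$. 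Then $\delta(V_1)\subseteq F$ with $|\delta(V_1)|=|F|$, so $\delta(V_1)=\delta(\mathcal{P})=F$. Choose $T=\{t_2,\ldots,t_k\}$ with $t_j\in V_j$, so $|T|=k-1\le 2k-1$; Theorem \ref{theorem: structure thm 2} produces $S\subseteq V_1$ with $|S|\le 2k-1$ such that the source minimal minimum $(S,T)$-terminal cut $(A,\complement{A})$ satisfies $\delta(A)=\delta(V_1)=F$. Since $F$ is a $k$-cut-set, $G-\delta(A)=G-F$ has at least $k$ components, so the first loop places $\delta(A)=F$ directly into $\mathcal{F}$.

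Once every minimum $k$-cut-set is known to lie in $\mathcal{F}$ and every element of $\mathcal{F}$ is a $k$-cut-set, the final step (``return the cheapest'') returns exactly the family of all minimum $k$-cut-sets. For the runtime, the outer loop ranges over unordered pairs of disjoint subsets of size at most $2k-1$, giving $O(n^{2(2k-1)})=O(n^{4k-2})$ iterations, each dominated by one source minimal terminal cut computation; total $O(n^{4k-2})\cdot T(n,p)$. The set $\mathcal{C}$ has size $O(n^{4k-2})$, so the second loop enumerates at most $|\mathcal{C}|^k=O(n^{k(4k-2)})=O(n^{4k^2-2k})$ ordered $k$-tuples, and for each we compute the crossing hyperedges in $O(p)$ time, giving $O(n^{4k^2-2k}p)$. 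Adding the two contributions yields the claimed runtime. The cardinality of $\mathcal{F}$ is at most the sum of the two contributions, $O(n^{4k-2})+O(n^{4k^2-2k})=O(n^{2k(2k-1)})$.

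The main conceptual obstacle is the case split that drives correctness: one must recognise that Theorem \ref{theorem: structure thm 1} only identifies the part $V_i$ when $d(V_i)$ is strictly less than $\opt$ (which is precisely what forces $V_i$ into $\mathcal{C}$ rather than $\mathcal{F}$), while Theorem \ref{theorem: structure thm 2} is needed exactly in the complementary case where some $d(V_i)=\opt$ and the cut-set $F$ can be produced directly as a terminal-cut cut-set. Verifying that the size parameters produced by both theorems fit within the $2k-1$ budget used by the outer loop, and that $\delta(V_i)$ in Case 1 truly fails to be a $k$-cut-set, are the small but essential details that tie the structural theorems to the non-recursive algorithmic framework.
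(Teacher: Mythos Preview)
Your proof is correct and follows essentially the same approach as the paper: the same case split on whether every $d(V_i)<\opt$ or some $d(V_i)=\opt$, invoking Theorem~\ref{theorem: structure thm 1} in the first case to place each $V_i$ in $\mathcal{C}$ and Theorem~\ref{theorem: structure thm 2} in the second to place $F$ directly in $\mathcal{F}$, together with the same counting for the runtime and cardinality bounds. If anything, you are slightly more explicit than the paper in Case~1 when you justify that $V_i$ lands in $\mathcal{C}$ rather than $\mathcal{F}$ (because $d(V_i)<\opt$ forces $\delta(V_i)$ not to be a $k$-cut-set); the paper leaves this implicit.
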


\begin{proof}
We begin by showing correctness. 
The last step of the algorithm considers only $k$-cut-sets in the family $\family$, 
so the algorithm returns a subfamily of $k$-cut-sets. We only have to show that every minimum $k$-cut-set is in the family $\family$; 
this will also guarantee that every $k$-cut-set in the returned subfamily is indeed a minimum $k$-cut-set. 
 
Let $F\subseteq E$ be a minimum $k$-cut-set in $G$ and let $(V_1,\ldots,V_k)$ be a minimum $k$-partition such that $F=\delta(V_1,\ldots,V_k)$. We will show that $F$ is in the family 
$\mathcal{F}$. 
We know that $d(V_i)\le \opt$ for every $i\in [k]$. We distinguish two cases:
\begin{enumerate}
    \item Suppose $d(V_i)<\opt$ for every $i\in [k]$.
    
    Consider an arbitrary part $V_i$ where $i\in [k]$. By Theorem \ref{theorem: structure thm 1}, there exist disjoint subsets $S,T\subseteq V$ with $|S|,|T|\leq 2k-2$ such that $(V_i,\complement{V_i})$ is the unique minimum $(S,T)$-terminal cut. Hence, the set $V_i$ is in the collection $\mathcal{C}$. 
    Consequently, all parts $V_1, \ldots, V_k$ are in the collection $\collection$. Hence, the set $F=\delta(V_1, \ldots, V_k)$ is added to the family $\family$ in the second for-loop. 
   
    \item Suppose there exists $i\in [k]$ such that $d(V_i)=\opt$.
    
    In this case, we have $\delta(V_i)=F=\delta(V_1,\ldots,V_k)$. By Theorem \ref{theorem: structure thm 2}, there exist disjoint subsets $S,T\subseteq V$ with $|S|,|T|\leq 2k-1$ such that the source minimal minimum $(S,T)$-terminal cut $(A,\complement{A})$ satisfies $\delta(A)=\delta(V_i)=F$. 
    Therefore, the set $F$ is added to the family $\family$ in the first for-loop. 
\end{enumerate}
Thus, in both cases, we have shown that the set $F$ is contained in the family $\mathcal{F}$. Since the algorithm returns the subfamily of hyperedge sets in $\mathcal{F}$ that correspond to minimum $k$-cut-sets, the set $F$ is in the family returned by the algorithm.

Next, we bound the run time and the number of minimum $k$-cut-sets returned by the algorithm. The first for-loop can be implemented using $O(n^{4k-2})$ source minimal minimum $(s,t)$-terminal cut computations. Moreover, the size of the collection $\collection$ is $O(n^{4k-2})$. The number of tuples $(U_1, \dots, U_k) \in \mathcal{C}^k$ is $O(n^{4k^2-2k})$. Verifying if a tuple $(U_1, \dots, U_k)$ forms a $k$-partition takes $O(n)$ time. For a tuple which forms a $k$-partition, computing the hyperedges crossing that partition takes $O(p)$ time. Thus, the second for-loop can be implemented to run in time $O(n^{4k^2-2k}p)$.
The size of the family $\family$ is $O(n^{4k^2-2k})$. Each $k$-cut-set in $\family$ has representation size at most $p$. Hence, computing the size of each $k$-cut-set in $\family$ and returning the cheapest ones can be implemented to run in time $O(n^{4k^2-2k}p)$.
Thus, the overall run-time is $O(n^{4k-2})T(n,p)+O(n^{4k^2-2k}p)$.  
\end{proof}

\section{Proof of Theorem \ref{theorem: structure thm 1}} \label{sec:structure-1}

We prove Theorem \ref{theorem: structure thm 1} in this section. 
We will use the following theorem to prove Theorem \ref{theorem: structure thm 1}.

\begin{theorem}\label{thm: thm for structure thm 1}
Let $G=(V,E)$ be a hypergraph and let $\opt$ be the value of a minimum $k$-cut-set in $G$ for some integer $k\ge 2$. Suppose $(U,\complement{U})$ is a $2$-partition of $V$ with $d(U)<\opt$. Then, for every vertex $s\in U$, there exists a subset $S\subseteq U\backslash\{s\}$ with $|S|\leq 2k-3$ such that $(U,\complement{U})$ is the unique minimum $(S\cup\{s\},\complement{U})$-terminal cut.
\end{theorem}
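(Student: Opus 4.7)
The plan is to take an inclusion-minimal valid set $S$ and apply the hypergraph uncrossing theorem (Theorem~\ref{theorem:hypergraph-uncrossing}) to cap its size.

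First I observe that $S_0 := U \setminus \{s\}$ trivially has the desired property, since the only $(U,\complement{U})$-terminal cut is $(U,\complement{U})$ itself. So I pick $S \subseteq U\setminus\{s\}$ to be an inclusion-minimal subset for which $(U,\complement{U})$ is the unique minimum $(S\cup\{s\},\complement{U})$-terminal cut, and aim to show $|S|\le 2k-3$. If $|U\setminus\{s\}|\le 2k-3$ the bound is trivial, so I henceforth assume $|S| \ge 2k-2$ and derive a contradiction.

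Write $S = \{v_1,\ldots,v_p\}$ with $p := |S| \ge 2k-2$. For each $i\in[p]$ let $B_i$ be the source side of the source-minimal minimum $((S\setminus\{v_i\})\cup\{s\},\complement{U})$-terminal cut. Inclusion-minimality of $S$ means that dropping $v_i$ destroys the uniqueness property, so $(U,\complement{U})$ is not the unique minimum cut for the reduced source; hence $B_i \subsetneq U$ and $d(B_i) \le d(U) < \opt$. The crux is the claim $v_i \notin B_i$: if $v_i\in B_i$, then $B_i\supseteq S\cup\{s\}$, and a routine source-enlargement argument (a cut for the smaller source that contains the enlarged source is feasible for the larger source, and the minimum value cannot decrease under source enlargement) shows $B_i$ is itself a minimum $(S\cup\{s\},\complement{U})$-terminal cut. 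But then the defining property of $S$ forces $B_i = U$, contradicting $B_i\subsetneq U$. The companion condition $v_i \in B_j$ for $j\neq i$ is immediate since $v_i\in (S\setminus\{v_j\})\cup\{s\}\subseteq B_j$.

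I now apply Theorem~\ref{theorem:hypergraph-uncrossing} with $R=\{s\}$, using $\{v_1,\ldots,v_p\}$ in the role of its $S$ (so $p \ge 2k-2$) and $A_i := \complement{B_i}$, so that $(\complement{A_i},A_i) = (B_i,\complement{B_i})$ is the chosen minimum terminal cut. The required hypothesis $u_i \in A_i \setminus \bigcup_{j\neq i} A_j$ is exactly the conjunction verified above. The theorem yields a $k$-partition $(P_1,\ldots,P_k)$ of $V$ with $\complement{U}\subsetneq P_k$ satisfying
\[
\cost(P_1,\ldots,P_k) \le \tfrac{1}{2}\min_{i \neq j}\bigl(d(B_i)+d(B_j)\bigr) \le d(U) < \opt,
\]
contradicting the definition of $\opt$ as the minimum $k$-cut-set value. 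Hence $|S|\le 2k-3$. The main technical obstacle I anticipate is the verification $v_i\notin B_i$; this is where source-minimality of the chosen terminal cuts combined with inclusion-minimality of $S$ play an essential role, via the monotone behavior of minimum terminal cuts under source enlargement. Everything else reduces to a clean invocation of the uncrossing machinery.
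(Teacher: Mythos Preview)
Your proof is correct and essentially matches the paper's approach: pick $S$ minimally, verify $v_i\in A_i\setminus\bigcup_{j\neq i}A_j$ for the associated terminal cuts, and invoke Theorem~\ref{theorem:hypergraph-uncrossing} to produce a $k$-partition of cost at most $d(U)<\opt$. The only difference is cosmetic: the paper defines $S$ as an inclusion-minimal transversal of the collection $\mathcal{C}=\{Q\subseteq V\setminus\{s\}:\complement{U}\subsetneq Q,\ d(Q)\le d(U)\}$ and derives the uniqueness property from that, whereas you take $S$ to be inclusion-minimal with the uniqueness property directly---the ensuing verification that $v_i\notin B_i$ and the final contradiction are the same.
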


\begin{proof}
Let $s\in U$. Consider the collection  \[
\mathcal{C}:=\{Q\subseteq V\backslash\{s\}:\complement{U}\subsetneq Q,d(Q)\leq d(U)\}. 
\]
Let $S$ be an inclusion-wise minimal subset of $U\setminus \{s\}$ such that $S\cap Q\neq \emptyset$ for all $Q\in \mathcal{C}$,  i.e., the set $S$ is completely contained in $U\backslash\{s\}$ and is a minimal transversal of $\mathcal{C}$.
Proposition \ref{prop:unique-min-cut} and Lemma  \ref{lemma:size-of-witness} complete the proof of Theorem  \ref{thm: thm for structure thm 1} for this choice of $S$.
\end{proof}

\begin{proposition}\label{prop:unique-min-cut}
The $2$-partition $(U,\complement{U})$ is the unique minimum $(S\cup\{s\},\complement{U})$-terminal cut.
\end{proposition}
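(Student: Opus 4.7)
The plan is to directly unpack what it means to be an $(S\cup\{s\},\complement{U})$-terminal cut and use only the transversal property of $S$ with respect to the collection $\mathcal{C}$ (the minimality of $S$ will be needed for the cardinality bound in Lemma~\ref{lemma:size-of-witness}, not here). First, I will observe that any $(S\cup\{s\},\complement{U})$-terminal cut $(A,\complement{A})$ satisfies $S\cup\{s\}\subseteq A\subseteq V\setminus\complement{U}=U$, so in particular $A\subseteq U$, and hence $\complement{U}\subseteq \complement{A}$. The $2$-partition $(U,\complement{U})$ is itself such a cut (since $S\subseteq U\setminus\{s\}$ and $s\in U$), and its cost is $d(U)$. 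So it suffices to show that every $(S\cup\{s\},\complement{U})$-terminal cut $(A,\complement{A})$ with $A\neq U$ has $d(A)>d(U)$.

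The heart of the argument is to derive a contradiction from the assumption $d(A)\le d(U)$ for such an $A\subsetneq U$. I would argue that $\complement{A}$ belongs to the collection $\mathcal{C}$: indeed, $\complement{A}\subseteq V\setminus\{s\}$ because $s\in S\cup\{s\}\subseteq A$; $\complement{U}\subsetneq \complement{A}$ because $A\subsetneq U$; and $d(\complement{A})=d(A)\le d(U)$ by symmetry of the hypergraph cut function. Once $\complement{A}\in\mathcal{C}$, the transversal property of $S$ forces $S\cap\complement{A}\neq\emptyset$. But $S\subseteq S\cup\{s\}\subseteq A$, which is disjoint from $\complement{A}$, a contradiction.

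From this contradiction I conclude $d(A)>d(U)$ for every $(S\cup\{s\},\complement{U})$-terminal cut $(A,\complement{A})$ distinct from $(U,\complement{U})$. This simultaneously establishes that $(U,\complement{U})$ is a minimum $(S\cup\{s\},\complement{U})$-terminal cut and that it is the unique such minimum cut. No step requires any nontrivial property of the hypergraph cut function beyond symmetry, and the main (and only) obstacle is bookkeeping: correctly tracking which side of the cut the source terminals lie on and recognizing that the strict inclusion $A\subsetneq U$ translates into the strict inclusion $\complement{U}\subsetneq \complement{A}$ that qualifies $\complement{A}$ for membership in $\mathcal{C}$.
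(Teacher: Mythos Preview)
Your proposal is correct and follows essentially the same argument as the paper: assume a competing $(S\cup\{s\},\complement{U})$-terminal cut exists, show its sink side lands in the collection $\mathcal{C}$, and derive a contradiction from the transversal property of $S$. The only cosmetic difference is that you phrase the contradiction hypothesis as ``$d(A)\le d(U)$ for some $A\subsetneq U$'' while the paper starts directly with a minimum terminal cut $Y\neq U$; the logic is identical.
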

\begin{proof}
For the sake of contradiction, suppose $(Y,\complement{Y})$ is a minimum $(S\cup\{s\},\complement{U})$-terminal cut with $Y\neq U$. This implies that $S\cup \{s\}\subseteq Y$ and  $\complement{U}\subsetneq\complement{Y}$. Moreover, we have $d(\complement{Y})\leq d(\complement{U})$ because $(U,\complement{U})$ is a $(S\cup\{s\},\complement{U})$-terminal cut. Consequently, the set $\complement{Y}$ is in the collection $\mathcal{C}$. Since $S$ is a transversal of the collection $\mathcal{C}$, we have that $S\cap\complement{Y}\neq\emptyset$. This contradicts  the fact that $S$ is contained in $Y$.
\end{proof}

\begin{lemma}\label{lemma:size-of-witness}
The size of the subset $S$ is at most $2k-3$.
\end{lemma}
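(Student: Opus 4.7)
The plan is to prove the lemma by contradiction: assume $|S| = p \geq 2k-2$, and use the uncrossing theorem (Theorem \ref{theorem:hypergraph-uncrossing}) with $R = \{s\}$ to manufacture a $k$-partition of cost strictly less than $\opt$, which contradicts the definition of $\opt$.

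First, I would extract the ``witnesses of minimality.'' Since $S$ is inclusion-wise minimal as a transversal of $\mathcal{C}$, for every $i \in [p]$ there exists a set $Q_i \in \mathcal{C}$ with $Q_i \cap S = \{u_i\}$; otherwise $S \setminus \{u_i\}$ would still be a transversal. The set $Q_i$ will serve two purposes: (a) because $s \notin Q_i$ and $u_j \notin Q_i$ for all $j \neq i$ while $\complement{U} \subseteq Q_i$, the pair $(\complement{Q_i}, Q_i)$ is a valid $((S \cup \{s\}) \setminus \{u_i\}, \complement{U})$-terminal cut; (b) $d(Q_i) \leq d(U)$ by membership in $\mathcal{C}$.

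Next, I would choose, for each $i$, the set $A_i$ such that $(\complement{A_i}, A_i)$ is the source-minimal (equivalently sink-maximal) minimum $((S \cup \{s\}) \setminus \{u_i\}, \complement{U})$-terminal cut. Since $Q_i$ is one such terminal cut, $\alpha_i := d(A_i) \leq d(Q_i) \leq d(U) < \opt$. I then need to establish two structural properties of $A_i$. Property (i): $\complement{U} \subsetneq A_i$. If $A_i = \complement{U}$, then $\alpha_i = d(U)$, so $Q_i$ is itself a minimum terminal cut whose sink side strictly contains $A_i$, contradicting sink-maximality of $A_i$. Property (ii): $u_i \in A_i$. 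If not, then $A_i$ satisfies all defining conditions of $\mathcal{C}$ (it is a subset of $V \setminus \{s\}$, it strictly contains $\complement{U}$ by (i), and $d(A_i) \leq d(U)$), while $A_i \cap S = \emptyset$ (no $u_j$ lies in $A_i$: for $j \neq i$, $u_j$ sits on the source side, and $u_i \notin A_i$ by assumption), contradicting the transversal property of $S$. Property (ii) is the step where the minimality of $S$ as a transversal is crucially used, and I expect it to be the main subtlety of the proof: we must pick the sink-maximal minimum cut so that any "shortfall" of $A_i$ relative to $Q_i$ would itself reproduce an element of $\mathcal{C}$ missed by $S$.

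Finally, I would invoke Theorem \ref{theorem:hypergraph-uncrossing} with $R = \{s\}$. Its hypotheses are satisfied: $\emptyset \neq R \subsetneq U \subsetneq V$ (using $|U| \geq 1 + p \geq 2k-1 \geq 3$), $S \subseteq U \setminus R$ with $p \geq 2k-2$, each $(\complement{A_i}, A_i)$ is by construction a minimum $((S \cup R) \setminus \{u_i\}, \complement{U})$-terminal cut, and $u_i \in A_i \setminus (\cup_{j \neq i} A_j)$ (the membership in $A_i$ is property (ii); the exclusion from $A_j$ for $j \neq i$ holds because $u_i$ belongs to the source terminal set $(S \cup \{s\}) \setminus \{u_j\}$ of the $j$-th cut). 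The theorem then yields a $k$-partition $(P_1, \ldots, P_k)$ of $V$ with
\[
\cost(P_1, \ldots, P_k) \;\leq\; \tfrac{1}{2} \min_{i \neq j}\bigl(d(A_i) + d(A_j)\bigr) \;\leq\; d(U) \;<\; \opt,
\]
which contradicts the fact that every $k$-partition has cost at least $\opt$. Hence $|S| \leq 2k-3$.
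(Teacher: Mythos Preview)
Your proposal is correct and follows essentially the same argument as the paper: extract witnesses $Q_i\in\mathcal{C}$ from minimality of the transversal $S$, take $(\complement{A_i},A_i)$ to be the source-minimal (sink-maximal) minimum $((S\cup\{s\})\setminus\{u_i\},\complement{U})$-terminal cut, use $Q_i$ to force $\complement{U}\subsetneq A_i$ and hence $A_i\in\mathcal{C}$, deduce $u_i\in A_i$ from the transversal property, and then apply Theorem~\ref{theorem:hypergraph-uncrossing} with $R=\{s\}$ to obtain a $k$-partition of cost at most $d(U)<\opt$. The only cosmetic differences are your explicit verification that $R\subsetneq U$ and your phrasing of the $u_i\in A_i$ step as a contrapositive.
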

\begin{proof}
For the sake of contradiction, suppose $|S|\geq 2k-2$. Our proof strategy is to show the existence of a $k$-partition with cost smaller than $\opt$, thus contradicting the definition of $\opt$. 
Let $S:=\{u_1,u_2,\ldots,u_p\}$ for some $p\geq 2k-2$. For each $i\in[p]$, let $(\complement{A_i},A_i)$ be the source minimal minimum $((S\cup\{s\})\backslash\{u_i\},\complement{U})$-terminal cut. The following claim will allow us to show that the cuts $(\complement{A_i}, A_i)$ satisfy the hypothesis of Theorem \ref{theorem:hypergraph-uncrossing}. 

\begin{claim}\label{claim:min-separating-cuts-miss-u_i}
For every $i\in[p]$, we have $d(A_i)\leq d(U)$ and $u_i\in A_i$.
\end{claim}

\begin{proof}
Let $i\in [p]$. Since $S$ is a minimal transversal of the collection $\mathcal{C}$, there exists a set $B_i\in \mathcal{C}$ such that $B_i\cap S = \{u_i\}$. Hence, $(\complement{B_i},B_i)$ is a $((S\cup\{s\})\backslash\{u_i\},\complement{U})$-terminal cut. Therefore, 
\[
d(A_i)\leq d(B_i)\leq d(U). 
\]

We will show that $A_i$ is in the collection $\mathcal{C}$. By definition, $A_i\subseteq V\setminus \{s\}$ and $\complement{U}\subseteq A_i$. If $A_i=\complement{U}$, then the above inequalities are equations implying that $(B_i, \complement{B_i})$ is a minimum  $((S\cup\{s\})\backslash\{u_i\}, \complement{U})$-terminal cut, and consequently, $(B_i, \complement{B_i})$ contradicts source minimality of the minimum $((S\cup\{s\})\backslash\{u_i\}, \complement{U})$-terminal cut $(A_i, \complement{A_i})$. Therefore, $\complement{U}\subsetneq A_i$. Hence, $A_i$ is in the collection $\mathcal{C}$. 

We recall that the set $S$ is a transversal for the collection $\mathcal{C}$ and moreover, none of the elements of $S\setminus \{u_i\}$ are in $A_i$  by definition of $A_i$. Therefore, the vertex $u_i$ must be in $A_i$.  
\end{proof}

Using Claim \ref{claim:min-separating-cuts-miss-u_i}, we observe that the sets $U$, $R:=\set{s}$, $S$, and the partitions $(\complement{A_i}, A_i)$ for $i\in [p]$ satisfy the conditions of Theorem \ref{theorem:hypergraph-uncrossing}. By the first conclusion of Theorem \ref{theorem:hypergraph-uncrossing} and Claim \ref{claim:min-separating-cuts-miss-u_i}, we obtain a $k$-partition $(P_1, \ldots, P_k)$ of $V$ such that 
\begin{align*}
\cost(P_1, \ldots, P_k)
&\le \frac{1}{2}\min\set{\deltacard(A_i) + \deltacard(A_j):i, j\in [p], i\neq j}
\le \deltacard(U)
< \opt.
\end{align*}
The last inequality above is by the assumption in the theorem statement. Thus, we have obtained a $k$-partition whose cost is smaller than $\opt$, a contradiction. 

\iffalse
Next, we will apply Theorem \ref{theorem:hypergraph-uncrossing} and start by verifying all conditions are met. Firstly, we have $\emptyset\neq\{s\}\subsetneq U\subsetneq V$ and $S=\{u_1,\ldots,u_{p}\}\subseteq U\backslash\{s\}$. Secondly, we have defined $(\complement{A_i},A_i)$ to be a minimum $((S\cup\{s\})\backslash\{u_i\},\complement{U})$-terminal cut for all $i\in[p]$. Lastly, we have shown $u_i\in A_i\backslash(\cup_{j\in[p]\backslash\{i\}}A_j)$ for all $i\in[p]$. Theorem \ref{theorem:hypergraph-uncrossing} implies that there exists a $k$-partition $(P_1,\ldots,P_k)$ of $V$ such that 
\[\cost\left(P_1,\ldots,P_k\right)\leq \frac{1}{2}\min\{d(A_i)+d(A_j):i,j\in[p],i\neq j\}\leq d(U)<\opt.\]
This contradicts optimality of $\opt$.
\fi
\end{proof}

Applying Theorem \ref{thm: thm for structure thm 1} to $(\complement{U}, U)$ yields the following corollary.

\begin{corollary}\label{corollary: coro for structure thm 1}
Let $G=(V,E)$ be a hypergraph and let $\opt$ be the value of a minimum $k$-cut-set in $G$ for some integer $k\ge 2$. Suppose $(U,\complement{U})$ is a $2$-partition of $V$ with $d(U)<\opt$. Then, for every vertex $t\in \complement{U}$, there exists a subset $T\subseteq \complement{U}\backslash\{t\}$ with $|T|\leq 2k-3$ such that $(U,\complement{U})$ is the unique minimum $(U,T\cup\{t\})$-terminal cut.
\end{corollary}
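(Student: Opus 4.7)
The plan is to apply Theorem \ref{thm: thm for structure thm 1} directly to the swapped partition $(\complement{U},U)$. The main point to verify is that the hypotheses of Theorem \ref{thm: thm for structure thm 1} transfer under this swap, and that the resulting conclusion about ``$(\complement{U},U)$ being the unique minimum $(S\cup\{t\},U)$-terminal cut'' is literally the same statement as ``$(U,\complement{U})$ being the unique minimum $(U,T\cup\{t\})$-terminal cut'' after a trivial relabeling.

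First, I would observe that the hypergraph cut function is symmetric, so $d(\complement{U})=d(U)<\opt$. Thus the pair $(U',\complement{U'}):=(\complement{U},U)$ is a $2$-partition of $V$ satisfying the hypothesis $d(U')<\opt$ of Theorem \ref{thm: thm for structure thm 1}. Given any vertex $t\in\complement{U}=U'$, I would invoke Theorem \ref{thm: thm for structure thm 1} with source vertex $s':=t\in U'$ to obtain a subset $T\subseteq U'\setminus\{s'\}=\complement{U}\setminus\{t\}$ with $|T|\leq 2k-3$ such that $(U',\complement{U'})$ is the unique minimum $(T\cup\{t\},\complement{U'})$-terminal cut, i.e., the unique minimum $(T\cup\{t\},U)$-terminal cut.

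Finally, I would unpack the definition of terminal cut. A $2$-partition $(X,\complement{X})$ is a $(T\cup\{t\},U)$-terminal cut if and only if $T\cup\{t\}\subseteq X$ and $U\subseteq\complement{X}$, which is equivalent to $U\subseteq\complement{X}$ and $T\cup\{t\}\subseteq \complement{\complement{X}}$. In other words, $(X,\complement{X})$ is a $(T\cup\{t\},U)$-terminal cut precisely when $(\complement{X},X)$ is a $(U,T\cup\{t\})$-terminal cut, and uniqueness is preserved under this correspondence. Therefore the conclusion obtained from Theorem \ref{thm: thm for structure thm 1} is exactly that $(U,\complement{U})$ is the unique minimum $(U,T\cup\{t\})$-terminal cut in $G$, which proves the corollary.

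There is no real obstacle here: the entire argument is an unpacking of the symmetry of the hypergraph cut function $d(U)=d(\complement{U})$ together with the symmetric definition of $(S,T)$-terminal cuts, so the corollary is essentially a restatement of Theorem \ref{thm: thm for structure thm 1} with the roles of source side and sink side exchanged.
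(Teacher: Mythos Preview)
Your proposal is correct and matches the paper's approach exactly: the paper simply states that applying Theorem~\ref{thm: thm for structure thm 1} to $(\complement{U},U)$ yields the corollary, and your write-up just spells out the symmetry of $d$ and of the terminal-cut definition that makes this work.
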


We now restate Theorem \ref{theorem: structure thm 1} and prove it using Theorem \ref{thm: thm for structure thm 1} and Corollary \ref{corollary: coro for structure thm 1}.

\thmStructureOne*
\begin{proof}
Let $s\in U$ and $t\in \complement{U}$.
By Theorem \ref{thm: thm for structure thm 1}, there exists a subset $S\subseteq U\setminus \{s\}$ such that $|S|\leq 2k-3$ and $(U,\complement{U})$ is the unique minimum $(S\cup\{s\},\complement{U})$-terminal cut. By Corollary \ref{corollary: coro for structure thm 1}, there exists a subset $T\subseteq \complement{U}$ such that $|T|\leq 2k-3$ and $(U,\complement{U})$ is the unique minimum $(U,T\cup\{t\})$-terminal cut. 
We now show that 
$(U,\complement{U})$ is the unique minimum $(S\cup\{s\},T\cup\{t\})$-terminal cut.

We now show that $(U, \complement{U})$ is the unique minimum $(S\cup\{s\},T\cup\{t\})$-terminal cut. 
Let $(Y,\complement{Y})$ be a minimum $(S\cup\{s\},T\cup\{t\})$-terminal cut. Suppose $Y\neq U$. 
We have the following observations:
\begin{enumerate}
    \item Since $(U,\complement{U})$ is a $(S\cup\{s\},T\cup\{t\})$-terminal cut, we have that $d(U)\geq d(Y)$.
    \item Since $(U\cap Y,\complement{U\cap Y})$ is a $(S\cup\{s\},\complement{U})$-terminal cut, we have that $d(U\cap Y)\ge d(U)$. 
    \item Since $(U\cup Y,\complement{U\cup Y})$ is a $(U,T\cup\{t\})$-terminal cut, we have that  $d(U\cup Y)\ge d(U)$.
\end{enumerate}
Moreover, since $Y\neq U$, we have that either $U\cap Y\neq U$ or $U\cup Y\neq U$. Since $(U,\complement{U})$ is the unique minimum $(S\cup\{s\},\complement{U})$-terminal cut and also the unique minimum $(U,T\cup\{t\})$-terminal cut, it follows that 
either $d(U\cap Y)> d(U)$ or $d(U\cup Y)> d(U)$. 
These observations in conjunction with the submodularity of the hypergraph cut function imply that 
\begin{align*}
    2d(U)\geq d(U)+d(Y)\geq d(U\cap Y)+d(U\cup Y)> 2d(U), 
\end{align*}
a contradiction. Hence, $Y=U$. 
\end{proof}

\section{Proof of Theorem \ref{theorem: structure thm 2}} \label{sec:structure-2}

We prove Theorem \ref{theorem: structure thm 2} in this section. We begin with the following useful containment lemma. Variants of this containment lemma have appeared in the literature before under slightly different hypothesis (e.g., see \cite{DJPSY94, GH94, OFN12, CC20}). 

\begin{lemma}\label{lem:Hs_subset_V1}
Let $G=(V,E)$ be a hypergraph, $k \geq 2$ be an integer, $\mathcal{P}=(V_1, \ldots, V_k)$ be a minimum $k$-partition such that $\delta(\mathcal{P}) = \delta(V_1)$, and $S \subseteq V_1$, $T \subseteq \overline{V_1}$ such that $T \cap V_j \neq \emptyset$ for all $j \in \{2, 3, \dots, k\}$. Suppose that $(U, \complement{U})$ is the source minimal minimum $(S,T)$-terminal cut. Then, $U \subseteq V_1$ and $(U, \complement{U})$ is a minimum $(S, \overline{V_1})$-terminal cut.
\end{lemma}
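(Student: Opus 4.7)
The plan is to combine source-minimality with submodularity of the hypergraph cut function $d$, exploiting the hypothesis $\delta(\mathcal{P}) = \delta(V_1)$ to control hyperedges crossing an auxiliary $k$-partition. The target is to show that $(U \cap V_1, \complement{U \cap V_1})$ is also a minimum $(S,T)$-terminal cut, so that source-minimality forces $U \subseteq V_1$.

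First, I would observe that $(U \cap V_1, \complement{U \cap V_1})$ is itself an $(S,T)$-terminal cut: $S \subseteq V_1$ and $S \subseteq U$ give $S \subseteq U \cap V_1$, while $T \subseteq \complement{V_1}$ gives $T \cap (U \cap V_1) = \emptyset$. Hence, by minimality of $d(U)$, we have $d(U) \leq d(U \cap V_1)$. Next, I would construct the auxiliary partition $\mathcal{P}' := (U \cup V_1,\, V_2 \setminus U,\, V_3 \setminus U,\, \ldots,\, V_k \setminus U)$. Disjointness and covering of $V$ are routine, and the hypothesis $T \cap V_j \neq \emptyset$ together with $T \subseteq \complement{U}$ guarantees $V_j \setminus U \neq \emptyset$ for each $j \in \{2, \ldots, k\}$, so $\mathcal{P}'$ is indeed a valid $k$-partition.

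The key computation is $\cost(\mathcal{P}') = d(U \cup V_1)$. Any hyperedge crossing $\mathcal{P}'$ either crosses $(U \cup V_1, \complement{U \cup V_1})$, or it is contained in $\complement{U \cup V_1} = \complement{U} \cap \complement{V_1}$ while having vertices in at least two of $V_2 \setminus U, \ldots, V_k \setminus U$. A hyperedge of the latter type lies entirely in $\complement{V_1}$ yet crosses $\mathcal{P}$, placing it in $\delta(\mathcal{P}) \setminus \delta(V_1)$; this contradicts the hypothesis $\delta(\mathcal{P}) = \delta(V_1)$, so no such hyperedge exists. Combining this with minimality of $\mathcal{P}$ and submodularity of $d$, I obtain
\[
d(U) + d(V_1) \;\geq\; d(U \cap V_1) + d(U \cup V_1) \;=\; d(U \cap V_1) + \cost(\mathcal{P}') \;\geq\; d(U \cap V_1) + d(V_1),
\]
and therefore $d(U) \geq d(U \cap V_1)$. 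Together with the reverse inequality from the first step, $d(U) = d(U \cap V_1)$, so $(U \cap V_1, \complement{U \cap V_1})$ is a minimum $(S,T)$-terminal cut. Source-minimality of $(U, \complement{U})$ then forces $U \subseteq U \cap V_1$, i.e., $U \subseteq V_1$.

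For the second conclusion, I would observe that every $(S, \complement{V_1})$-terminal cut is automatically an $(S,T)$-terminal cut (because $T \subseteq \complement{V_1}$), so the minimum $(S, \complement{V_1})$-terminal cut value is at least $d(U)$. Since $U \subseteq V_1$ implies $\complement{V_1} \subseteq \complement{U}$, the cut $(U, \complement{U})$ is itself an $(S, \complement{V_1})$-terminal cut, and hence attains this minimum. The main obstacle I anticipate is the equality $\cost(\mathcal{P}') = d(U \cup V_1)$, which is the one step that genuinely leverages $\delta(\mathcal{P}) = \delta(V_1)$; the rest is routine bookkeeping with submodularity and source-minimality.
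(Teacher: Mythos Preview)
Your proposal is correct and follows essentially the same approach as the paper's proof: both construct the auxiliary $k$-partition $\mathcal{P}'=(U\cup V_1,\,V_2\setminus U,\ldots,V_k\setminus U)$, use the hypothesis $\delta(\mathcal{P})=\delta(V_1)$ to show $\cost(\mathcal{P}')=d(U\cup V_1)$, and combine this with submodularity and source-minimality to force $U\subseteq V_1$. The argument for the second conclusion is also identical.
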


\begin{proof}
We note that $S \subseteq U \cap V_1$, so $(U \cap V_1, \overline{U \cap V_1})$ is a $(S,T)$-terminal cut. Thus, we have 
\begin{equation}\label{eqn:Hs_cap_V1_bound}
    d(U \cap V_1) \geq d(U).
\end{equation}

\begin{figure}[htb]
\centering
\includegraphics[width=0.5\textwidth]{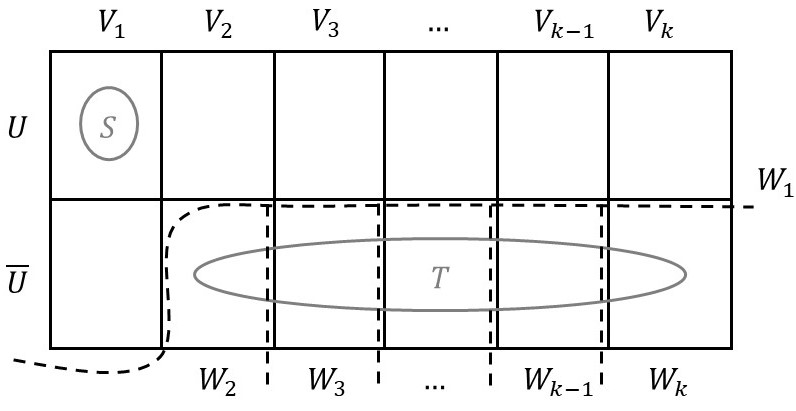}
\caption{Uncrossing in the proof of Lemma \ref{lem:Hs_subset_V1}. }
\label{figure:uncrossing-for-V_S-inside-V_1}
\end{figure}

Now consider $\mathcal{P}' = (W_1:=U \cup V_1, W_2:=V_2 \setminus U, \ldots, W_k:=V_k \setminus U)$ (see Figure \ref{figure:uncrossing-for-V_S-inside-V_1}). For each $i \in \{2, 3, \ldots, k\}$, we have $\emptyset \neq T \cap V_i \subseteq V_i \setminus U$, so $\mathcal{P}'$ is a $k$-partition. Since $\delta(\mathcal{P}) = \delta(V_1)$, every hyperedge which crosses $\mathcal{P}$ must intersect $V_1$. Consequently, every hyperedge which crosses $\mathcal{P}'$ must intersect $U \cup V_1$. Therefore 
\begin{equation}\label{eqn:Hs_cup_V1_bound}
    d(U \cup V_1) =\cost(\mathcal{P}') \geq \cost(\mathcal{P}) = d(V_1).
\end{equation}
By submodularity of the hypergraph cut function and inequalities (\ref{eqn:Hs_cap_V1_bound}) and (\ref{eqn:Hs_cup_V1_bound}), we have that 
\[
d(U) + d(V_1) \geq d(U \cap V_1) + d(U \cup V_1) \geq d(U) + d(V_1).
\]
Therefore, inequality (\ref{eqn:Hs_cap_V1_bound}) is in fact an equation and hence, $(U \cap V_1, \overline{U \cap V_1})$ is a minimum $(S,T)$-terminal cut. 
If $U\setminus V_1\neq \emptyset$, then $(U\cap V_1, \complement{U \cap V_1})$ contradicts source minimality of the minimum $(S,T)$-terminal cut $(U, \complement{U})$. Hence, $U\setminus V_1=\emptyset$ and consequently, $U\subseteq V_1$.

Since $U \subseteq V_1$, we have that $(U, \overline{U})$ is a $(S, \overline{V_1})$-terminal cut. Furthermore, since $T \subseteq \overline{V_1}$, every $(S,\overline{V_1})$-terminal cut is also a $(S,T)$-terminal cut. Therefore, every $(S, \overline{V_1})$-terminal cut must have weight at least $d(U)$, and hence $(U, \overline{U})$ is a minimum $(S,\overline{V_1})$-terminal cut.
\end{proof}

We now restate and prove Theorem \ref{theorem: structure thm 2}. 
\thmStructureTwo*
\begin{proof}
Let us fix an arbitrary $T\subseteq \overline{V_1}$ such that $T \cap V_j \neq \emptyset$ for all $j \in \{2, \dots, k\}$. 
For a subset $X \subseteq V_1$, we denote the source minimal minimum $(X,T)$-terminal cut by $(H_X,\overline{H_X})$. By Lemma \ref{lem:Hs_subset_V1}, for all $X\subseteq V_1$ we have that $H_X\subseteq V_1$ and $d(H_X)\le d(V_1)$. If $|V_1|\le 2k-1$, then choosing $S=V_1$ proves the theorem. So, we will assume henceforth that $|V_1| > 2k-1$.
We will show that there exists a subset $S\subseteq V_1$ with $|S|\le 2k-1$ such that the source minimal minimum $(S,T)$-terminal cut $(H_S, \complement{H_S})$ satisfies $\delta(H_S)=\delta(V_1)$. This suffices since we have that $H_S\subseteq V_1$ for all subsets $S\subseteq V_1$ (by Lemma \ref{lem:Hs_subset_V1}).

For the sake of contradiction, suppose that 
for every $S\subseteq V_1$ with $|S|\le 2k-1$, the source minimal minimum $(S,T)$-terminal cut $(H_S, \complement{H_S})$ does not satisfy $\delta(H_S)=\delta(V_1)$. Our proof strategy is to obtain a cheaper $k$-partition than $(V_1, \ldots, V_k)$, thereby contradicting the optimality of $(V_1, \ldots, V_k)$. 

Let $S \subseteq V_1$ be a set of size $2k-1$ such that $H_S$ is maximal---i.e., there does not exist $S'\subseteq V_1$ of size $2k-1$ such that $H_{S'}\supsetneq H_S$. 
Let $S := \{u_1, u_2, \dots, u_{2k-1}\}$.
By assumption, we have that $\delta(H_S) \neq \delta(V_1)$, but since $(V_1, \complement{V_1})$ is a $(S,T)$-terminal cut, we have that $d(H_S) \leq d(V_1)$. Therefore, $\delta(V_1) \setminus \delta(H_S)$ is non-empty. Let us fix a hyperedge $e \in \delta(V_1) \setminus \delta(H_S)$. Let $u_{2k} \in e \cap V_1$. Let $C := S \cup \{u_{2k}\}=\{u_1, \ldots, u_{2k-1}, u_{2k}\}$. For notational convenience we will use $C-u_i$ to denote $C \setminus \{u_i\}$ and $C-u_i-u_j$ to denote $C \setminus \{u_i,u_j\}$ for all $i, j\in [2k]$. The choice of the hyperedge $e$ is crucial to our proof---its properties will be used much later in our proof. We summarize the properties of the hyperedge $e$ here. 

\begin{observation}\label{obs:properties_of_e}
The hyperedge $e$ has the following properties: 
\begin{enumerate}
    \item $e \cap \overline{V_1} \neq \emptyset$, 
    \item $u_{2k} \in e$, and
    \item $e \subseteq \overline{H_S}$. 
\end{enumerate}
\end{observation}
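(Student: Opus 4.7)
The plan is that all three properties follow quite directly from the way $e$ and $u_{2k}$ were introduced just above the observation, combined with Lemma \ref{lem:Hs_subset_V1}. I would not need any uncrossing or new structural machinery here; the three claims are essentially bookkeeping about the fact that a hyperedge in $\delta(V_1)\setminus\delta(H_S)$ must straddle $(V_1,\complement{V_1})$ while staying on one side of $(H_S,\complement{H_S})$. I would dispatch the three items in the order they are listed.

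First, for $e\cap\complement{V_1}\neq\emptyset$: since $e\in\delta(V_1)\setminus\delta(H_S)\subseteq\delta(V_1)$, the hyperedge $e$ crosses the $2$-partition $(V_1,\complement{V_1})$, which by definition means $e$ meets both $V_1$ and its complement. Second, for $u_{2k}\in e$: this is literally the defining property of $u_{2k}$, which was picked from the non-empty set $e\cap V_1$ (non-empty again because $e\in\delta(V_1)$), so there is nothing to prove.

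The third property is the only one that requires more than unwinding definitions, but it is still short. Because $e\notin\delta(H_S)$, the hyperedge $e$ does not cross the cut $(H_S,\complement{H_S})$, so $e$ is fully contained in exactly one of $H_S$ or $\complement{H_S}$. To rule out $e\subseteq H_S$, I would invoke Lemma \ref{lem:Hs_subset_V1}, which gives $H_S\subseteq V_1$ and hence $\complement{V_1}\subseteq \complement{H_S}$. Combining this with the first property, $e$ contains some vertex of $\complement{V_1}\subseteq\complement{H_S}$, so $e$ cannot lie inside $H_S$. The only remaining option is $e\subseteq\complement{H_S}$, as claimed.

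The only potential snag is making sure that Lemma \ref{lem:Hs_subset_V1} is legitimately available at this point: it requires $S\subseteq V_1$, $T\subseteq\complement{V_1}$, and $T\cap V_j\neq\emptyset$ for all $j\in\{2,\ldots,k\}$, all of which are part of the standing setup of the proof of Theorem \ref{theorem: structure thm 2}. Once that is checked, the observation drops out immediately, and I would not expect any real obstacle in writing up the argument.
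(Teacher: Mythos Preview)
Your proposal is correct and matches the paper's intent: the paper states this observation without proof, treating all three properties as immediate from the way $e$ and $u_{2k}$ were chosen together with the already-established containment $H_S\subseteq V_1$ from Lemma~\ref{lem:Hs_subset_V1}. Your write-up is exactly the routine verification one would supply to fill in the details.
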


Our strategy to arrive at a cheaper $k$-partition than $(V_1, \ldots, V_k)$ is to apply the second conclusion of Theorem \ref{theorem:hypergraph-uncrossing}. The next few claims will set us up to obtain sets that satisfy the hypothesis of Theorem \ref{theorem:hypergraph-uncrossing}.
\begin{claim}\label{clm:ui_not_in_Ci}
For every $i \in [2k]$, we have $u_i \not\in H_{C - u_i}$.
\end{claim}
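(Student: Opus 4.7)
My plan is to split into two cases based on whether $i = 2k$ or $i \in [2k-1]$, handling the easy case directly from Observation \ref{obs:properties_of_e} and the main case via submodular uncrossing against the maximality of $H_S$.

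For $i = 2k$, note that $C - u_{2k} = S$, so $H_{C - u_{2k}} = H_S$. By parts (2) and (3) of Observation \ref{obs:properties_of_e}, $u_{2k} \in e$ and $e \subseteq \complement{H_S}$, so $u_{2k} \notin H_S$, as required.

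For $i \in [2k-1]$, I would argue by contradiction: assume $u_i \in H_{C - u_i}$. The key point is that $C - u_i$ still contains $u_{2k}$ (since $i \neq 2k$) and together with the assumption also contains $u_i$, so $C \subseteq H_{C-u_i}$; in particular $S \subseteq H_{C-u_i}$. Then the sets $H_S \cap H_{C-u_i}$ and $H_S \cup H_{C-u_i}$ form a $(S,T)$-terminal cut and a $(C-u_i,T)$-terminal cut respectively (using that $T$ is disjoint from both $H_S$ and $H_{C-u_i}$). Submodularity of the hypergraph cut function then forces the chain
\[
d(H_S) + d(H_{C-u_i}) \ge d(H_S \cap H_{C-u_i}) + d(H_S \cup H_{C-u_i}) \ge d(H_S) + d(H_{C-u_i})
\]
to be an equality, so $H_S \cap H_{C-u_i}$ is itself a minimum $(S,T)$-terminal cut. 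Source minimality of $H_S$ then gives $H_S \subseteq H_S \cap H_{C-u_i}$, i.e., $H_S \subseteq H_{C-u_i}$.

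To finish, I would observe that this containment is strict: $u_{2k} \in C - u_i \subseteq H_{C-u_i}$, whereas $u_{2k} \notin H_S$ by the $i=2k$ analysis above. But $C - u_i$ is a $(2k-1)$-subset of $V_1$, so the strict containment $H_S \subsetneq H_{C-u_i}$ contradicts the choice of $S$ as a $(2k-1)$-subset of $V_1$ for which $H_S$ is maximal. I expect the routine verification that $H_S \cap H_{C-u_i}$ and $H_S \cup H_{C-u_i}$ are valid terminal cuts (and in particular the bookkeeping showing $u_{2k}$ lies in $H_{C-u_i}$ but not $H_S$) to be the only subtle point; everything else is standard submodular uncrossing plus the maximality selection used to set up the contradiction.
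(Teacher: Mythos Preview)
Your proposal is correct and follows essentially the same argument as the paper: the $i=2k$ case is handled via Observation \ref{obs:properties_of_e}, and for $i\in[2k-1]$ you uncross $H_S$ with $H_{C-u_i}$ via submodularity, use source minimality of $H_S$ to get $H_S\subseteq H_{C-u_i}$, and then observe $u_{2k}\in H_{C-u_i}\setminus H_S$ to obtain a strict containment that contradicts the maximal choice of $S$. The only cosmetic difference is that you note $C\subseteq H_{C-u_i}$ whereas the paper directly writes $S\subseteq H_{C-u_i}$, but the content is identical.
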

\begin{proof}
If $i = 2k$, then by Observation \ref{obs:properties_of_e} we have $u_{2k} \in e$ and $e \subseteq \overline{H_S}$ so $u_{2k} \not\in H_S = H_{C - u_{2k}}$. Suppose $i \in [2k-1]$. Our proof will rely on the choice of $S$.

Suppose for contradiction that $u_i \in H_{C - u_i}$ for some $i \in [2k-1]$. Then, we have that $S \subseteq H_{C - u_i}$, so $(H_{C - u_i} \cap H_S, \overline{H_{C - u_i} \cap H_S})$ 
is a $(S,T)$-terminal cut. Therefore, 
\begin{equation}\label{eqn:ui_not_in_Ci_eq1}
    d(H_{C - u_i} \cap H_S) \geq d(H_S).
\end{equation}
Also, since $(H_{C - u_i} \cup H_S, \overline{H_{C - u_i} \cup H_S})$ is a $(C - u_i, T)$-terminal cut, we have that
\begin{equation}\label{eqn:ui_not_in_Ci_eq2}
    d(H_{C - u_i} \cup H_S) \geq d(H_{C - u_i}).
\end{equation}
By submodularity of the hypergraph cut function and inequalities (\ref{eqn:ui_not_in_Ci_eq1}) and (\ref{eqn:ui_not_in_Ci_eq2}), we have that 
\[
d(H_S) + d(H_{C - u_i}) \geq d(H_{C - u_i} \cap H_S) + d(H_{C - u_i} \cup H_S) \geq d(H_S) + d(H_{C - u_i}).
\]
Therefore, inequality (\ref{eqn:ui_not_in_Ci_eq1}) is an equation, and consequently,  $(H_{C - u_i} \cap H_S, \overline{H_{C - u_i} \cap H_S})$ is a minimum $(S,T)$-terminal cut. If $H_{C - u_i} \cap H_S \subsetneq H_S$, then $(H_{C - u_i} \cap H_S, \overline{H_{C - u_i} \cap H_S})$ contradicts source minimality of the minimum $(S,T)$-terminal cut $(H_S, \complement{H_S})$. Therefore $H_{C - u_i}\cap H_S=H_S$ and hence, $H_S \subseteq H_{C - u_i}$. Also, the vertex $u_{2k}$ is in $C - u_i$ but not in  $H_S$ and hence, $H_S \subsetneq H_{C - u_i}$. However, $|C - u_i| = 2k-1$. Therefore, the set $C - u_i$ contradicts the choice of $S$.
\end{proof}

The following claim will help in showing that $u_i, u_j\not\in H_{C-u_i-u_j}$, which in turn, will be used to show that the hypothesis of Theorem \ref{theorem:hypergraph-uncrossing} is satisfied by suitably chosen sets.
\begin{claim}\label{clm:Cij_subset_Ci}
For every $i,j \in [2k]$, we have $H_{C - u_i-u_j} \subseteq H_{C - u_i}$.
\end{claim}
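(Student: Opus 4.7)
The plan is to apply a standard submodular uncrossing argument to the two sets $A := H_{C-u_i-u_j} \cap H_{C-u_i}$ and $B := H_{C-u_i-u_j} \cup H_{C-u_i}$, and then conclude via source minimality of $H_{C-u_i-u_j}$. This is the same template used in Lemma \ref{lem:Hs_subset_V1} and in Claim \ref{clm:ui_not_in_Ci}.

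First I would check that $A$ is a $(C-u_i-u_j,\,T)$-terminal cut and $B$ is a $(C-u_i,\,T)$-terminal cut. Indeed, since $C-u_i-u_j \subseteq C-u_i$, both $H_{C-u_i-u_j}$ and $H_{C-u_i}$ contain $C-u_i-u_j$ on the source side, so $A$ does too; and $T$ sits on the sink side of both, hence on the sink side of $A$. Similarly, $B$ contains $C-u_i$ (since $H_{C-u_i}$ does) and has $T$ on its sink side. Therefore, by minimality of $H_{C-u_i-u_j}$ and $H_{C-u_i}$ as minimum terminal cuts for their respective terminal pairs, we get $d(A) \geq d(H_{C-u_i-u_j})$ and $d(B) \geq d(H_{C-u_i})$. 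Combining with submodularity of the hypergraph cut function,
\[
d(H_{C-u_i-u_j}) + d(H_{C-u_i}) \;\geq\; d(A) + d(B) \;\geq\; d(H_{C-u_i-u_j}) + d(H_{C-u_i}),
\]
so both inequalities are tight; in particular, $A$ is itself a minimum $(C-u_i-u_j,\,T)$-terminal cut.

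To finish, I would invoke source minimality: $A \subseteq H_{C-u_i-u_j}$, and if the containment were strict, then $A$ would be the source side of a minimum $(C-u_i-u_j,\,T)$-terminal cut strictly smaller than $H_{C-u_i-u_j}$, contradicting source minimality of $H_{C-u_i-u_j}$. Hence $A = H_{C-u_i-u_j}$, which is precisely $H_{C-u_i-u_j} \subseteq H_{C-u_i}$. I do not anticipate any real obstacle here; the only care needed is to align the correct source-minimal cut (the one for $C-u_i-u_j$) with the intersection $A$ rather than with $B$, and to note that no appeal to the stronger partition-uncrossing Theorem \ref{theorem:hypergraph-uncrossing} is needed---ordinary two-set submodular uncrossing suffices.
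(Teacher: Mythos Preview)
Your proposal is correct and is essentially identical to the paper's own proof: define the intersection and union, verify they are $(C-u_i-u_j,T)$- and $(C-u_i,T)$-terminal cuts respectively, combine the two minimality lower bounds with submodularity to force equality, and then use source minimality of $H_{C-u_i-u_j}$ to conclude the containment. The only cosmetic difference is that the paper explicitly notes the trivial case $i=j$ at the outset.
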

\begin{proof}
We may assume that $i\neq j$. 
We note that $(H_{C - u_i-u_j} \cap H_{C - u_i}, \overline{ H_{C - u_i-u_j} \cap H_{C - u_i } })$ is a $(C - u_i-u_j, T)$-terminal cut. Therefore
\begin{equation}\label{eqn:Cij_subset_Ci_eq1}
    d(H_{C - u_i-u_j} \cap H_{C - u_i}) \geq d(H_{C - u_i-u_j}).
\end{equation}
Also, $(H_{C - u_i-u_j} \cup H_{C - u_i}, \overline{ H_{C - u_i-u_j} \cup H_{C - u_i } })$ is a $(C - u_i, T)$-terminal cut. Therefore
\begin{equation}\label{eqn:Cij_subset_Ci_eq2}
    d(H_{C - u_i-u_j} \cup H_{C - u_i}) \geq d(H_{C - u_i}).
\end{equation}
By submodularity of the hypergraph cut function and inequalities (\ref{eqn:Cij_subset_Ci_eq1}) and (\ref{eqn:Cij_subset_Ci_eq2}), we have that 
\begin{align*}
d(H_{C - u_i-u_j}) + d(H_{C - u_i}) &\geq d(H_{C - u_i-u_j} \cap H_{C - u_i}) + d(H_{C - u_i-u_j} \cup H_{C - u_i}) \\ 
&\geq d(H_{C - u_i-u_j}) + d(H_{C - u_i}).
\end{align*}

Therefore, inequality (\ref{eqn:Cij_subset_Ci_eq1}) is an equation, and consequently, $(H_{C - u_i-u_j} \cap H_{C - u_i}, \overline{ H_{C - u_i-u_j} \cap H_{C - u_i } })$ is a minimum $(C - u_i-u_j, T)$-terminal cut. 
If $H_{C - u_i-u_j} \setminus H_{C - u_i} \neq \emptyset$, then \\ $(H_{C - u_i-u_j} \cap H_{C - u_i}, \complement{H_{C - u_i-u_j} \cap H_{C - u_i}})$ contradicts source minimality of the minimum $(C - u_i-u_j,T)$-terminal cut $(H_{C - u_i-u_j}, \complement{H_{C - u_i-u_j}})$. Hence, $H_{C - u_i-u_j} \setminus H_{C - u_i}=\emptyset$ and consequently, $H_{C - u_i-u_j} \subseteq H_{C - u_i}$.
\end{proof}

Claim \ref{clm:Cij_subset_Ci} implies the following Corollary.
\begin{corollary}\label{cor:uij_not_in_Cij}
For every $i \in [2k]$, we have $u_i, u_j \not\in H_{C - u_i- u_j}$.
\end{corollary}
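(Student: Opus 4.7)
The plan is to derive this as an immediate consequence of the two preceding claims; no new work beyond combining them is needed, and I do not anticipate any obstacle.

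Fix arbitrary $i, j \in [2k]$ (the interesting case is $i \neq j$; otherwise the statement collapses to Claim \ref{clm:ui_not_in_Ci}). The idea is that Claim \ref{clm:Cij_subset_Ci} gives the monotonicity $H_{C - u_i - u_j} \subseteq H_{C - u_i}$ as we shrink the source set, while Claim \ref{clm:ui_not_in_Ci} ensures that the dropped vertex $u_i$ is not in $H_{C - u_i}$. Chaining these together shows $u_i \notin H_{C - u_i - u_j}$.

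First I would apply Claim \ref{clm:Cij_subset_Ci} with the roles $(i,j)$ to get $H_{C - u_i - u_j} \subseteq H_{C - u_i}$, and then invoke Claim \ref{clm:ui_not_in_Ci} to conclude $u_i \notin H_{C - u_i - u_j}$. Then I would repeat this step after swapping the roles of $i$ and $j$: applying Claim \ref{clm:Cij_subset_Ci} with $(j,i)$ gives $H_{C - u_j - u_i} \subseteq H_{C - u_j}$, which together with Claim \ref{clm:ui_not_in_Ci} yields $u_j \notin H_{C - u_j - u_i}$. Since $C - u_i - u_j = C - u_j - u_i$, the two conclusions together give $u_i, u_j \notin H_{C - u_i - u_j}$, as required.
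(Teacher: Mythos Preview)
Your proposal is correct and matches the paper's own argument: both combine Claim~\ref{clm:ui_not_in_Ci} (that $u_i \notin H_{C-u_i}$) with Claim~\ref{clm:Cij_subset_Ci} (that $H_{C-u_i-u_j} \subseteq H_{C-u_i}$) applied once for $i$ and once for $j$. The only cosmetic difference is that the paper phrases the conclusion via the intersection $H_{C-u_i}\cap H_{C-u_j}$, whereas you handle the two indices separately.
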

\begin{proof}
By Claim \ref{clm:ui_not_in_Ci}, we have that $u_i \not\in H_{C - u_i}$ and $u_j \not\in H_{C - u_j}$. Therefore, $u_i, u_j \not\in H_{C - u_i} \cap H_{C - u_j}$. By Claim \ref{clm:Cij_subset_Ci}, $H_{C -u_i- u_j} \subseteq H_{C - u_i}$ and $H_{C -u_i- u_j} \subseteq H_{C - u_j}$. Therefore, $H_{C -u_i- u_j} \subseteq H_{C - u_i} \cap H_{C - u_j}$, and thus, $u_i,u_j \not\in H_{C - u_i-u_j}$.
\end{proof}

The next claim will help in controlling the cost of the $k$-partition that we will obtain by applying Theorem \ref{theorem:hypergraph-uncrossing}.
\begin{claim}\label{clm:ci_eq_v1}
For every $i,j \in [2k]$, we have $d(H_{C - u_i}) = d(V_1) = d(H_{C - u_i-u_j})$.
\end{claim}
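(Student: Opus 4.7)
The plan is to show the two equalities in turn, each by contradiction via Theorem \ref{theorem:hypergraph-uncrossing}. In both cases the upper bound $d(H_\cdot) \le d(V_1)$ is immediate from Lemma \ref{lem:Hs_subset_V1}, since $V_1$ is a valid source side for the relevant terminal-cut problem. For the matching lower bounds, I will argue that if any $d(H_\cdot) < d(V_1)$, the uncrossing theorem applied to an appropriate family of minimum terminal cuts produces a $k$-partition of strictly smaller cost than $d(V_1) = \cost(\mathcal{P})$, contradicting the optimality of $\mathcal{P}$.

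For the first equality, suppose $d(H_{C-u_i}) < d(V_1)$ for some $i \in [2k]$. Pick any $r \in [2k] \setminus \{i\}$ and apply Theorem \ref{theorem:hypergraph-uncrossing} with $U := V_1$, $R := \{u_r\}$, and $S_{\mathrm{thm}} := C \setminus \{u_r\}$, so that $p = 2k-1 \ge 2k-2$ and $\emptyset \neq R \subsetneq U \subsetneq V$ hold under the standing assumption $|V_1| \ge 2k$. For each $l \in [2k] \setminus \{r\}$, set $A_l := \complement{H_{C-u_l}}$; I claim this is a valid minimum $((S_{\mathrm{thm}} \cup R) \setminus \{u_l\}, \complement{U}) = (C-u_l, \complement{V_1})$-terminal cut. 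Indeed, Lemma \ref{lem:Hs_subset_V1} gives $H_{C-u_l} \subseteq V_1$, so the sink side contains $\complement{V_1}$ and the value equals the minimum $(C-u_l, T)$-cut value; moreover $u_r \in C - u_l \subseteq H_{C-u_l}$, so augmenting the source set by $\{u_r\}$ does not change the minimum. The location hypothesis holds because Claim \ref{clm:ui_not_in_Ci} gives $u_l \in A_l$, while $u_l \in C - u_{l'} \subseteq H_{C-u_{l'}}$ for $l' \neq l$ gives $u_l \notin A_{l'}$. The theorem then produces a $k$-partition $(P_1, \ldots, P_k)$ of cost at most $\tfrac{1}{2} \min_{l,l' \in [2k] \setminus \{r\},\, l \neq l'} (d(H_{C-u_l}) + d(H_{C-u_{l'}}))$; taking $l = i$ and any $l' \in [2k] \setminus \{i, r\}$, and using $d(H_{C-u_i}) < d(V_1)$ together with $d(H_{C-u_{l'}}) \le d(V_1)$, this cost is strictly less than $d(V_1)$, contradicting optimality of $\mathcal{P}$.

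For the second equality, fix distinct $i, j \in [2k]$ and suppose $d(H_{C-u_i-u_j}) < d(V_1)$. Pick any $r \in [2k] \setminus \{i, j\}$ (possible since $2k \ge 4$) and apply Theorem \ref{theorem:hypergraph-uncrossing} with $U := V_1$, $R := \{u_r\}$, and $S_{\mathrm{thm}} := C \setminus \{u_i, u_r\}$, so that $p = 2k-2$. Setting $A_l := \complement{H_{C-u_i-u_l}}$ for $l \in [2k] \setminus \{i, r\}$ gives minimum $(C-u_i-u_l, \complement{V_1})$-terminal cuts by the same reasoning as above (using $u_r \in C-u_i-u_l$). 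The location hypothesis is verified via Corollary \ref{cor:uij_not_in_Cij} (yielding $u_l \notin H_{C-u_i-u_l}$) together with $u_l \in C - u_i - u_{l'} \subseteq H_{C-u_i-u_{l'}}$ for distinct $l, l' \in [2k] \setminus \{i, r\}$. Since $j \in [2k] \setminus \{i, r\}$, taking $l = j$ in the uncrossing theorem's bound again yields a $k$-partition of cost strictly less than $d(V_1)$, contradicting optimality.

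The subtle technical point throughout is that the source-minimal minimum $(X, T)$-terminal cuts $H_X$ must legitimately play the role of the theorem's minimum terminal cuts, whose sink side is required to contain all of $\complement{V_1}$ (a superset of $T$) and whose source side is augmented by $\{u_r\}$. Both reductions rely on Lemma \ref{lem:Hs_subset_V1} (ensuring $H_X \subseteq V_1$) and on the fact that $u_r$ already lies in the prescribed source set, so the minimum cut value is preserved under both enlargements.
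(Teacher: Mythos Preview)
Your proof is correct and follows essentially the same route as the paper: apply Theorem~\ref{theorem:hypergraph-uncrossing} to the family of minimum $(C-u_i-u_l,\overline{V_1})$-terminal cuts (verified via Lemma~\ref{lem:Hs_subset_V1} and Corollary~\ref{cor:uij_not_in_Cij}) to contradict optimality of $\mathcal{P}$. The only difference is that the paper dispatches both equalities at once by observing the sandwich $d(H_{C-u_a-u_b}) \le d(H_{C-u_a}) \le d(V_1)$, so that proving $d(H_{C-u_a-u_b}) \ge d(V_1)$ already yields $d(H_{C-u_a}) = d(V_1)$; your separate $p=2k-1$ uncrossing for the first equality is correct but redundant.
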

\begin{proof}
Let $a,b\in [2k]$. 
Since $(V_1, \overline{V_1})$ is a $(C - u_a,T)$-terminal cut, we have that $d(H_{C - u_a}) \leq d(V_1)$.
Since $(H_{C - u_a}, \overline{H_{C - u_a}})$ is a $(C - u_a-u_b,T)$-terminal cut, we have that $d(H_{C - u_a-u_b}) \leq d(H_{C - u_a}) \leq d(V_1)$. Thus, in order to prove the claim, it suffices to show that $d(H_{C - u_a-u_b}) \geq d(V_1)$.

Suppose for contradiction that $d(H_{C - u_a-u_b}) < d(V_1)$. Let $\ell \in [2k] \setminus \{a,b\}$ be an arbitrary element (which exists since $k\ge 2$). Let $R:= \{u_{\ell}\}$, $U:=V_1$, $S' := C - u_a - u_{\ell}$, and $A_i := \overline{H_{C -u_a-u_i}}$ for every $i \in [2k] \setminus \{a,\ell\}$. 
We note that $|S'|= 2k-2$. 
By Lemma \ref{lem:Hs_subset_V1}, we have that $(\complement{A_i}, A_i)$ is a minimum $(C -u_a -u_i, \overline{V_1})$-terminal cut for every $i \in [2k] \setminus \{a,\ell\}$. 
Moreover, 
by Corollary \ref{cor:uij_not_in_Cij}, 
we have that 
$u_i\in A_i\setminus (\cup_{j\in [2k]\setminus \{a, i, \ell\}}A_j)$ 
for every $i \in [2k] \setminus \{a,\ell\}$. 
Hence, the sets $U$, $R$, and $S'$, and the cuts $(\complement{A_i}, A_i)$ for $i\in [2k] \setminus \{a, \ell\}$ satisfy the conditions of Theorem \ref{theorem:hypergraph-uncrossing}. Therefore, by the first conclusion of Theorem \ref{theorem:hypergraph-uncrossing}, there exists a $k$-partition $\mathcal{P}'$ with 
\[\cost(\mathcal{P}') \leq \frac{1}{2}\min\{d(H_{C -u_a- u_i})+d(H_{C -u_a-u_j}) \colon i,j \in [2k] \setminus \{a,\ell\} \}.\]
By assumption, $d(H_{C - u_a-u_b}) < d(V_1)$ and $b\in [2k]\setminus \{a, \ell\}$, so $\min\{d(H_{C -u_a-u_i})\colon i \in [2k] \setminus \{a,\ell\} \} < d(V_1)$. Since $(V_1, \overline{V_1})$ is a $(C - u_a-u_i, T)$-terminal cut, we have that $d(H_{C -u_a-u_i}) \leq d(V_1)$ for every $i \in [2k] \setminus \{a,\ell\}$. Therefore,
\[\frac{1}{2}\min\{d(H_{C -u_a- u_i})+d(H_{C -u_a-u_j}) \colon i,j \in [2k] \setminus \{a,\ell\} \} < d(V_1) = \cost(\mathcal{P}).\]
Thus, we have that $\cost(\mathcal{P'}) < \cost(\mathcal{P})$, which is a contradiction, since $\mathcal{P}$ is a minimum $k$-partition.
\end{proof}

The next two claims will help in arguing properties about the hyperedge $e$ which will allow us to use the second conclusion of Theorem \ref{theorem:hypergraph-uncrossing}. In particular, we will need Claim \ref{clm:Cl_subset_CiCj}. The following claim will help in proving Claim \ref{clm:Cl_subset_CiCj}.
\begin{claim}\label{clm:Ci_cap_cup_eq_V1}
For every $i,j \in [2k]$, we have 
\[d(H_{C - u_i} \cap H_{C - u_j}) = d(V_1) = d(H_{C - u_i} \cup H_{C - u_j}).\]
\end{claim}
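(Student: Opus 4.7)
\medskip
\noindent\textbf{Proof proposal.} The plan is to sandwich $d(H_{C-u_i}\cap H_{C-u_j}) + d(H_{C-u_i}\cup H_{C-u_j})$ between two copies of $2d(V_1)$ and force equality throughout via submodularity. Write $A := H_{C-u_i}$ and $B := H_{C-u_j}$. If $i=j$ the claim collapses to Claim \ref{clm:ci_eq_v1}, so assume $i\neq j$. Since $d(A) = d(B) = d(V_1)$ by Claim \ref{clm:ci_eq_v1}, submodularity of the hypergraph cut function gives
\[
2d(V_1) \;=\; d(A) + d(B) \;\ge\; d(A\cap B) + d(A\cup B),
\]
so it suffices to show $d(A\cap B) \ge d(V_1)$ and $d(A\cup B) \ge d(V_1)$ separately.

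For the intersection, I would check that $(A\cap B, \overline{A\cap B})$ is a $(C-u_i-u_j, T)$-terminal cut. The source side contains $(C-u_i)\cap(C-u_j) = C-u_i-u_j$ because $C-u_i\subseteq A$ and $C-u_j\subseteq B$, and $T\subseteq \overline{A}\cap\overline{B} = \overline{A\cup B}\subseteq \overline{A\cap B}$ since the cuts defining $A$ and $B$ both place $T$ on their sink sides. Claim \ref{clm:ci_eq_v1} then gives $d(A\cap B) \ge d(H_{C-u_i-u_j}) = d(V_1)$.

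For the union, the assumption $i\neq j$ yields $A\cup B \supseteq (C-u_i)\cup(C-u_j) = C$, while $T\subseteq\overline{A\cup B}$ as above; hence $(A\cup B, \overline{A\cup B})$ is a $(C,T)$-terminal cut. The key observation is that every $(C,T)$-terminal cut is also a $(C-u_i-u_j, T)$-terminal cut, since enlarging the required source-side inclusion from $C-u_i-u_j$ to $C$ only shrinks the feasible region. Therefore its cost is at least $d(H_{C-u_i-u_j}) = d(V_1)$ by Claim \ref{clm:ci_eq_v1}, giving $d(A\cup B) \ge d(V_1)$.

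Plugging these two lower bounds back into the submodular inequality collapses the chain to equality and produces $d(A\cap B) = d(A\cup B) = d(V_1)$, as required. I expect the one subtlety worth flagging is the realization that Claim \ref{clm:ci_eq_v1} lower bounds the cost of \emph{every} $(C-u_i-u_j, T)$-terminal cut and not merely the source-minimal one; once this is noticed, the same lemma simultaneously handles both the intersection and the union, and no further appeal to the partition uncrossing theorem is needed for this claim.
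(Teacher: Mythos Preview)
Your proof is correct and follows essentially the same approach as the paper: lower-bound $d(A\cap B)$ and $d(A\cup B)$ by $d(V_1)$ via Claim~\ref{clm:ci_eq_v1} and then use submodularity to force equality. The only cosmetic difference is that the paper bounds the union by observing it is a $(C-u_j,T)$-terminal cut (so $d(A\cup B)\ge d(H_{C-u_j})=d(V_1)$ directly, without the $i\neq j$ case split), whereas you route through $(C,T)$ and then back to $(C-u_i-u_j,T)$; both work.
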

\begin{proof}
Since $(H_{C - u_i} \cap H_{C - u_j}, \overline{H_{C - u_i} \cap H_{C - u_j}} )$ is a $(C - u_i-u_j, T)$-terminal cut, we have that $d(H_{C - u_i} \cap H_{C - u_j}) \geq d(H_{C - u_i-u_j})$. By Claim \ref{clm:ci_eq_v1}, we have that $d(H_{C - u_i-u_j}) = d(V_1) = d(H_{C - u_i})$. Therefore,
\begin{equation}\label{eq:Ci_cap_cup_eq_V1_eq1}
    d(H_{C - u_i} \cap H_{C - u_j}) \geq d(H_{C - u_i}).
\end{equation}
Since $(H_{C - u_i} \cup H_{C - u_j}, \overline{H_{C - u_i} \cup H_{C - u_j}})$ is a $(C - u_j, T)$-terminal cut, we have that
\begin{equation}\label{eq:Ci_cap_cup_eq_V1_eq2}
    d(H_{C - u_i} \cup H_{C - u_j}) \geq d(H_{C - u_j}).
\end{equation}
By submodularity of the hypergraph cut function and inequalities (\ref{eq:Ci_cap_cup_eq_V1_eq1}) and (\ref{eq:Ci_cap_cup_eq_V1_eq2}), we have that
\[
d(H_{C - u_i}) + d(H_{C - u_j}) \geq d(H_{C - u_i} \cap H_{C - u_j}) + d(H_{C - u_i} \cup H_{C - u_j}) \geq d(H_{C - u_i}) + d(H_{C - u_j}).
\]
Therefore, inequalities (\ref{eq:Ci_cap_cup_eq_V1_eq1}) and (\ref{eq:Ci_cap_cup_eq_V1_eq2}) are equations. Thus, by Claim \ref{clm:ci_eq_v1}, we have that
\[
d(H_{C - u_i} \cap H_{C - u_j}) = d(H_{C - u_i}) = d(V_1),
\]
and
\[
d(H_{C - u_i} \cup H_{C - u_j}) = d(H_{C - u_j}) = d(V_1).
\]
\end{proof}

\begin{claim}\label{clm:Cl_subset_CiCj}
For every $i,j,\ell \in [2k]$ with $i \neq j$, we have $H_{C - u_{\ell}} \subseteq H_{C - u_i} \cup H_{C - u_j}$.
\end{claim}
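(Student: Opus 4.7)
The plan is to prove the containment by exhibiting the set $H_{C-u_i} \cup H_{C-u_j}$ as (the source side of) a minimum $(C - u_\ell, T)$-terminal cut and then invoking source minimality of $H_{C-u_\ell}$. The two previously established claims (Claims \ref{clm:ci_eq_v1} and \ref{clm:Ci_cap_cup_eq_V1}) should provide all the cut-value information needed.

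First, I would dispose of the trivial case $\ell \in \{i,j\}$. If $\ell = i$, then $H_{C-u_\ell} = H_{C-u_i} \subseteq H_{C-u_i} \cup H_{C-u_j}$, and symmetrically for $\ell = j$. So we may assume $\ell, i, j$ are pairwise distinct.

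In this case, the first substep is to verify that $(H_{C-u_i} \cup H_{C-u_j},\ \overline{H_{C-u_i} \cup H_{C-u_j}})$ is a $(C - u_\ell, T)$-terminal cut. Since $i \neq j$, we have $(C - u_i) \cup (C - u_j) = C$, and hence $H_{C-u_i} \cup H_{C-u_j} \supseteq (C - u_i) \cup (C - u_j) = C \supseteq C - u_\ell$, so the source side contains $C - u_\ell$. By Lemma \ref{lem:Hs_subset_V1}, both $H_{C-u_i}$ and $H_{C-u_j}$ are contained in $V_1$, so their union is contained in $V_1$ and hence disjoint from $T \subseteq \overline{V_1}$. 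This confirms the terminal-cut property.

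The second substep is to argue it is a \emph{minimum} $(C - u_\ell, T)$-terminal cut. Claim \ref{clm:Ci_cap_cup_eq_V1} gives $d(H_{C-u_i} \cup H_{C-u_j}) = d(V_1)$, and Claim \ref{clm:ci_eq_v1} gives $d(H_{C-u_\ell}) = d(V_1)$, so the minimum $(C - u_\ell, T)$-terminal cut value is exactly $d(V_1)$, which $H_{C-u_i} \cup H_{C-u_j}$ attains. By source minimality of $H_{C-u_\ell}$, every minimum $(C - u_\ell, T)$-terminal cut has source side containing $H_{C-u_\ell}$, and in particular $H_{C-u_\ell} \subseteq H_{C-u_i} \cup H_{C-u_j}$.

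I do not anticipate a significant obstacle, because the two previous claims have already delivered the exact cut-value identities required. The only conceptual step is the recognition that combining Claim \ref{clm:Ci_cap_cup_eq_V1} (which gives the cost $d(V_1)$) with Claim \ref{clm:ci_eq_v1} (which identifies $d(V_1)$ as the minimum $(C-u_\ell, T)$-cut value) promotes $H_{C-u_i} \cup H_{C-u_j}$ from merely being a $(C-u_\ell, T)$-terminal cut to a minimum one, at which point source minimality of $H_{C-u_\ell}$ closes the argument.
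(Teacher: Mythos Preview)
Your proof is correct and takes a somewhat different route from the paper's. The paper argues by setting $Q := H_{C-u_\ell} \setminus (H_{C-u_i} \cup H_{C-u_j})$ and showing via a fresh submodularity computation (uncrossing $H_{C-u_\ell}$ with $H_{C-u_i} \cup H_{C-u_j}$, using Claims \ref{clm:ci_eq_v1} and \ref{clm:Ci_cap_cup_eq_V1} only to bound the union term) that $H_{C-u_\ell} \setminus Q = H_{C-u_\ell} \cap (H_{C-u_i} \cup H_{C-u_j})$ is itself a minimum $(C-u_\ell,T)$-terminal cut, whence $Q=\emptyset$ by source minimality. You instead observe that Claims \ref{clm:ci_eq_v1} and \ref{clm:Ci_cap_cup_eq_V1} already give $d(H_{C-u_i} \cup H_{C-u_j}) = d(V_1) = d(H_{C-u_\ell})$, so $H_{C-u_i} \cup H_{C-u_j}$ is directly a minimum $(C-u_\ell,T)$-terminal cut, and then apply the lattice property that the source-minimal minimum cut is contained in every minimum cut. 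Your route is shorter because it leverages the full strength of Claim \ref{clm:Ci_cap_cup_eq_V1} (the exact value of the union) rather than re-deriving it through another submodularity step; the paper's route is marginally more self-contained in that it uses source minimality only in the form ``no strictly smaller source side is optimal'' rather than the (equivalent, standard) form ``the source-minimal optimum is contained in every optimum.''
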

\begin{proof}
If $\ell = i$ or $\ell = j$ the claim is immediate. Thus, we assume that $\ell\not\in \{i, j\}$.
Let $Q := H_{C -u_{\ell}} \setminus (H_{C - u_i} \cup H_{C - u_j})$. We need to show that $Q = \emptyset$. We will show that $(H_{C - u_{\ell}} \setminus Q, \overline{H_{C - u_{\ell}}\setminus Q})$ is a minimum $(C - u_{\ell}, T)$-terminal cut. Consequently, $Q$ must be empty 
(otherwise, $H_{C - u_{\ell}} \setminus Q \subsetneq H_{C - u_{\ell}}$ and hence, $(H_{C - u_{\ell}} \setminus Q, \overline{H_{C - u_{\ell}}\setminus Q})$ contradicts source minimality of the minimum $(C - u_{\ell}, T)$-terminal cut $(H_{C-u_{\ell}}, \complement{H_{C-u_{\ell}}})$). 
%because $H_{C - u_{\ell}} \setminus Q \subseteq H_{C - u_{\ell}}$ and $(H_{C - u_{\ell}}, \overline{H_{C - u_{\ell}}})$ is the source minimal minimum $(C - u_{\ell}, T)$-terminal cut. 

We now show that $(H_{C - u_{\ell}} \setminus Q, \overline{H_{C - u_{\ell}}\setminus Q})$ is a minimum $(C - u_{\ell}, T)$-terminal cut. 
Since $H_{C - u_{\ell}} \setminus Q = H_{C - u_{\ell}} \cap (H_{C - u_i} \cup H_{C - u_j})$, we have that $C -u_i-u_j-u_{\ell} \subseteq H_{C - u_{\ell}} \setminus Q$. We also know that $u_i$ and $u_j$ are contained in both $H_{C - u_{\ell}}$ and $H_{C - u_i} \cup H_{C - u_j}$. Therefore, $C - u_{\ell} \subseteq H_{C - u_{\ell}} \setminus Q$. Thus, $(H_{C - u_{\ell}} \setminus Q, \overline{H_{C - u_{\ell}} \setminus Q})$ is a $(C - u_{\ell}, T)$-terminal cut. Therefore, 
\begin{equation}\label{eqn:Cl_subset_CiCj_eq1}
  d(H_{C - u_{\ell}} \cap (H_{C - u_i} \cup H_{C - u_j})) = d(H_{C - u_{\ell}} \setminus Q) \geq d(H_{C - u_{\ell}}).  
\end{equation}
We also have that $(H_{C - u_{\ell}} \cup (H_{C - u_i} \cup H_{C - u_j}), \overline{H_{C - u_{\ell}} \cup (H_{C - u_i} \cup H_{C - u_j})})$ is a $(C - u_i, T)$-terminal cut. Therefore, $d(H_{C - u_{\ell}} \cup (H_{C - u_i} \cup H_{C - u_j})) \geq d(H_{C - u_i})$. By Claims \ref{clm:ci_eq_v1} and \ref{clm:Ci_cap_cup_eq_V1}, we have that $d(H_{C - u_i}) = d(V_1)=d(H_{C - u_i} \cup H_{C - u_j})$. Therefore, 
\begin{equation}\label{eqn:Cl_subset_CiCj_eq2}
    d(H_{C - u_{\ell}} \cup (H_{C - u_i} \cup H_{C - u_j})) \geq  d(H_{C - u_i} \cup H_{C - u_j}).
\end{equation}
By submodularity of the hypergraph cut function and inequalities (\ref{eqn:Cl_subset_CiCj_eq1}) and (\ref{eqn:Cl_subset_CiCj_eq2}), we have that
\begin{align*}
d(H_{C - u_{\ell}}) + d(H_{C - u_i} \cup H_{C - u_j}) 
&\geq d(H_{C - u_{\ell}} \cap (H_{C - u_i} \cup H_{C - u_j})) +  d(H_{C - u_{\ell}} \cup (H_{C - u_i} \cup H_{C - u_j})) \\
&\geq d(H_{C - u_{\ell}}) + d(H_{C - u_i} \cup H_{C - u_j}).
\end{align*}
Therefore, inequalities (\ref{eqn:Cl_subset_CiCj_eq1}) and (\ref{eqn:Cl_subset_CiCj_eq2}) are equations, so $(H_{C - u_{\ell}} \setminus Q, \overline{H_{C - u_{\ell}}\setminus Q})$ is a minimum $(C - u_{\ell}, T)$-terminal cut. 
\end{proof}

Let $R:= \{u_{2k}\}$, $U:=V_1$, and let $(\overline{A_i}, A_i) := (H_{C - u_i}, \overline{H_{C - u_i}})$ for every $i \in [2k-1]$. By Lemma \ref{lem:Hs_subset_V1}, we have that $(\complement{A_i}, A_i)$ is a minimum $(C - u_i, \overline{V_1})$-terminal cut for every $i \in [2k-1]$. Moreover, by Claim \ref{clm:ui_not_in_Ci}, we have that $u_i\in A_i\setminus (\cup_{j\in [2k-1]\setminus \{i\}}A_j)$. Hence, the sets $U$, $R$, and $S$, and the cuts $(\complement{A_i}, A_i)$ for $i\in [2k-1]$ satisfy the conditions of Theorem \ref{theorem:hypergraph-uncrossing}. 
We will use the second conclusion of Theorem \ref{theorem:hypergraph-uncrossing}. We now show that the hyperedge $e$ that we fixed at the beginning of the proof satisfies the conditions mentioned in the second conclusion of Theorem \ref{theorem:hypergraph-uncrossing}. We will use Claim \ref{clm:Cl_subset_CiCj} to prove this.
Let $W:=\cup_{1\le i<j\le 2k-1}(A_i\cap A_j)$ and $Z:=\cap_{i\in [2k-1]} \complement{A_i}$ as in the statement of Theorem \ref{theorem:hypergraph-uncrossing}.

\begin{claim}\label{clm:more_properties_of_e}
The hyperedge $e$ satisfies the following conditions:
\begin{enumerate}
    \item $e \cap W \neq \emptyset$,
    \item $e \cap Z \neq \emptyset$, and
    \item $e \subseteq W \cup Z$.
\end{enumerate}
\end{claim}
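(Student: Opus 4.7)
My plan is to verify the three conditions in sequence, each time combining Observation~\ref{obs:properties_of_e} (the three key properties of the hyperedge $e$) with the containment results already established in this section, namely Lemma~\ref{lem:Hs_subset_V1} and Claim~\ref{clm:Cl_subset_CiCj}. Throughout, I recall that $A_i = \overline{H_{C-u_i}}$ for $i \in [2k-1]$, and that $e \subseteq \overline{H_S} = \overline{H_{C-u_{2k}}}$.

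For $e \cap W \neq \emptyset$, I would pick any vertex $w \in e \cap \overline{V_1}$, which exists by property~1 of Observation~\ref{obs:properties_of_e}. Lemma~\ref{lem:Hs_subset_V1} gives $H_{C-u_i} \subseteq V_1$ for every $i \in [2k-1]$, so $w \in \overline{H_{C-u_i}} = A_i$ for every such $i$. Since $k \geq 2$ gives $2k-1 \geq 3 \geq 2$, the indices $1$ and $2$ both lie in $[2k-1]$, so $w \in A_1 \cap A_2 \subseteq W$.

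For $e \cap Z \neq \emptyset$, I would use $u_{2k} \in e$ from property~2 of Observation~\ref{obs:properties_of_e}. For each $i \in [2k-1]$, the set $C - u_i$ contains $u_{2k}$, and since $(H_{C-u_i},\overline{H_{C-u_i}})$ is a $(C - u_i,T)$-terminal cut, we have $C - u_i \subseteq H_{C-u_i}$. Hence $u_{2k} \in \bigcap_{i \in [2k-1]} H_{C-u_i} = Z$.

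For $e \subseteq W \cup Z$, I would show that every $v \in e$ not in $Z$ must lie in $W$. If $v \notin Z$, then $v \notin H_{C-u_i}$ for some $i \in [2k-1]$, i.e., $v \in A_i$. Property~3 of Observation~\ref{obs:properties_of_e} gives $v \in e \subseteq \overline{H_{C-u_{2k}}}$, so $v \notin H_{C-u_{2k}}$ as well. Applying Claim~\ref{clm:Cl_subset_CiCj} with indices $i$, $j := 2k$, and any $\ell \in [2k-1] \setminus \{i\}$ yields $H_{C-u_\ell} \subseteq H_{C-u_i} \cup H_{C-u_{2k}}$, so $v \notin H_{C-u_\ell}$, i.e., $v \in A_\ell$. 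Since $k \geq 2$ implies $|[2k-1] \setminus \{i\}| \geq 2$, such an $\ell$ exists, and hence $v \in A_i \cap A_\ell \subseteq W$. This third part is the main conceptual step: it is the only place where the specific construction of $u_{2k}$ as a vertex of a hyperedge $e \in \delta(V_1) \setminus \delta(H_S)$ is used in a nontrivial way, namely to convert membership of $v$ in a single $A_i$ into membership in a second $A_\ell$ via Claim~\ref{clm:Cl_subset_CiCj}; the first two parts are direct consequences of the stated observations and Lemma~\ref{lem:Hs_subset_V1}.
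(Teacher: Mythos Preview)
Your proof is correct and follows essentially the same approach as the paper: parts 1 and 2 are handled identically via Observation~\ref{obs:properties_of_e} and Lemma~\ref{lem:Hs_subset_V1}, and part 3 uses Claim~\ref{clm:Cl_subset_CiCj} in the same way. The only cosmetic difference is that the paper introduces the partition $(Y_1,\ldots,Y_{2k-1},W,Z)$ and shows $Y_i\cap\overline{H_S}=\emptyset$ for each $i$, whereas you argue element-wise that any $v\in e\setminus Z$ lands in $W$; both arguments reduce to the same application of Claim~\ref{clm:Cl_subset_CiCj} with the index pair $(i,2k)$.
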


\begin{proof}
\begin{enumerate}
    \item 
By Lemma \ref{lem:Hs_subset_V1}, for every $i \in [2k-1]$ we have $\overline{V_1} \subseteq A_i$, and therefore  $\overline{V_1} \subseteq W$. Thus, by Observation \ref{obs:properties_of_e}, we have that $\emptyset \neq e \cap \overline{V_1} \subseteq e \cap W$.

\item By definition, for every $i \in [2k-1]$, we have $u_{2k} \in H_{C - u_i} = \overline{A_i}$, and therefore $u_{2k} \in Z$. Thus, by Observation \ref{obs:properties_of_e}, we have that $e \cap Z \neq \emptyset$.

\item For every $i \in [2k-1]$, let $Y_i := A_i \setminus W$. We note that $(Y_1, \dots, Y_{2k-1}, W,Z)$ is a partition of $V$. Therefore, in order to show that $e \subseteq W \cup Z$, it suffices to show that $e \cap Y_i = \emptyset$ for every $i \in [2k-1]$. By Observation \ref{obs:properties_of_e}, we know that $e\subseteq \complement{H_S}$. We will show that $\complement{H_S}\cap Y_i=\emptyset$ for every $i\in [2k-1]$ which implies that $e\cap Y_i=\emptyset$ for every $i\in [2k-1]$. Let us fix an index $i\in [2k-1]$.
We note that
\begin{align*}
    Y_i 
    &= A_i \setminus W 
    = A_i \setminus \left( \bigcup_{1\le a<b\le 2k-1}(A_a\cap A_b) \right) 
    = A_i \setminus \left( \bigcup_{1\le a<b\le 2k-1}((A_a\cap A_b) \cap A_i) \right) \\
    &= A_i \setminus \left( \bigcup_{j \in [2k-1] \setminus \{i\} }(A_j \cap A_i) \right)
    = A_i \setminus \left( \bigcup_{j \in [2k-1] \setminus \{ i\} } A_j \right) 
    = A_i \cap \overline{\left( \bigcup_{j \in [2k-1] \setminus \{ i\}} A_j \right)} \\
    &= A_i \cap \left( \bigcap_{j \in [2k-1] \setminus \{i\}} \overline{A_j} \right) 
    = \left( \bigcap_{j \in [2k-1] \setminus \{i\}} \overline{A_j} \right) \setminus \overline{A_i}.
\end{align*}
Therefore, 
\begin{align*}
    Y_i \cap \overline{H_S} 
    &= \left( \left( \bigcap_{j \in [2k-1] \setminus \{i\}} \overline{A_j} \right) \setminus \overline{A_i} \right) \cap \overline{H_S} 
    = \left( \left( \bigcap_{j \in [2k-1] \setminus \{i\}} \overline{A_j} \right) \setminus \overline{A_i} \right) \setminus H_S \\
    &= \left( \bigcap_{j \in [2k-1] \setminus \{i\}} H_{C - u_j} \right) \setminus \left( H_{C - u_i} \cup H_{C - u_{2k}} \right).
\end{align*}
By Claim \ref{clm:Cl_subset_CiCj}, we have that $H_{C - u_j} \subseteq H_{C - u_i} \cup H_{C - u_{2k}}$ for every $j \in [2k-1] \setminus \{i\}$. Therefore, $H_{C-u_j}\setminus (H_{C - u_i}\cup H_{C - u_{2k}})=\emptyset$ for every $j\in [2k-1]\setminus \{i\}$, and hence, 
\begin{align*}
    \left( \bigcap_{j \in [2k-1] \setminus \{i\}} H_{C - u_j} \right) \setminus \left( H_{C - u_i} \cup H_{C - u_{2k}} \right) = \emptyset.
\end{align*}
Thus, 
we have $Y_i \cap \overline{H_S} = \emptyset$. 
\end{enumerate}
\end{proof}

By Claim \ref{clm:more_properties_of_e}, the hyperedge $e$ satisfies the conditions of the second conclusion of Theorem \ref{theorem:hypergraph-uncrossing}. Therefore, by Theorem \ref{theorem:hypergraph-uncrossing}, there exists a $k$-partition $\mathcal{P'}$ with 
\begin{align*}
\cost(\mathcal{P'}) 
&< \frac{1}{2}\min\{\deltacard(A_i) + \deltacard(A_j): i, j\in [2k-1], i\neq j\}\\
&= d(V_1) \quad \quad \quad \quad \quad \text{(By Claim \ref{clm:ci_eq_v1})}\\
&= \cost(\mathcal{P}). \quad \quad \quad \quad \text{(By assumption of the theorem)}
\end{align*}
Thus, we have obtained a $k$-partition $\mathcal{P}'$ with $\cost(\mathcal{P}') < \cost(\mathcal{P})$, which is a contradiction since $\mathcal{P}$ is a minimum $k$-partition.
\end{proof}

\section{Stronger Structural Theorem for $k=2$}
\label{sec:stronger-structure-2-for-k-equals-2}

We prove the stronger version of Theorem \ref{theorem: structure thm 2} for $k=2$---namely Theorem \ref{theorem: structure thm 2-for-min-cut}---in this section. 
We 
were able to prove Theorem \ref{theorem: structure thm 2-for-min-cut} 
via two more techniques that are different from the one presented in this section---one technique is via a novel three-cut-set-lemma 
while the second technique is via the \emph{canonical decomposition of hypergraphs} \cite{CE80, Fuj83, Cun80, ChekuriX18}. 
It is unclear how to generalize both these techniques to $k\ge 3$. Here, we present a proof of Theorem \ref{theorem: structure thm 2-for-min-cut} that closely resembles the proof of Theorem \ref{theorem: structure thm 2}. 
We will again use the containment lemma (Lemma \ref{lem:Hs_subset_V1}) in our proof. We mention how we obtain the stronger statement relative to Theorem \ref{theorem: structure thm 2} in the proof below. 
We restate and prove Theorem \ref{theorem: structure thm 2-for-min-cut} now.

\thmStructureTwoForMinCut*

\begin{proof}
Let us fix an arbitrary non-empty subset $T\subseteq \complement{V_1}=V_2$. 
For a subset $X \subseteq V_1$, we denote the source minimal minimum $(X,T)$-terminal cut by $(H_X,\overline{H_X})$. By Lemma \ref{lem:Hs_subset_V1}, for all $X \subseteq V_1$ we have that $H_X\subseteq V_1$. If $|V_1|\le 2$, then choosing $S=V_1$ proves the theorem. So, we will assume henceforth that $|V_1|\ge 3$. We will show that there exists a subset $S\subseteq V_1$ with $|S|\le 2$ such that the source minimal minimum $(S,T)$-terminal cut $(H_S, \complement{H_S})$ satisfies $\delta(H_S)=\delta(V_1)$. This suffices since we have that $H_S\subseteq V_1$ for all subsets $S\subseteq V_1$ (by Lemma \ref{lem:Hs_subset_V1}).

We begin with the following useful claim. We note that Claim \ref{clm:HX_eq_V1} crucially relies on the fact that $V_1$ is a part of a minimum cut (i.e, it crucially relies on $k=2$)---it does not hold if $V_1$ is a part of a minimum $k$-partition for $k \geq 3$. 
\begin{claim}\label{clm:HX_eq_V1}
For every $X \subseteq V_1$, we have that $d(H_X) = d(V_1)$.
\end{claim}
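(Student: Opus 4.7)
The plan is to exploit the fact that, for $k=2$, the quantity $d(V_1)$ equals the global minimum cut value of $G$, which immediately sandwiches $d(H_X)$ from both sides.

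First, I would establish the upper bound $d(H_X) \le d(V_1)$. Since $X \subseteq V_1$ and $T \subseteq V_2 = \complement{V_1}$, the $2$-partition $(V_1, \complement{V_1})$ is itself an $(X, T)$-terminal cut. By definition of $H_X$ as the source of a minimum $(X,T)$-terminal cut, we get $d(H_X) \le d(V_1)$.

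Next I would establish the matching lower bound $d(H_X) \ge d(V_1)$. Here is where the hypothesis $k=2$ enters decisively: the assumption that $\mathcal{P}=(V_1,V_2)$ is a minimum $k$-partition for $k=2$ means $(V_1,\complement{V_1})$ is a minimum cut in $G$, so $d(V_1)$ equals the global minimum cut value. Since $X$ is non-empty and $T$ is non-empty (and disjoint from $X$), the set $H_X$ induces a proper non-trivial cut $(H_X, \complement{H_X})$ of $V$, and hence $d(H_X) \ge d(V_1)$. Combining the two inequalities gives $d(H_X) = d(V_1)$.

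The only subtle point, and worth emphasizing in the write-up, is why this step fails for $k\ge 3$: in that regime $d(V_1)$ is merely the cost of one side of a minimum $k$-partition and need not equal the global minimum $2$-cut value, so the lower bound $d(H_X) \ge d(V_1)$ can fail. There is no real obstacle in the proof itself, which is just a two-line sandwich; the role of the claim is really to record this dichotomy so that it can be used as a black box in the remainder of the argument.
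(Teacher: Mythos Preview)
Your proof is correct and follows essentially the same approach as the paper: both use that $(V_1,\complement{V_1})$ is an $(X,T)$-terminal cut for the upper bound, and that $(V_1,V_2)$ being a minimum cut forces $d(H_X)\ge d(V_1)$ for the lower bound. Your added remark on why the argument fails for $k\ge 3$ matches the paper's own commentary surrounding the claim.
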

\begin{proof}
Since $X \subseteq V_1$ and $T \subseteq V_2$, we have that $(V_1, V_2)$ is a $(X,T)$-terminal cut. Since $(H_X, \overline{H_X})$ is a minimum $(X,T)$-terminal cut, we have that $d(H_X) \leq d(V_1)$. Since $(H_X, \overline{H_X})$ is a cut, and $(V_1, V_2)$ is a minimum cut, we have that $d(H_X) \geq d(V_1)$. Thus, $d(H_X) = d(V_1)$.
\end{proof}

For the sake of contradiction, suppose that 
for every $S \subseteq V_1$ with $|S| \leq 2$, the source minimal minimum $(S,T)$-terminal cut $(H_S, \complement{H_S})$ does not satisfy $\delta(H_S) = \delta(V_1)$. Our proof strategy is to obtain a cheaper cut than $(V_1, V_2)$, thereby contradicting the optimality of $(V_1, V_2)$. 

Let $S \subseteq V_1$ be a set of size $2$ such that $H_S$ is maximal---i.e., there does not exist $S'\subseteq V_1$ of size $2$ such that $H_{S'}\supsetneq H_S$. In contrast to the proof of Theorem \ref{theorem: structure thm 2}, where subsets $S$ of size $3$ had to be considered, here we only consider subsets $S$ of size $2$. We will see that this suffices to arrive at a contradiction.
Let $S:= \{u_1, u_2\}$. By assumption, we have that $\delta(H_S) \neq \delta(V_1)$, but by Claim \ref{clm:HX_eq_V1}, we have that $d(H_S) = d(V_1)$. Therefore, $\delta(V_1) \setminus \delta(H_S)$ is non-empty. Let $e \in \delta(V_1) \setminus \delta(H_S)$. Let $u_{3} \in e \cap V_1$. Let $C := \{u_1, u_2, u_3\}$. For notational convenience we will use $C-u_i$ to denote $C \setminus \{u_i\}$ and $C-u_i-u_j$ to denote $C \setminus \{u_i,u_j\}$ for all $i, j\in [3]$. 
The choice of the hyperedge $e$ is crucial to our proof---its properties will be used much later in our proof. We summarize the properties of the hyperedge $e$ here. 

\begin{observation}\label{obs:properties_of_e_k2}
The hyperedge $e$ has the following properties: 
\begin{enumerate}
    \item $e \cap V_2 \neq \emptyset$
    \item $u_{3} \in e$, and
    \item $e \subseteq \overline{H_S}$,
\end{enumerate}
\end{observation}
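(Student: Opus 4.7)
The plan is to verify the three listed properties of the fixed hyperedge $e$ directly from its defining conditions, namely that $e \in \delta(V_1) \setminus \delta(H_S)$ and $u_3 \in e \cap V_1$, using only the containment $H_S \subseteq V_1$ supplied by Lemma \ref{lem:Hs_subset_V1}. This is a short verification rather than a substantive argument, so I would simply walk through the three items in order.

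For property (1), I would note that $e \in \delta(V_1)$ means that $e$ crosses the cut $(V_1, V_2)$, so $e$ must have a vertex in each side; in particular $e \cap V_2 \neq \emptyset$. For property (2), I would just cite the definition: $u_3$ was chosen so that $u_3 \in e \cap V_1 \subseteq e$. These two steps are immediate.

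For property (3), the only step requiring any thought, I would argue as follows. Since $e \notin \delta(H_S)$, the hyperedge $e$ does not cross the cut $(H_S, \overline{H_S})$, so $e$ is entirely on one side: either $e \subseteq H_S$ or $e \subseteq \overline{H_S}$. The containment lemma (Lemma \ref{lem:Hs_subset_V1}) applied with $X = S \subseteq V_1$ yields $H_S \subseteq V_1$. Combining this with property (1), we have a vertex of $e$ in $V_2 \subseteq \overline{V_1} \subseteq \overline{H_S}$, so $e \not\subseteq H_S$. Therefore $e \subseteq \overline{H_S}$, proving property (3).

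There is no genuine obstacle here: the observation is a bookkeeping statement recording why the hyperedge $e$ singled out earlier in the proof is well-behaved, so no uncrossing, submodularity, or case analysis is required. The only non-immediate ingredient is the invocation of Lemma \ref{lem:Hs_subset_V1} in step (3), and even there the argument is just a one-line combination with property (1).
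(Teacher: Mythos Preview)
Your verification is correct and matches the paper's treatment: the paper states this observation without proof, treating the three items as immediate consequences of the choice of $e \in \delta(V_1)\setminus\delta(H_S)$, the choice of $u_3 \in e \cap V_1$, and the containment $H_S \subseteq V_1$ from Lemma~\ref{lem:Hs_subset_V1}. Your write-up simply spells out these implicit steps.
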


Our strategy to arrive at a cheaper cut than $(V_1, V_2)$ is to apply the second conclusion of Theorem \ref{theorem:hypergraph-uncrossing}. The next few claims will set us up to obtain sets that satisfy the hypothesis of Theorem \ref{theorem:hypergraph-uncrossing}.
\begin{claim}\label{clm:ui_not_in_Ci_k2}
For every $i \in [3]$, we have $u_i \not\in H_{C - u_i}$.
\end{claim}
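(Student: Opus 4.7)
\textbf{Proof proposal for Claim \ref{clm:ui_not_in_Ci_k2}.} The plan is to mirror the argument used for Claim \ref{clm:ui_not_in_Ci} in the proof of Theorem \ref{theorem: structure thm 2}, specializing to $k=2$ and $|S|=2$. I will handle the case $i=3$ directly from the properties of the hyperedge $e$, and the cases $i\in\{1,2\}$ by an uncrossing argument that exploits the maximality of $H_S$.

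For $i=3$, note that $S=C-u_3$, so $H_S=H_{C-u_3}$. By Observation \ref{obs:properties_of_e_k2}, we have $u_3\in e$ and $e\subseteq \complement{H_S}$, hence $u_3\notin H_S=H_{C-u_3}$, which gives the desired conclusion.

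For $i\in\{1,2\}$, I will argue by contradiction: suppose $u_i\in H_{C-u_i}$. Since the other element of $S$ lies in $C-u_i$ and hence in $H_{C-u_i}$, this yields $S\subseteq H_{C-u_i}$. Then $(H_S\cap H_{C-u_i},\complement{H_S\cap H_{C-u_i}})$ is a $(S,T)$-terminal cut (note $T\subseteq \complement{H_S}\cap \complement{H_{C-u_i}}$), so $d(H_S\cap H_{C-u_i})\ge d(H_S)$. Similarly $(H_S\cup H_{C-u_i},\complement{H_S\cup H_{C-u_i}})$ is a $(C-u_i,T)$-terminal cut, so $d(H_S\cup H_{C-u_i})\ge d(H_{C-u_i})$. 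Combining these with submodularity of the hypergraph cut function forces both inequalities to be equalities. In particular, $(H_S\cap H_{C-u_i},\complement{H_S\cap H_{C-u_i}})$ is a minimum $(S,T)$-terminal cut, and source minimality of $(H_S,\complement{H_S})$ then yields $H_S\subseteq H_{C-u_i}$.

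The final step is to upgrade this containment to a strict containment and contradict the maximality of $H_S$. Since $i\in\{1,2\}$, we have $u_3\in C-u_i\subseteq H_{C-u_i}$, whereas $u_3\notin H_S$ by the $i=3$ case already proved above. Therefore $H_S\subsetneq H_{C-u_i}$, and since $|C-u_i|=2$, the set $C-u_i$ contradicts the choice of $S$ as a size-$2$ subset of $V_1$ maximizing $H_S$ in the inclusion order. The main (minor) obstacle is just making sure that when $|S|=2$ one genuinely obtains a properly larger source minimal minimum cut, which is precisely what the presence of $u_3\in H_{C-u_i}\setminus H_S$ guarantees; all other ingredients (submodularity, source minimality, and Claim \ref{clm:HX_eq_V1}) are used exactly as in the $k\ge 2$ argument.
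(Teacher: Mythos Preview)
Your proof is correct and follows essentially the same approach as the paper: handle $i=3$ directly via Observation \ref{obs:properties_of_e_k2}, and for $i\in\{1,2\}$ uncross $H_S$ with $H_{C-u_i}$ using submodularity and source minimality to get $H_S\subsetneq H_{C-u_i}$, contradicting the maximality of $S$. One small remark: Claim \ref{clm:HX_eq_V1} is not actually needed here (neither you nor the paper invoke it in this argument), so your parenthetical mention of it at the end is superfluous.
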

\begin{proof}
By Observation \ref{obs:properties_of_e_k2} we have $u_{3} \in e$ and $e \subseteq \overline{H_S}$, so $u_{3} \not\in H_S = H_{C - u_{3}}$. Suppose $i \in [2]$. Our proof will rely on the choice of $S$.

Suppose for contradiction that $u_i \in H_{C - u_i}$ for some $i \in [2]$. Then we have that $S \subseteq H_{C - u_i}$, so $(H_{C - u_i} \cap H_S, \overline{H_{C - u_i} \cap H_S})$ 
is a $(S,T)$-terminal cut. Therefore, 
\begin{equation}\label{eqn:ui_not_in_Ci_eq1_k2}
    d(H_{C - u_i} \cap H_S) \geq d(H_S).
\end{equation}
Also, since $(H_{C - u_i} \cup H_S, \overline{H_{C - u_i} \cup H_S})$ is a $(C - u_i, T)$-terminal cut, we have that
\begin{equation}\label{eqn:ui_not_in_Ci_eq2_k2}
    d(H_{C - u_i} \cup H_S) \geq d(H_{C - u_i}).
\end{equation}
By submodularity of the hypergraph cut function and inequalities (\ref{eqn:ui_not_in_Ci_eq1_k2}) and (\ref{eqn:ui_not_in_Ci_eq2_k2}), we have that 
\[
d(H_S) + d(H_{C - u_i}) \geq d(H_{C - u_i} \cap H_S) + d(H_{C - u_i} \cup H_S) \geq d(H_S) + d(H_{C - u_i}).
\]
Therefore, inequality (\ref{eqn:ui_not_in_Ci_eq1_k2}) is an equation, and consequently,  $(H_{C - u_i} \cap H_S, \overline{H_{C - u_i} \cap H_S})$ is a minimum $(S,T)$-terminal cut. If $H_{C - u_i} \cap H_S \subsetneq H_S$, then this contradicts the source minimality of the minimum $(S,T)$-terminal cut $(H_S, \complement{H_S})$. Therefore, $H_{C - u_i}\cap H_S=H_S$ and hence, $H_S \subseteq H_{C - u_i}$. Also, the vertex $u_{3}$ is in $C - u_i$ but not in  $H_S$ and hence, $H_S \subsetneq H_{C - u_i}$. However, $|C - u_i| = 2$. Therefore, the set $C - u_i$ contradicts the choice of $S$.
\end{proof}

The next two claims will help in arguing properties about the hyperedge $e$ which will allow us to use the second conclusion of Theorem \ref{theorem:hypergraph-uncrossing}. In particular, we will need Claim \ref{clm:Cl_subset_CiCj_k2}. The following claim will help in proving Claim \ref{clm:Cl_subset_CiCj_k2}. We note the similarity of Claims \ref{clm:Ci_cap_cup_eq_V1_k2} and \ref{clm:Cl_subset_CiCj_k2} in this proof to Claims \ref{clm:Ci_cap_cup_eq_V1} and \ref{clm:Cl_subset_CiCj} in the proof of Theorem \ref{theorem: structure thm 2}. In order to prove Claims \ref{clm:Ci_cap_cup_eq_V1} and \ref{clm:Cl_subset_CiCj}, we needed the size of $C$ to be $2k$. Here, we are able to prove Claims \ref{clm:Ci_cap_cup_eq_V1_k2} and \ref{clm:Cl_subset_CiCj_k2} with the size of $C$ being $2k-1$ (for $k=2$). We do this by exploiting Claim \ref{clm:HX_eq_V1} shown earlier (which holds only for $k=2$).
\begin{claim}\label{clm:Ci_cap_cup_eq_V1_k2}
For every $i,j \in [3]$, we have 
\[d(H_{C - u_i} \cap H_{C - u_j}) = d(V_1) = d(H_{C - u_i} \cup H_{C - u_j}).\]
\end{claim}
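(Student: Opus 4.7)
The plan is to mirror the uncrossing argument used to establish Claim \ref{clm:Ci_cap_cup_eq_V1} in the proof of Theorem \ref{theorem: structure thm 2}, while exploiting the simplification specific to $k = 2$ that Claim \ref{clm:HX_eq_V1} provides: every set of the form $H_X$ with $X \subseteq V_1$ already satisfies $d(H_X) = d(V_1)$. The case $i = j$ is immediate, since then both the intersection and the union reduce to $H_{C-u_i}$, whose cut value is $d(V_1)$ by Claim \ref{clm:HX_eq_V1}. So I would assume $i \neq j$ and let $\ell$ be the remaining index, i.e., $C - u_i - u_j = \{u_\ell\}$.

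The key step is to establish two terminal-cut lower bounds. Since $u_\ell \in C - u_i \subseteq H_{C-u_i}$ and $u_\ell \in C - u_j \subseteq H_{C-u_j}$, the set $C - u_i - u_j = \{u_\ell\}$ is contained in $H_{C-u_i} \cap H_{C-u_j}$; moreover $T \subseteq V_2$ is disjoint from this intersection because $H_{C-u_i}, H_{C-u_j} \subseteq V_1$ by Lemma \ref{lem:Hs_subset_V1}. Hence $(H_{C-u_i} \cap H_{C-u_j}, \overline{H_{C-u_i} \cap H_{C-u_j}})$ is a $(C - u_i - u_j, T)$-terminal cut, giving $d(H_{C-u_i} \cap H_{C-u_j}) \geq d(H_{C-u_i-u_j}) = d(V_1)$ where the equality uses Claim \ref{clm:HX_eq_V1}. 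Similarly, $(H_{C-u_i} \cup H_{C-u_j}, \overline{H_{C-u_i} \cup H_{C-u_j}})$ is a $(C - u_j, T)$-terminal cut, which yields $d(H_{C-u_i} \cup H_{C-u_j}) \geq d(H_{C-u_j}) = d(V_1)$.

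The proof then concludes by submodularity of the hypergraph cut function:
\[
2\,d(V_1) \;=\; d(H_{C-u_i}) + d(H_{C-u_j}) \;\geq\; d(H_{C-u_i} \cap H_{C-u_j}) + d(H_{C-u_i} \cup H_{C-u_j}) \;\geq\; 2\,d(V_1),
\]
so both inequalities collapse to equalities and the two lower bounds above are tight. In particular, $d(H_{C-u_i} \cap H_{C-u_j}) = d(V_1) = d(H_{C-u_i} \cup H_{C-u_j})$.

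I do not anticipate a real obstacle here: the argument is a faithful transcription of Claim \ref{clm:Ci_cap_cup_eq_V1}, and Claim \ref{clm:HX_eq_V1} does precisely the work (pinning every relevant $H_X$ at cut value $d(V_1)$) that required the size-$2k$ set $C$ together with Claim \ref{clm:ci_eq_v1} in the general-$k$ proof. This is exactly what lets us settle the claim with $|C| = 3$ rather than needing $|C| = 2k = 4$, matching the remark made just before the claim statement about why we only need $|S| \leq 2$ in the $k = 2$ setting.
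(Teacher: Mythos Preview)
Your proposal is correct and follows essentially the same argument as the paper's proof: both establish the two terminal-cut lower bounds $d(H_{C-u_i}\cap H_{C-u_j})\ge d(H_{C-u_i-u_j})$ and $d(H_{C-u_i}\cup H_{C-u_j})\ge d(H_{C-u_j})$, invoke Claim \ref{clm:HX_eq_V1} to identify all relevant values with $d(V_1)$, and finish by submodularity. Your treatment of the $i=j$ case and the explicit verification that the intersection is a valid $(C-u_i-u_j,T)$-terminal cut are minor expository additions, not a different approach.
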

\begin{proof}
Since $(H_{C - u_i} \cap H_{C - u_j}, \overline{H_{C - u_i} \cap H_{C - u_j}} )$ is a $(C - u_i-u_j, T)$-terminal cut, we have that $d(H_{C - u_i} \cap H_{C - u_j}) \geq d(H_{C - u_i-u_j})$. By Claim \ref{clm:HX_eq_V1}, we have that $d(H_{C - u_i-u_j}) = d(V_1) = d(H_{C - u_i})$. Therefore,
\begin{equation}\label{eq:Ci_cap_cup_eq_V1_eq1_k2}
    d(H_{C - u_i} \cap H_{C - u_j}) \geq d(H_{C - u_i}).
\end{equation}
Since $(H_{C - u_i} \cup H_{C - u_j}, \overline{H_{C - u_i} \cup H_{C - u_j}})$ is a $(C - u_j, T)$-terminal cut, we have that
\begin{equation}\label{eq:Ci_cap_cup_eq_V1_eq2_k2}
    d(H_{C - u_i} \cup H_{C - u_j}) \geq d(H_{C - u_j}).
\end{equation}
By submodularity of the hypergraph cut function and inequalities (\ref{eq:Ci_cap_cup_eq_V1_eq1_k2}) and (\ref{eq:Ci_cap_cup_eq_V1_eq2_k2}), we have that
\[
d(H_{C - u_i}) + d(H_{C - u_j}) \geq d(H_{C - u_i} \cap H_{C - u_j}) + d(H_{C - u_i} \cup H_{C - u_j}) \geq d(H_{C - u_i}) + d(H_{C - u_j}).
\]
Therefore, inequalities (\ref{eq:Ci_cap_cup_eq_V1_eq1}) and (\ref{eq:Ci_cap_cup_eq_V1_eq2}) are equations. Thus, by Claim \ref{clm:HX_eq_V1} we have that
\[
d(H_{C - u_i} \cap H_{C - u_j}) = d(H_{C - u_i}) = d(V_1),
\]
and
\[
d(H_{C - u_i} \cup H_{C - u_j}) = d(H_{C - u_j}) = d(V_1).
\]
\end{proof}

The next claim follows from Claim \ref{clm:Ci_cap_cup_eq_V1_k2} similar to the proof of Claim \ref{clm:Cl_subset_CiCj} from Claim \ref{clm:Ci_cap_cup_eq_V1} earlier. We include the proof for the sake of completeness. 

\begin{claim}\label{clm:Cl_subset_CiCj_k2}
For every $i,j,\ell \in [3]$ with $i \neq j$, we have $H_{C - u_{\ell}} \subseteq H_{C - u_i} \cup H_{C - u_j}$.
\end{claim}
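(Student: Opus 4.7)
The plan is to mirror the proof of Claim \ref{clm:Cl_subset_CiCj} from the proof of Theorem \ref{theorem: structure thm 2} almost verbatim, specialized to $k=2$ so that $|C|=3$ and $[2k-1]$ is replaced by $[3]$. If $\ell \in \{i,j\}$ the containment is trivial, so I would assume $\{i,j,\ell\}=[3]$ and work with $Q := H_{C-u_\ell}\setminus(H_{C-u_i}\cup H_{C-u_j})$. The goal is to show $Q=\emptyset$ by producing a minimum $(C-u_\ell, T)$-terminal cut whose source side is $H_{C-u_\ell}\setminus Q = H_{C-u_\ell}\cap(H_{C-u_i}\cup H_{C-u_j})$; any nonempty $Q$ would then contradict source minimality of $(H_{C-u_\ell},\overline{H_{C-u_\ell}})$.

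To see that $H_{C-u_\ell}\cap(H_{C-u_i}\cup H_{C-u_j})$ is the source side of a $(C-u_\ell, T)$-terminal cut, I would observe that $C-u_\ell=\{u_i,u_j\}$, with $u_i\in H_{C-u_j}$ and $u_j\in H_{C-u_i}$ (the defining inclusions), while $\{u_i,u_j\}\subseteq H_{C-u_\ell}$. This yields the inequality $d(H_{C-u_\ell}\cap(H_{C-u_i}\cup H_{C-u_j})) \geq d(H_{C-u_\ell})$. Next, noting that $H_{C-u_\ell}\cup(H_{C-u_i}\cup H_{C-u_j})$ is the source side of a $(C-u_i,T)$-terminal cut (it contains all of $C-u_i$), I would get $d(H_{C-u_\ell}\cup(H_{C-u_i}\cup H_{C-u_j})) \geq d(H_{C-u_i})$.

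Then I would apply submodularity of the hypergraph cut function to the pair $H_{C-u_\ell}$ and $H_{C-u_i}\cup H_{C-u_j}$, and combine it with Claims \ref{clm:HX_eq_V1} and \ref{clm:Ci_cap_cup_eq_V1_k2}, which together give $d(H_{C-u_\ell}) = d(H_{C-u_i}) = d(V_1) = d(H_{C-u_i}\cup H_{C-u_j})$. The submodular sum is then sandwiched between two equal quantities, forcing both of the inequalities above to be equalities. In particular, $(H_{C-u_\ell}\cap(H_{C-u_i}\cup H_{C-u_j}),\;\overline{H_{C-u_\ell}\cap(H_{C-u_i}\cup H_{C-u_j})})$ is a minimum $(C-u_\ell,T)$-terminal cut, and source minimality of $H_{C-u_\ell}$ then forces $Q=\emptyset$.

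No step here is a genuine obstacle: the only place where the argument had to be delicate for general $k$ was ensuring $C-u_\ell \subseteq H_{C-u_i}\cup H_{C-u_j}$, which for $k\ge 3$ required inspecting the residual set $C-u_i-u_j-u_\ell$. Here $|C|=3$ collapses that set to $\emptyset$, so the verification reduces to the two one-line observations $u_i\in H_{C-u_j}$ and $u_j\in H_{C-u_i}$. The work done by Claim \ref{clm:HX_eq_V1}, which crucially used $k=2$ to guarantee $d(H_X)=d(V_1)$ for every $X\subseteq V_1$, is what allows the weight bookkeeping to go through with the smaller set $C$ of size $2k-1$ rather than $2k$.
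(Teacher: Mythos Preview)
Your proposal is correct and follows essentially the same argument as the paper's proof: both reduce to showing that $(H_{C-u_\ell}\cap(H_{C-u_i}\cup H_{C-u_j}),\overline{H_{C-u_\ell}\cap(H_{C-u_i}\cup H_{C-u_j})})$ is a minimum $(C-u_\ell,T)$-terminal cut via the same two inequalities, the same submodularity sandwich, and the same appeal to Claims~\ref{clm:HX_eq_V1} and~\ref{clm:Ci_cap_cup_eq_V1_k2}, then invoke source minimality to force $Q=\emptyset$. Your observation that $C-u_i-u_j-u_\ell=\emptyset$ collapses the residual-set check is exactly the simplification the paper exploits (though the paper phrases the containment $C-u_\ell\subseteq H_{C-u_\ell}\setminus Q$ in the same generic form as in the proof of Claim~\ref{clm:Cl_subset_CiCj}).
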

\begin{proof}
If $\ell = i$ or $\ell = j$ the claim is immediate. Thus, we assume that $\ell\neq i, j$.
Let $Q := H_{C -u_{\ell}} \setminus (H_{C - u_i} \cup H_{C - u_j})$. We need to show that $Q = \emptyset$. We will show that $(H_{C - u_{\ell}} \setminus Q, \overline{H_{C - u_{\ell}}\setminus Q})$ is a minimum $(C - u_{\ell}, T)$-terminal cut. 
Consequently, $Q$ must be empty 
(otherwise, $H_{C - u_{\ell}} \setminus Q \subsetneq H_{C - u_{\ell}}$ and hence, $(H_{C - u_{\ell}} \setminus Q, \overline{H_{C - u_{\ell}}\setminus Q})$ contradicts source minimality of the minimum $(C - u_{\ell}, T)$-terminal cut $(H_{C-u_{\ell}}, \complement{H_{C-u_{\ell}}})$). 

We now show that $(H_{C - u_{\ell}} \setminus Q, \overline{H_{C - u_{\ell}}\setminus Q})$ is a minimum $(C - u_{\ell}, T)$-terminal cut. 
Since $H_{C - u_{\ell}} \setminus Q = H_{C - u_{\ell}} \cap (H_{C - u_i} \cup H_{C - u_j})$, we have that $C -u_i-u_j-u_{\ell} \subseteq H_{C - u_{\ell}} \setminus Q$. We also know that $u_i$ and $u_j$ are contained in both $H_{C - u_{\ell}}$ and $H_{C - u_i} \cup H_{C - u_j}$. Therefore, $C - u_{\ell} \subseteq H_{C - u_{\ell}} \setminus Q$. Thus, $(H_{C - u_{\ell}} \setminus Q, \overline{H_{C - u_{\ell}} \setminus Q})$ is a $(C - u_{\ell}, T)$-terminal cut. Therefore, 
\begin{equation}\label{eqn:Cl_subset_CiCj_eq1_k2}
  d(H_{C - u_{\ell}} \cap (H_{C - u_i} \cup H_{C - u_j})) = d(H_{C - u_{\ell}} \setminus Q) \geq d(H_{C - u_{\ell}}).  
\end{equation}
We also have that $(H_{C - u_{\ell}} \cup (H_{C - u_i} \cup H_{C - u_j}), \overline{H_{C - u_{\ell}} \cup (H_{C - u_i} \cup H_{C - u_j})})$ is a $(C - u_i, T)$-terminal cut. Therefore, $d(H_{C - u_{\ell}} \cup (H_{C - u_i} \cup H_{C - u_j})) \geq d(H_{C - u_i})$. By Claims \ref{clm:HX_eq_V1} and \ref{clm:Ci_cap_cup_eq_V1_k2}, we have that $d(H_{C - u_i}) = d(V_1)=d(H_{C - u_i} \cup H_{C - u_j})$. Therefore, 
\begin{equation}\label{eqn:Cl_subset_CiCj_eq2_k2}
    d(H_{C - u_{\ell}} \cup (H_{C - u_i} \cup H_{C - u_j})) \geq  d(H_{C - u_i} \cup H_{C - u_j}).
\end{equation}
By submodularity of the hypergraph cut function and inequalities (\ref{eqn:Cl_subset_CiCj_eq1_k2}) and (\ref{eqn:Cl_subset_CiCj_eq2_k2}), we have that
\begin{align*}
d(H_{C - u_{\ell}}) + d(H_{C - u_i} \cup H_{C - u_j}) 
&\geq d(H_{C - u_{\ell}} \cap (H_{C - u_i} \cup H_{C - u_j})) +  d(H_{C - u_{\ell}} \cup (H_{C - u_i} \cup H_{C - u_j})) \\
&\geq d(H_{C - u_{\ell}}) + d(H_{C - u_i} \cup H_{C - u_j}).
\end{align*}
Therefore, inequalities (\ref{eqn:Cl_subset_CiCj_eq1_k2}) and (\ref{eqn:Cl_subset_CiCj_eq2_k2}) are equations, so $(H_{C - u_{\ell}} \setminus Q, \overline{H_{C - u_{\ell}}\setminus Q})$ is a minimum $(C - u_{\ell}, T)$-terminal cut. 
\end{proof}

Let $R:= \{u_{3}\}$, $U:=V_1$, and for every $i \in [2]$, let $(\overline{A_i}, A_i) := (H_{C - u_i}, \overline{H_{C - u_i}})$. By Lemma \ref{lem:Hs_subset_V1}, we have that $(\complement{A_i}, A_i)$ is a minimum $(C - u_i, V_2)$-terminal cut for every $i \in [2]$. Moreover, by Claim \ref{clm:ui_not_in_Ci_k2}, we have that  
$u_i \in A_i \setminus A_{3 - i}$.
Hence, the sets $U$, $R$, and $S$, and the cuts $(\complement{A_i}, A_i)$ for $i\in [2]$ satisfy the conditions of Theorem \ref{theorem:hypergraph-uncrossing}. 
We will use the second conclusion of Theorem \ref{theorem:hypergraph-uncrossing}. We now show that the hyperedge $e$ that we fixed at the beginning of the proof satisfies the conditions mentioned in the second conclusion of Theorem \ref{theorem:hypergraph-uncrossing}. We will use Claim \ref{clm:Cl_subset_CiCj_k2} to prove this.
Let $W:=A_1\cap A_2$ and $Z:= \complement{A_1} \cap \complement{A_2}$ as in the statement of Theorem \ref{theorem:hypergraph-uncrossing}.

\begin{claim}\label{clm:more_properties_of_e_k2}
The hyperedge $e$ satisfies the following conditions:
\begin{enumerate}
    \item $e \cap W \neq \emptyset$,
    \item $e \cap Z \neq \emptyset$, and
    \item $e \subseteq W \cup Z$.
\end{enumerate}
\end{claim}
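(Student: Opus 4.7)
The plan is to mirror the structure of the analogous Claim~\ref{clm:more_properties_of_e} from the proof of Theorem~\ref{theorem: structure thm 2}, specialized to the simpler index set $[2]$, using the three properties of $e$ recorded in Observation~\ref{obs:properties_of_e_k2}, the containment Lemma~\ref{lem:Hs_subset_V1}, and Claim~\ref{clm:Cl_subset_CiCj_k2}. All three items should follow almost directly.

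For item~1, I would apply Lemma~\ref{lem:Hs_subset_V1} to each $C - u_i$ (with $i \in [2]$) to conclude $H_{C-u_i} \subseteq V_1$, which gives $V_2 = \overline{V_1} \subseteq \overline{H_{C-u_i}} = A_i$ for both $i$, hence $V_2 \subseteq A_1 \cap A_2 = W$. Combined with Observation~\ref{obs:properties_of_e_k2}(1), which says $e \cap V_2 \neq \emptyset$, this immediately yields $e \cap W \neq \emptyset$. For item~2, I would observe that $u_3 \in C - u_i$ implies $u_3 \in H_{C-u_i} = \overline{A_i}$ for every $i \in [2]$, so $u_3 \in \overline{A_1} \cap \overline{A_2} = Z$; since $u_3 \in e$ by Observation~\ref{obs:properties_of_e_k2}(2), we get $e \cap Z \neq \emptyset$.

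The only non-routine part is item~3. Here I would partition $V$ into the four sets $W$, $Z$, $Y_1 := A_1 \setminus A_2$, and $Y_2 := A_2 \setminus A_1$, so that showing $e \subseteq W \cup Z$ reduces to showing $e \cap Y_i = \emptyset$ for $i \in [2]$. Unpacking complements gives $Y_i = \overline{H_{C-u_i}} \cap H_{C-u_{3-i}} = H_{C-u_{3-i}} \setminus H_{C-u_i}$. By Observation~\ref{obs:properties_of_e_k2}(3), $e \subseteq \overline{H_S} = \overline{H_{C-u_3}}$, so it is enough to establish
\[
Y_i \cap \overline{H_{C-u_3}} = H_{C-u_{3-i}} \setminus \bigl(H_{C-u_i} \cup H_{C-u_3}\bigr) = \emptyset.
\]
This is precisely the conclusion of Claim~\ref{clm:Cl_subset_CiCj_k2} applied with $\ell := 3-i$ and the pair $\{i, 3\}$, which states $H_{C-u_{3-i}} \subseteq H_{C-u_i} \cup H_{C-u_3}$.

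I expect no real obstacle: all three items are essentially bookkeeping on top of the earlier claims, and the main conceptual work has already been absorbed into Claim~\ref{clm:Cl_subset_CiCj_k2}. The only care needed is in item~3, where one must correctly identify which index plays the role of $\ell$ versus the pair $\{i,j\}$ when invoking Claim~\ref{clm:Cl_subset_CiCj_k2}; the natural choice $\ell = 3-i$ with the remaining two indices forming $\{i,3\}$ aligns exactly with the expression obtained by intersecting $Y_i$ with $\overline{H_S}$.
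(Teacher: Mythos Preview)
Your proposal is correct and follows essentially the same argument as the paper's own proof: the paper likewise derives items~1 and~2 directly from Lemma~\ref{lem:Hs_subset_V1} and Observation~\ref{obs:properties_of_e_k2}, and for item~3 partitions $V$ into $W$, $Z$, $Y_1 = A_1 \setminus A_2$, $Y_2 = A_2 \setminus A_1$, computes $Y_i \cap \overline{H_S} = H_{C-u_{3-i}} \setminus (H_{C-u_i} \cup H_{C-u_3})$, and invokes Claim~\ref{clm:Cl_subset_CiCj_k2} with exactly the index assignment you describe.
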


\begin{proof}
\begin{enumerate}
    \item 
By Lemma \ref{lem:Hs_subset_V1}, for every $i \in [2]$ we have $V_2 \subseteq A_i$, and therefore  $V_2 \subseteq W$. Thus, by Observation \ref{obs:properties_of_e_k2}, we have that $\emptyset \neq e \cap V_2 \subseteq e \cap W$.

\item By definition, for every $i \in [2]$, we have $u_{3} \in H_{C - u_i} = \overline{A_i}$, and therefore $u_{3} \in Z$. Thus, by Observation \ref{obs:properties_of_e_k2}, we have that $e \cap Z \neq \emptyset$.

\item For every $i \in [2]$, let $Y_i := A_i \setminus W$. We note that $(Y_1, Y_2, W,Z)$ is a partition of $V$. Therefore, in order to show that $e \subseteq W \cup Z$, it suffices to show that $e \cap Y_1 = e \cap Y_2 = \emptyset$. By Observation \ref{obs:properties_of_e_k2}, we know that $e\subseteq \complement{H_S}$. We will show that $\complement{H_S}\cap Y_i=\emptyset$ for every $i\in [2]$ which implies that $e\cap Y_i=\emptyset$ for every $i\in [2]$. Let us fix a $i\in [2]$.
We note that
\[
Y_i = A_i \setminus W 
    = A_i \setminus (A_1 \cap A_2) 
    = A_i \setminus A_{3 - i} 
    = A_i \cap \overline{A_{3 -i}} 
    = \overline{A_{3-i}} \setminus \overline{A_i}.
\]
Therefore, 
\[
Y_i \cap \overline{H_S} = \left(\overline{A_{3-i}} \setminus \overline{A_i}\right) \cap \overline{H_S} = \left(\overline{A_{3-i}} \setminus \overline{A_i}\right) \setminus H_S = H_{C - u_{3-i}} \setminus \left( H_{C - u_i} \cup H_{C - u_3} \right).
\]
By Claim \ref{clm:Cl_subset_CiCj_k2}, we have that $H_{C - u_{3-i}} \subseteq H_{C - u_i} \cup H_{C - u_{3}}$. Therefore, 
\[
H_{C - u_{3-i}} \setminus \left( H_{C - u_i} \cup H_{C - u_3} \right) = \emptyset.
\]
Thus, for every $i \in [2]$, we have $Y_i \cap \overline{H_S} = \emptyset$. 
\end{enumerate}
\end{proof}

By Claim \ref{clm:more_properties_of_e_k2}, the hyperedge $e$ satisfies the conditions of the second conclusion of Theorem \ref{theorem:hypergraph-uncrossing}. Therefore, by Theorem \ref{theorem:hypergraph-uncrossing}, there exists a cut $(V'_1, V'_2)$ with 
\begin{align*}
\cost(V_1', V_2') = d(V'_1)
&< \frac{1}{2}(d(A_1) + d(A_2))
= d(V_1). 
\end{align*}
The last equality above is by Claim \ref{clm:HX_eq_V1}. 
Thus, we have obtained a cut $(V_1', V_2')$ with $d(V'_1) < d(V_1)$, which is a contradiction since $(V_1, V_2)$ is a minimum cut.
\end{proof}

\section{Conclusion and Open Problems} 
\label{sec:conclusion}
Several works in the literature have approached global cut and partitioning problems via minimum $(S,T)$-terminal cuts (e.g., see \cite{GH94, GR95, NSZ19, CC20, CC21-minmax-symsm-conf}). 
Our work adds to this rich literature by showing that \enumhkcut can be solved via minimum $(S, T)$-terminal cuts. As a special case, our approach leads to a more straightforward approach to enumerate all minimum cut-sets in a given hypergraph. 
We mention a couple of open questions raised by our work. 
\begin{enumerate}
\item For a long time, the known upper bound on the number of minimum $k$-partitions in connected graphs was $O(n^{2k-2})$ \cite{KS96, Th08, CQX20} while the known lower bound was $\Omega(n^k)$ (cycle), where $n$ is the number of vertices in the input graph. A recent result improved the upper bound to $O(n^k)$ which also resulted in a faster randomized algorithm to solve \gkcut \cite{GLL20-STOC, GHLL20}. We currently know that the number of minimum $k$-cut-sets in hypergraphs is $O(n^{2k-2})$ and is $\Omega(n^k)$ (the lower bound comes from graphs). Can we improve the upper bound on the number of minimum $k$-cut-sets in hypergraphs to $O(n^k)$?  

\item Can we improve the deterministic run-time to solve \enumhkcut? 
%The current fastest algorithm to solve \hkcut runs in time $O(n^k)p$, where $p$ is the size of the input hypergraph. 
Our deterministic algorithm for \enumhkcut runs in time $n^{O(k^2)}p$, where $p$ is the size of the input hypergraph. In particular, the run-time of our algorithm has a quadratic dependence on $k$ in the exponent of $n$. 
In contrast, the number of optimum solutions is only $O(n^{2k-2})$---i.e., linear dependence on $k$ in the exponent of $n$. 
%Is it possible to design a deterministic algorithm for \enumhkcut that runs in time $n^{O(k)}p?$ 
We note that \enumgkcut as well as \hkcut can be solved deterministically in $n^{O(k)}p$ time \cite{Th08, CQX20, CC20}. 

\end{enumerate}

\mypara{Acknowledgements.} We would like to thank Chandra Chekuri for feedback that helped improve the introductory section of this work. Karthik would like to thank Michel Goemans for clarifying the proof of Theorem 15 in \cite{GR95} via email. 

\bibliographystyle{amsplain}
\bibliography{references}

\appendix

\section{Proof of Theorem \ref{theorem:hypergraph-uncrossing}}\label{section:uncrossing-for-hypergraph-cut-function}

We prove Theorem \ref{theorem:hypergraph-uncrossing} in this section. We will need certain partition uncrossing and partition aggregation results from \cite{CC20} that rely on more careful counting of hyperedges than simply employing the submodularity inequality. 
We begin with some notation that will help in such careful counting---our notation will be identical to the notation in \cite{CC20}. 
Let $(Y_1, \ldots, Y_p, W, Z)$
be a partition of $V$. 
We recall that $\cost(Y_1, \ldots, Y_p, W, Z)$
denotes the number of hyperedges that cross the partition.
We define
the following quantities:
\begin{enumerate}
\item Let
  $\cost(W, Z) := |\{ e \mid e \subseteq W \cup Z, e \cap W \neq \emptyset, e
  \cap Z \neq \emptyset\}|$ be the number of hyperedges contained in
  $W\cup Z$ that intersect both $W$ and $Z$.
    \item Let $\alpha(Y_1, \ldots, Y_p, W, Z)$ be the number of hyperedges that intersect $Z$ and at least two of the sets in $\set{Y_1, \ldots, Y_p, W}$.
    \item Let $\beta(Y_1, \ldots, Y_p, Z)$ be the number of hyperedges that are disjoint from $Z$ but intersect at least two of the sets in $\set{Y_1, \ldots, Y_p}$.
\end{enumerate}
For a partition $(Y_1, \ldots, Y_p, W, Z)$, we will be interested in
the sum of $\cost(Y_1, \ldots, Y_p, W, Z)$ with the 
three 
quantities
defined above which we denote as $\sigma(Y_1, \ldots, Y_p, W, Z)$,
i.e.,
\[
\sigma(Y_1, \ldots, Y_p, W, Z) := \cost(Y_1, \ldots, Y_p, W, Z) + \cost(W, Z) + \alpha(Y_1, \ldots, Y_p, W, Z) + \beta(Y_1, \ldots, Y_p, Z).
\]
The precise interpretation of the quantity $\sigma(Y_1, \ldots, Y_p, W, Z)$ will not be important for our purposes---see \cite{CC20} for the interpretation. 
\iffalse
We note that $\sigma(Y_1, \ldots, Y_p, W, Z)$ counts every hyperedge that crosses the partition twice except for those hyperedges that intersect exactly one of the sets in $\set{Y_1, \ldots, Y_p}$ and exactly one of the sets in $\set{W, Z}$ which are counted exactly once (see Figure \ref{figure:sigma-count}). 
\begin{figure}[htb]
\centering
\includegraphics[width=0.5\textwidth]{}
\caption{Hyperedges counted by $\sigma(Y_1, \ldots, Y_p, W, Z)$: The dashed hyperedges are counted only by $\cost(Y_1, \ldots, Y_p, W, Z)$. The rest of the hyperedges are counted twice in $\sigma(Y_1, \ldots, Y_p, W, Z)$: once by the term $\cost(Y_1, \ldots, Y_p, W, Z)$ and once more by the indicated term.}
\label{figure:sigma-count}
\end{figure}
\fi

The following result from \cite{CC20} shows that a collection of sets can be uncrossed to obtain a partition with small $\sigma$-value. 
We note the similarity of the hypothesis of Lemma \ref{lemma:uncrossing} with the hypothesis of Theorem \ref{theorem:hypergraph-uncrossing} and once again, refer to Figure
\ref{figure:uncrossing} for an illustration of the sets that appear in the statement of Lemma \ref{lemma:uncrossing}. 

\begin{lemma}\label{lemma:uncrossing} \cite{CC20}
  Let $G=(V,E)$ be a hypergraph and
  $\emptyset\neq R\subsetneq U\subsetneq V$. Let
  $S=\{u_1,\ldots, u_p\}\subseteq U\setminus R$ for $p\ge 2$. Let
  $(\complement{A_i}, A_i)$ be a minimum
  $((S\cup R)\setminus \set{u_i}, \complement{U})$-terminal
  cut. Suppose that
  $u_i\in A_i\setminus (\cup_{j\in [p]\setminus \set{i}}A_j)$ for
  every $i\in [p]$.  Let
\[
Z:= \cap_{i=1}^p \complement{A_i},\ W:= \cup_{1\le i<j\le p}(A_i \cap A_j),\ \text{and}\ Y_i:=A_i-W\ \forall i\in [p].
\]
Then, $(Y_1, \ldots, Y_p, W, Z)$ is a $(p+2)$-partition of $V$ with
\[
\sigma(Y_1, \ldots, Y_p, W, Z) \le \min\{\deltacard(A_i) + \deltacard(A_j): i, j\in [p], i\neq j\}.
\]
Moreover, if $p=2$, then the above inequality is an equation. 
\end{lemma}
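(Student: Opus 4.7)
The plan is to prove Lemma \ref{lemma:uncrossing} in two phases: first verify the partition structure, then establish the $\sigma$-inequality via a hyperedge counting argument that crucially exploits the minimality of each cut $(\complement{A_i}, A_i)$. Pairwise disjointness of $Y_1, \ldots, Y_p, W, Z$ follows immediately from the definitions: for $i\neq j$, $Y_i\cap Y_j\subseteq (A_i\cap A_j)\setminus W=\emptyset$ since $A_i\cap A_j\subseteq W$; $Y_i\cap W=\emptyset$ and $W\cap Z=\emptyset$ by construction; and $Y_i\cap Z\subseteq A_i\cap \complement{A_i}=\emptyset$. Their union is all of $V$ because every vertex belongs to none, exactly one, or at least two of the $A_k$'s, landing in $Z$, some $Y_k$, or $W$ respectively. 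For non-emptiness, $u_i\in Y_i$ by the hypothesis on $u_i$; and $R\subseteq Z$ because $u_i\in S\subseteq U\setminus R$ forces $u_i\notin R$, so $R\subseteq (S\cup R)\setminus\{u_i\}\subseteq \complement{A_i}$ for every $i\in [p]$.

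For the inequality, the plan is to fix an arbitrary pair $i\neq j\in [p]$ and compare, hyperedge by hyperedge, the contribution to $\sigma(Y_1,\ldots,Y_p,W,Z)$ with the contribution to $d(A_i)+d(A_j)$. Using the definitions of $\cost(\cdot)$, $\cost(W,Z)$, $\alpha$ and $\beta$, the $\sigma$-contribution of $e$ can be categorized as $0$ (if $e$ lies inside a single part), $1$ (if $e$ meets exactly one $Y_k$ together with exactly one of $\{W,Z\}$), or $2$ in the remaining crossing configurations; meanwhile, the contribution of $e$ to $d(A_i)+d(A_j)$ equals the number of cuts in $\{(A_i,\complement{A_i}),(A_j,\complement{A_j})\}$ that $e$ crosses. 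Configurations in which the $\sigma$-contribution of a single $e$ appears to exceed its $d$-contribution---for example $e\subseteq Y_k\cup Y_\ell$ with $k,\ell\notin\{i,j\}$, giving $\sigma$-contribution $2$ but $d$-contribution $0$---are the delicate case. To handle these I invoke the minimality of the remaining cuts $(\complement{A_m},A_m)$ for $m\notin\{i,j\}$ together with the containment $u_m\in A_m\setminus \bigcup_{n\neq m}A_n$: if such an offending $e$ existed, then shifting the boundary of $A_m$ across an appropriate subset of $e$ produces a valid $((S\cup R)\setminus\{u_m\},\complement{U})$-terminal cut of strictly smaller value, contradicting the minimality of $A_m$.

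The $p=2$ equality case is immediate: the partition $(A_1\setminus A_2,\ A_2\setminus A_1,\ A_1\cap A_2,\ \complement{A_1\cup A_2})$ is the coarsest common refinement of $(A_1,\complement{A_1})$ and $(A_2,\complement{A_2})$, and the quantity $\sigma$ is designed so that each hyperedge's $\sigma$-contribution equals the number of these two $2$-partitions it crosses, yielding $\sigma=d(A_1)+d(A_2)$ summand-by-summand. The main obstacle is the inequality for $p\ge 3$, since it is genuinely global rather than pointwise. The crux is, for each apparently-violating configuration of $e$, to pinpoint an index $m$ and a well-chosen adjustment of $A_m$ across the offending hyperedge that yields a cheaper competitor terminal cut while preserving $(S\cup R)\setminus\{u_m\}\subseteq \complement{A_m}$ and $\complement{U}\subseteq A_m$; organizing this case analysis cleanly (so that every bad configuration really is ruled out by minimality of some $A_m$) is the bulk of the technical work.
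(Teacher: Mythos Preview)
A framing remark first: the present paper does not prove Lemma~\ref{lemma:uncrossing} --- it is quoted verbatim from \cite{CC20} with no proof given here --- so there is no in-paper argument to compare against. Your partition verification and the $p=2$ identity are correct (for $W\neq\emptyset$ note additionally that $\complement{U}\subseteq A_i\cap A_j\subseteq W$ for any $i\neq j$).

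The genuine gap is your plan for the inequality when $p\ge 3$. You propose to compare contributions hyperedge by hyperedge and, whenever a single hyperedge $e$ contributes more to $\sigma$ than to $d(A_i)+d(A_j)$, to ``shift the boundary of $A_m$ across $e$'' for some $m$ and obtain a strictly cheaper competing terminal cut, contradicting minimality of $A_m$. This cannot work, for two reasons. First, minimality is a global statement: modifying $A_m$ by a handful of vertices of $e$ changes $d(A_m)$ according to \emph{all} hyperedges, and nothing guarantees that the saving on $e$ is not overwhelmed by losses elsewhere; indeed, for any legitimate competitor $A_m'$ minimality gives $d(A_m)\le d(A_m')$, which is the opposite direction from what your contradiction needs. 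Second, and more fundamentally, the pointwise inequality you are implicitly aiming for is simply false: hyperedges $e\subseteq Y_k\cup Y_\ell$ with $k,\ell\notin\{i,j\}$ (your own example) are not forbidden by the hypotheses, and such an $e$ contributes $2$ to $\sigma$ and $0$ to $d(A_i)+d(A_j)$. The lemma holds only in aggregate. The argument in \cite{CC20} uses the minimality of the cuts globally --- exhibiting competitor sets built from Boolean combinations of the $A_m$'s, deriving inequalities of the form $d(A_a)\le d(\text{competitor})$, and combining these with the $p=2$ identity and direct counting so that the total excess in $\sigma$ is absorbed by the slack in those cut inequalities --- rather than attempting to rule out individual hyperedge configurations.
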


The next lemma from \cite{CC20} will help in aggregating the parts of a $\ell$-partition $\mathcal{P}$ where $\ell\ge 2k$ to a $k$-partition $\mathcal{K}$ while controlling the cost of $\mathcal{K}$. 
\begin{lemma}
\label{lemma:aggregating} \cite{CC20}
Let $G=(V,E)$ be a hypergraph, $k\ge 2$ be an integer, and $(Y_1, \ldots, Y_p, W, Z)$ be a partition of $V$ for some integer $p\ge 2k-2$. Then, there exist distinct $i_1, \ldots, i_{k-1}\in [p]$ such that
\[
2\cost\left(Y_{i_1}, \ldots, Y_{i_{k-1}}, V\setminus (\cup_{j=1}^{k-1}Y_{i_j})\right)
\le
\cost(Y_1, \ldots, Y_p, W, Z) + \alpha(Y_1, \ldots, Y_p, W, Z) + \beta(Y_1, \ldots, Y_p, Z).
\]
\end{lemma}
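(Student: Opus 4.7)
The plan is to apply a double-counting / averaging argument over all $(k-1)$-element subsets $I^* \subseteq [p]$. Writing $I^* = \{i_1, \ldots, i_{k-1}\}$, I will show that
\[
\sum_{I^*} \cost\left(Y_{i_1}, \ldots, Y_{i_{k-1}}, V \setminus \cup_{j=1}^{k-1} Y_{i_j}\right) \le \tfrac{1}{2}\binom{p}{k-1}\bigl(\cost(Y_1,\ldots,Y_p,W,Z) + \alpha + \beta\bigr),
\]
which immediately yields, by averaging, some $I^*$ realizing the claimed inequality.

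For each hyperedge $e$, I will introduce three parameters: $d(e) := |\{i \in [p] : e \cap Y_i \neq \emptyset\}|$, $w(e) := \mathbf{1}[e \cap W \neq \emptyset]$, and $z(e) := \mathbf{1}[e \cap Z \neq \emptyset]$. A direct case analysis then computes two quantities. First, the contribution of $e$ to $\cost + \alpha + \beta$: hyperedges with $d + w + z \le 1$ contribute $0$ (they do not cross the original partition); the types $(d,w,z) = (0,1,1)$ and $d = 1, w + z = 1$ contribute exactly $1$ (they are in $\cost$ but miss both $\alpha$ and $\beta$); and every other crossing hyperedge (either $d = 1, w = z = 1$, or $d \ge 2$) contributes exactly $2$. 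Second, the number $n_e$ of subsets $I^*$ for which $e$ crosses the aggregated partition: $n_e = 0$ when $d \le 1$ and $w + z \le 1$ (the edge is always contained in one part of the aggregated partition); $n_e = \binom{p-1}{k-2}$ when $d = 1$ and $w + z \ge 1$ (the edge crosses iff the unique index of $I(e)$ lies in $I^*$); and $n_e = \binom{p}{k-1} - \binom{p-d}{k-1}$ when $d \ge 2$ (the edge crosses iff $I(e) \cap I^* \neq \emptyset$).

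The key step is to verify, in each case, the pointwise inequality $n_e \le \tfrac{1}{2}\binom{p}{k-1} \cdot (\text{contribution of } e)$. It is immediate when the contribution is $0$ (both sides vanish) and when the contribution is $2$ (the RHS is $\binom{p}{k-1} \ge n_e$). The delicate case is $d = 1, w + z = 1$ with contribution $1$, where the inequality reduces to $\binom{p-1}{k-2} \le \tfrac{1}{2}\binom{p}{k-1}$, i.e., $(k-1)/p \le 1/2$---precisely the hypothesis $p \ge 2k-2$. Summing over all hyperedges yields the displayed double-counting inequality and hence the lemma.

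The argument is entirely combinatorial and requires no submodular/uncrossing machinery; the main obstacle is the bookkeeping in the case analysis. Conceptually, the definitions of $\alpha$ and $\beta$ are engineered so that any hyperedge which has a realistic chance of crossing the aggregated partition (those with $d \ge 2$, which cross for many $I^*$, and those with $d=1, w=z=1$) contributes $2$ to the RHS, leaving only the $d=1, w+z=1$ hyperedges---which cross for a $(k-1)/p$ fraction of choices---as the binding constraint that forces the threshold $p \ge 2k-2$.
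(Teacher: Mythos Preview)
The paper does not prove this lemma; it is quoted from \cite{CC20} and used only as a black box inside the proof of Theorem~\ref{theorem:hypergraph-uncrossing}. So there is no in-paper argument to compare against.

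Your averaging argument is correct and is the natural proof. Two cosmetic glitches in the $n_e$ bookkeeping are worth fixing. First, your three clauses overlap at $d=1,\ w+z=1$: the first clause (``$d\le 1$ and $w+z\le 1$'') would give $n_e=0$ there, whereas the correct value is the $\binom{p-1}{k-2}$ stated in your second clause; the first clause should read ``$d=0$, or $d=1$ with $w=z=0$''. Second, the type $(d,w,z)=(0,1,1)$ falls outside all three clauses as written (it has $d=0$ but $w+z=2$); here $e\subseteq W\cup Z$ always lies in the merged part, so $n_e=0$ and the pointwise bound is trivial---but it deserves a word, since its contribution is $1$, not $0$, so it is not handled by your ``contribution $0$'' sentence. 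With these fixes the proof goes through exactly as you outline, and the hypothesis $p\ge 2k-2$ is used precisely at the binding case $d=1,\ w+z=1$ via $\binom{p-1}{k-2}=\frac{k-1}{p}\binom{p}{k-1}\le\frac12\binom{p}{k-1}$.
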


We now restate and prove Theorem \ref{theorem:hypergraph-uncrossing}. \thmHypergraphUncrossing*
\begin{proof}
For the first conclusion of the theorem, we will use the same proof that appeared in \cite{CC20}. We need the details of this proof to prove the second conclusion of the theorem. 

We begin by proving the first conclusion. By applying Lemma \ref{lemma:uncrossing}, we obtain a $(p+2)$-partition $(Y_1, \ldots, Y_p, W, Z)$ such that
\[
\sigma(Y_1, \ldots, Y_p, W, Z) \le \min\{\deltacard(A_i) + \deltacard(A_j): i, j\in [p], i\neq j\}
\]
and moreover, $\complement{U}\subseteq W$, where $Y_i=A_i-W$ for all $i\in [p]$,  $W=\cup_{1\le i<j\le p}(A_i\cap A_j)$, and $Z=\cap_{i\in [p]} \complement{A_i}$. We recall that $p\ge 2k-2$. Hence, by applying Lemma \ref{lemma:aggregating} to the $(p+2)$-partition $(Y_1, \ldots, Y_p, W, Z)$, we obtain a $k$-partition $(P_1, \ldots, P_k)$ of $V$ such that $W\cup Z\subseteq P_k$ and
\begin{align}
\cost(P_1, \ldots, P_k) 
&\le \frac{1}{2}\left(\cost(Y_1, \ldots, Y_p, W, Z) + \alpha(Y_1, \ldots, Y_p, W, Z)+ \beta(Y_1, \ldots, Y_p, Z)\right) \\
&\le \frac{1}{2}\left(\cost(Y_1, \ldots, Y_p, W, Z) + \cost(W,Z) +\alpha(Y_1, \ldots, Y_p, W, Z) + \beta(Y_1, \ldots, Y_p, Z)\right) \label{eq:extra-cost}\\
&= \frac{1}{2}\sigma(Y_1, \ldots, Y_p, W, Z) \\
&\le \frac{1}{2}\min\{\deltacard(A_i) + \deltacard(A_j): i, j\in [p], i\neq j\}.
\end{align}
We note that $\complement{U}$ is strictly contained in $P_k$ since $\complement{U}\cup Z\subseteq W\cup Z\subseteq P_k$ and $Z$ is non-empty.

We now prove the second conclusion of the theorem. If there exists a hyperedge $e\in E$ such that $e$ intersects $W$, $e$ intersects $Z$,  and $e$ is contained in $W\cup Z$, then $\cost(W,Z)>0$. Consequently, inequality (\ref{eq:extra-cost}) in the above sequence of inequalities should be strict. 
\end{proof}

\appendix

\end{document}